\newcommand{\tmop}[1]{\ensuremath{\operatorname{#1}}}
\newtheorem{definition}{Definition}
\newtheorem{proposition}{Proposition}
\newtheorem*{proposition*}{Proposition}
\newtheorem*{corollary*}{Corollary}
\newtheorem{theorem}{Theorem}
\newtheorem{lemma}{Lemma}
\theoremstyle{plain}
\newtheorem{thm}{\protect\theoremname}
\newtheorem*{thm*}{\protect\theoremname}
\newtheorem*{lem*}{\protect\lemmaname}
\theoremstyle{definition}
\theoremstyle{plain}
\newtheorem{cor}[thm]{\protect\corollaryname}
\newcommand{\normal}{{\mathfrak{n}}}
\newcommand{\indicatorf}[1]{\mathbb{I}_{#1}}
\newcommand{\capac}[2]{\ensuremath{\operatorname{cap}}(#1,#2)}
\newcommand{\hausdorffmeasure}{\mathscr{H}(dx)}
\newcommand{\PMeasure}{\mathscr{P}(dx)}
\newcommand{\tPMeasure}{\tilde{\mathscr P}(dx)}
\newcommand{\bb}[1]{\mathcal{B}\left(#1\right)}
\providecommand{\corollaryname}{Corollary}
\providecommand{\definitionname}{Definition}
\providecommand{\lemmaname}{Lemma}
\providecommand{\theoremname}{Theorem}
\newcommand{\dA}{{\dot A}}
\newcommand{\tA}{{\tilde A}}
\newcommand{\dB}{{\dot B}}
\newcommand{\tB}{{\tilde B}}
\newcommand{\capA}{\kappa_A}
\newcommand{\capB}{\kappa_B}
\titlespacing{\section}{0pt}{20pt}{20pt}
\titlespacing{\subsection}{0pt}{*0}{*0}
\titlespacing{\subsubsection}{0pt}{*0}{*0}
\begin{document}

\title{Capacities and the Free Passage of Entropic Barriers} 

\author{Jackson Loper}
\thanks{These two authors contributed equally}
\affiliation{Data Science Institute, Columbia University, New York, NY, USA}

\author{Guangyao Zhou}
\thanks{These two authors contributed equally}

\author{Stuart Geman}

\affiliation{Division of Applied Mathematics, Brown University, Providence, RI, USA}
\date{\today}

\begin{abstract}
	We propose an approach for estimating the probability that a given small target, among many, will be the first to be reached in a molecular dynamics simulation. Reaching small targets out of a vast number of possible configurations constitutes an \emph{entropic barrier}. Experimental evidence suggests that entropic barriers are ubiquitous in biomolecular systems, and often characterize the rate-limiting step of biomolecular processes. Presumably for the same reasons, they often characterize the rate-limiting step in simulations. To the extent that first-passage probabilities can be computed without requiring direct simulation, the process of traversing entropic barriers can replaced by a single choice from the computed (``first-passage'') distribution.
We will show that in the presence of certain entropic barriers, first-passage probabilities are approximately invariant to the initial configuration, provided that it is modestly far away from each of the targets.  We will further show that as a consequence of this 
invariance, the first-passage distribution can be well-approximated in terms of ``capacities" of local sets around the targets.  Using these theoretical results and a Monte Carlo mechanism for
approximating capacities, we provide a method for estimating the  hitting probabilities of small targets in the presence of entropic barriers. In numerical experiments with an
idealized (``golf-course'') potential, the estimates are as accurate as the results of direct simulations, but far faster to compute.
\end{abstract}

\pacs{}

\maketitle 

\section{Introduction}
\label{sec:Introduction}

Molecular dynamics simulations help us understand a diverse range of biomolecular processes, including the folding of macromolecules into their native configurations\cite{Scheraga2007-qw} and the conformational changes involved in their functioning.\cite{Hospital2015-ol} Since their first introduction in the 1970s,\cite{McCammon1977-kg, Warshel1976-qg} substantial increase in speed and accuracy of molecular dynamics simulations has been achieved. However, we are still severely limited in the timescale we can access. Even with specialized hardwares, we can only achieve atomic-level simulations on timescales as long as milliseconds,\cite{Dror2012-ws} while the timescales for various biomolecular processes vary widely, and can last for seconds or longer.\cite{Naganathan2005-ki, Zemora2010-lb} 

Efforts have been made to extend the timescale accessible by molecular dynamics simulations from two distinct perspectives: A \emph{kinetic perspective} seeks to find ways to directly understand the dynamics and simulate them with less computational effort, usually under the framework of kinetic transition networks\cite{Noe2006-cs, Wales2006-ur} or Markov State Models\cite{Pande2010-yi, Chodera2014-bh, Husic2018-xp}. The reaction rates are given by various methods that build on top of transition state theory\cite{Eyring1935-ur, Chandler1978-bq, Wigner1997-kk}, including transition path sampling\cite{Dellago1998-lb, Bolhuis2002-ws}, transition interface sampling\cite{Van_Erp2005-vw}, and transition path theory\cite{E2006-fm, E2010-sr}. By contrast, a \emph{thermodynamic perspective} aims to understand and efficiently sample from the ``equilibrium distribution'' on the configuration space; simulated annealing,\cite{Kirkpatrick1983-su} genetic algorithms\cite{Goldberg1989-ko} and parallel tempering\cite{Sugita1999-vh} are representative examples of this approach. However, the dynamics are lost, and additional work\cite{Yang2007-gn, Andrec2005-fh, Zheng2009-ow, Huang2010-uu} is needed in order to recover the kinetic information. Besides, not everyone agrees that a true equilibrium (as opposed to something metastable) is always actually reached by real physical systems\cite{Levinthal1968-ov, Baker1994-px}.

In this paper we focus on \emph{hitting probabilities}: Given an initial condition $X_0$ and two targets, $A,B$, what is the probability that we will hit target $A$ first? This question lies in-between the two perspectives. Compared with the kinetic perspective, it lacks certain dynamic information: it does not tell us \emph{how long} it takes to hit each target.  Compared with the thermodynamic perspective, it adds dynamic information.
Namely, it informs the study of state-to-state transitions.  It also appears as an important subproblem for other methods, e.g. transition probability estimation in Markov State Models, committor function estimation in transition path theory, et cetera.

We are especially interested in the ``small-target" regime for the hitting probability question---reaching targets representing a tiny fraction of a vast number of possible configurations. The multitude of configurations constitute an \emph{entropic barrier}. A prototypical example (and a convenient metaphor) is a  ``golf-course'' potential.\cite{bicout2000entropic, Baum1986-we, Wille1987-tf}. These potentials feature several small targets in a much larger region. Away from the targets, the potential energy is relatively flat.  Near the targets, the potential energy may be complicated. Regions of the energy landscape with these small-target golf-course potentials are ubiquitous in biomolecular systems. Folding-like dynamics (such as the folding of RNAs and proteins) are one prominent example, where, quoting 
McLeish\cite{McLeish2005-dq}, ``folding rates are controlled by the rate of diffusion of pieces of the open coil in their search for favorable contacts'' and ``the vast majority of the space covered by the energy landscape must actually be flat.'' Indeed, experimental evidence shows that exploration of these regions with golf-course potentials is the rate-limiting step for a variety of processes.\cite{Teschner1987-qs, Jacob1999-bs, Goldberg1999-mv, Plaxco1998-iv}

Direct simulations offer one way to estimate hitting probabilities, but perform poorly in the presence of entropic barriers. 
Significant, if not most, computation is spent traversing large expanses of nearly flat landscape. In general, the computational efficiency of direct simulations will scale inversely with the size of the targets.

Standard acceleration techniques do not always apply in the presence of entropic barriers.  For example, simulated annealing and parallel tempering work poorly in the presence of regions with golf-course potentials.\cite{Baum1986-we, Wille1987-tf, Machta2009-gh} Some authors take a pessimistic view on this subject: ``If these processes are intrinsically slow, i.e. require an extensive sampling of state space'' (which is indeed the case in the presence of entropic barriers), ``not much can be done to speed up their simulation without destroying the dynamics of the system.''\cite{Christen2008-ge}

In this paper, we argue for a new point of view: the long timescales in the presence of entropic barriers can be a blessing rather than a curse.  If the targets are sufficiently small then the system may reach a ``temporary equilibrium'' before entering any of the complex landscapes found in and around them. The exact initial conditions become irrelevant as the hitting probabilities become approximately independent of where the process started.  Moreover, these hitting probabilities may be approximately \emph{invariant} to the energy landscape away from the targets, meaning we can compute the approximately constant hitting probabilities using only local simulations around the targets.

The following idealization will help to illustrate these ideas: The configuration space,
$\Omega$, is the unit ball in $\mathbb{R}^n$
($\Omega = \bb{0,1}$, where $\bb{x, r} \triangleq \{ y : || y - x || < r \}$). The configuration itself, $X=X_t$, is confined to $\Omega$ by a reflecting boundary at $\partial\Omega$, and within $\Omega$ the dynamics are assumed to be well approximated by the first-order (high-viscosity) Langevin equation,
\begin{equation} 
\label{equ:toy_sde}
\mathrm{d} X_t = - \nabla U (X_t) \mathrm{d} t + \mathrm{d} W_t 
\end{equation}
where $W_t$ is an $n$-dimensional Brownian motion and 
$U$ is a potential energy. There are only two targets, 
\begin{align*}
A &\triangleq \bb {x_A, r_A}\\
B &\triangleq \bb {x_B, r_B}
\end{align*}
both completely contained in $\Omega$ and both small, in the sense that $r_A,r_B \ll 1$.
Given an initial condition $X_0\in \Omega \backslash (A\cup B)$, we are interested to know whether the dynamics carry the system into $A$ or $B$ first.  If we let $h_{A,B}(x)$ be the probability that $X_t$ reaches $A$ before reaching $B$ (``first-passage at $A$''), which depends on the starting configuration $X_0=x$, then more specifically we are interested in computing approximations to $h_{A,B}(x)$ that sidestep the entropic barrier in the special case that U is smooth beyond the immediate neighborhoods of $A$ and $B$.

To this end, introduce neighborhoods $\dA=\bb{x_A,r_\dA }$ and $\tA=\bb{x_A,r_\tA }$ such that $r_A<r_\dA<r_\tA$, and
$\dB=\bb{x_B,r_\dB }$ and $\tB=\bb{x_B,r_\tB }$ such that $r_B<r_\dB<r_\tB$, and assume, in the extreme case,  that $\nabla U(x)=0$ for all $x\in\Omega\backslash (\dA\cup\dB)$
(see Figure \ref{fig:ToyModel}). 
Though $U$ may be arbitrarily complicated on $\dA \cup \dB$, we will show that the hitting probabilities $h_{A,B}(x)$ converge uniformly over $x\in \Omega \backslash (\tA\cup \tB)$ to a constant as the ``action regions'' $\dA$ and $\dB$ are made smaller.
Furthermore, the constant itself depends only on the ratio of
two so-called capacities, $\capac{A}{\tA}$ and $\capac{B}{\tB}$, which are well-known integrals, depending only on $U$, over the sets $\tA\backslash A$ and $\tB\backslash B$.
At this point, then, the problem comes down to devising an efficient scheme for estimating capacities. 
In this regard, the probabilistic interpretation of capacities can be helpful. By exploiting both points of view, the explicit integral representation of capacities and their connections to first-passage probabilities, we will provide a general Monte Carlo approach to their estimation. 
Of course there is nothing special about two targets, in any of our results. Even if there are more, the first-passage probabilities are still determined by the ratios of the target capacities. 

\begin{wrapfigure}
[18]{r}[0pt]{0.5\textwidth}
	\centering	\includegraphics[width=0.4\textwidth]{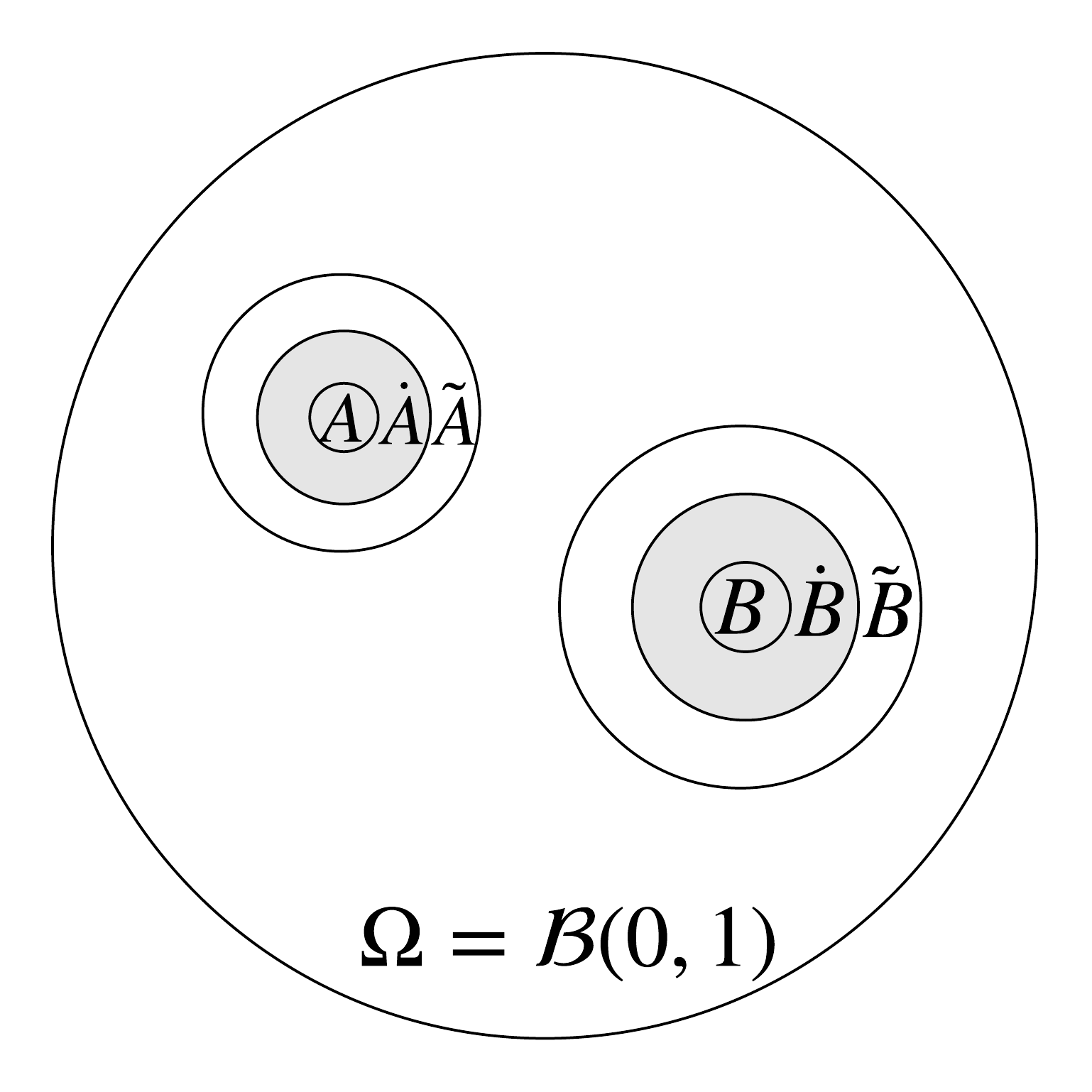}
	\caption{\footnotesize\linespread{1.}\selectfont{} {\bf A Toy Model.} The dynamics obey the first-order Langevin equation (\ref{equ:toy_sde}). There are two targets, $A$ and $B$, each surrounded by two neighborhoods. Outside of $\dA\cup\dB$, the energy is flat ($\nabla U=0$). If the targets are small enough or the dimension large enough then the probability of entering $A$ before entering $B$  is nearly independent of $X_0$, provided that $X_0$ is outside of $\tA\cup\tB$. What is more, this first-passage probability depends only the behavior of $U$ in $\dA\cup\dB$, which may be arbitrarily complicated.}
\label{fig:ToyModel}
\end{wrapfigure}

We experimented with the approximate Langevin equation (\ref{equ:toy_sde}) in $n=5$ dimensions with two targets. 
We found excellent agreement between direct simulations for computing $h_{A,B}(x)$, on the one hand, and an application of our theoretical and algorithmic tools, on the other hand. Furthermore, in that our approach to estimating capacities is based on a small number of sequential steps, each of which involves thousands of repeated and independent applications of a single stochastic procedure, it admits to almost arbitrary acceleration through parallelization.

Our results are relevant in a small-target regime. We remark that such regimes arise naturally in high-dimensional configuration spaces, since the ratio of the volume of the configuration space to the target volume grows rapidly with increasing dimension.
The effect on dynamics can be readily seen by a simple change of variables to spherical coordinates centered, say, at the center of the target $A$. The result, in terms of the radial coordinate $R=||X-x_A||$, is an additional (radial) drift equal to $\frac{n-1}{2R}dt$. Thus in high dimensions, $n$, the entropic effect amounts to a force that pushes configurations away from small targets. 

In sum, there are three contributions in this paper. The first 
is to give a concrete example in which the probability that a particular small
target, among many, is reached first is essentially independent of the starting location, provided that $U$ is flat outside of a neighborhood of the targets and that the starting position is sufficiently far away. The second, also an approximation result, is to show that a consequence of this (near) independence, whether or not it arises from a flat energy, is that the probability of first hitting a particular target is nearly proportional to a suitably defined local capacity.  
The third is a collection of tools, both analytic and numeric (Monte Carlo), for computing the relevant capacities.

In the following section (\S\ref{sec:Preliminaries}, ``Preliminaries'') we will introduce a more general setting as well as the notation needed to set up the two approximation theorems, which appear in \S\ref{sec:MainResults}.
After that, in \S\ref{sec:Estimation}, an approach to capacity estimation is developed, followed, in  \S\ref{sec:Experiments}, by computational experiments with a five-dimensional version of (\ref{equ:toy_sde}). We will close in \S\ref{sec:Discussion} with a summary, and a discussion of the prospects for 
applying these approximations, recursively, to help illuminate folding pathways. There are many challenges. We will attempt to 
highlight the most important of these.

\section{Preliminaries}
\label{sec:Preliminaries}

In the most general setting, our results are about a diffusion process $X_t\in\mathbbm{R}^n$ confined to a bounded open set $\Omega \subset \mathbb{R}^n$ with reflecting smooth boundary $\partial\Omega$: 
\begin{equation}\label{equ:general_sde}\mathrm{d} X_t = b (X_t) \mathrm{d} t + \sigma (X_t) \mathrm{d} W_t \end{equation}
where $W_t$ is an $n$-dimensional standard Brownian motion, and $b: \Omega \rightarrow \mathbbm{R}^n$ and $\sigma :
\Omega \rightarrow \mathbbm{R}^{n \times n}$ are continuously differentiable vector-valued and matrix-valued functions. 

Let $\bar \Omega$ denote the closure of $\Omega$ (and in general let $\bar S$ denote the closure of any set $S\subset \bar \Omega$).  For the precise definition of the reflected process, we adopt the framework developed by Lions and Sznitman\cite{lions1984stochastic}: Let $\normal=\normal(x)$ denote the outward normal of $\partial \Omega$ and $\nu:\ \partial \Omega \rightarrow \mathbb{R}^n$ a smooth vector field satisfying $\normal^T\nu\geq c>0$, and assume that $x_0 \in \Omega$.  Then there is a unique pathwise continuous
and $W$-adapted strong Markov process $X_t\in\bar\Omega$, and (random) measure $L$, such that 
\begin{gather}\label{eq:SDER}
X_t = x_0 + \int_0^t b(X_s)ds + \int_0^t \sigma(X_s)dW_s - \int_0^t \nu(X_s) L(ds)
\end{gather}
and $L(\{t:\ X_t \notin \partial \Omega\})=0$. 
For convenience, we will refer to $X$ by simply saying ``the reflected diffusion process (\ref{equ:general_sde}).'' 

Assume that $U:\bar \Omega \rightarrow \mathbb{R}$ is continuously differentiable. Our interest is in a reversible process with equilibrium  
\[
\rho(dx)\doteq \frac{1}{Z}e^{-U(x)}\ \ \
Z=\int_{x\in\Omega}e^{-U(x)}dx
\]
for which, as shown by Chen\cite{chen1993reflecting}, it is sufficient that 
\begin{align}
\begin{split}
b_i(x)&=\frac{1}{2} \sum_j \partial a_{ij}(x)/\partial x_j - \frac{1}{2}\sum_j a_{ij}(x) \partial U(x)/\partial x_j 
\label{eqn:reversibility} \\
\nu(x)&= a(x) \normal(x) 
\end{split}
\end{align}
where $a(x)=\sigma(x)\sigma(x)^T$ is uniformly elliptic.
When the conditions in (\ref{eqn:reversibility}) are in force
we will say that $X$ satisfies the reversibility conditions relative to $U$.

Given a region $S$ in $\Omega$,  define $\tau_S \triangleq \inf \{ t \geqslant 0 : X_t \in S \}$, i.e. the time when $X$ first hits $S$. Given two targets, $A,B\subset \Omega$, and an initial condition $x\in\Omega\backslash(A\cup B)$, define the $h_{A,B}(x)$ to
be the probability that $X_t$, starting at $x$, visits $A$ before $B$:
\[ h_{A, B}(x) \triangleq \mathbb{P}(X_{\tau_{A\cup B}}\in A|X_0=x)\]
And finally, we will say that a function $f(x)$ defined on a set
${\cal M}$ is 
``$\varepsilon$-flat relative to $M$'' if $M\subset {\cal M}$ and
\[
\sup_{x, y \in M} |f(x) - f(y)| < \varepsilon
\]

\section{Main theoretical results}
\label{sec:MainResults}

The key to going from an assumed smooth landscape outside of the immediate neighborhoods of targets to the conclusion that first-passage probabilities depend only on certain local capacities is the intermediate conclusion that in the small-target (or large-dimension) limit, those probabilities are nearly independent of starting location, i.e. $\varepsilon$-flat relative to all of $\Omega$, except some small neighborhoods around the targets.
Sufficient conditions for this $\varepsilon$-flatness is the purpose of our first result, and the assumption of 
$\varepsilon$-flatness is the main hypothesis of our second result, which then provides an explicit formula for the first-passage probabilities in terms of certain local capacities.

The intent of the first result is two-fold: give a concrete setting in which the $\varepsilon$-flatness of first-passage probabilities can be rigorously proven, and, in the proof, establish a possible road-map for approaching the problem in other, problem-specific, settings. For these purposes, we adopt the setup laid out in \S\ref{sec:Introduction}, Equation (\ref{equ:toy_sde}) (``toy problem'').
The second result is proven under far more general conditions.

\subsection{Approximately Constant First-passage Probabilities}
\label{subsec:ApproximatelyConstant}

Return to the model introduced in Equation (\ref{equ:toy_sde}).
This toy model encapsulates the essential characteristics of an entropic barrier: if $r_A, r_B$ are small or the dimension $n$ is high, starting from an initial condition $X_0 \not\in \bb{x_A,r_\dA} \cup \bb{x_B,r_{\dot B}}$, the system faces the difficulty of reaching the small targets $A$ and $B$ out of all other configurations in $\Omega$. The nontrivial energy landscapes in the immediate vicinity of $A$ and $B$ reflect the energetic interactions that are usually local in nature in biomolecular systems, and the reflecting boundary at $\partial\Omega$ captures the notion that not all configurations in biomolecular systems are sensible, because of, for example, limits on bond lengths, angles and dihedral angles in the case of RNA molecules.

Assume that $U$ is continuously differentiable.  Recall that $\Omega=\bb{0,1}$ and the targets 
$A=\bb{x_A,r_A }$ and $B=\bb{x_B,r_B }$ are surrounded by two levels of neighborhoods, 
$\dA=\bb{x_A,r_\dA }$ and $\tA=\bb{x_A,r_\tA }$ such that $A \subset \dA \subset \tA$, and 
$\dB=\bb{x_B,r_\dB }$ and $\tB=\bb{x_B,r_\tB }$ such that
$B \subset \dB \subset \tB$. Assume that $\tA$ and $\tB$ are disjoint, and, for convenience, that $\tA$ and $\tB$ are contained entirely within $\Omega$.
\begin{theorem}\label{thm:epsilon_flat}
If $\nabla U(x)=0$ on $x\in\Omega\backslash(\tA\cup\tB)$, then
for any fixed value of the dimension $n \geq 3$ and any $r_{\tilde{A}}, r_{\tilde{B}}, \varepsilon > 0$, there exists a constant $c=c(n, r_{\tilde{A}}, r_{\tilde{B}}, \varepsilon)$ such that if $r_{\dA}, r_{\dB} < c$ then 
$h_{A,B}(x)$ is 
$\varepsilon$-flat  relative to 
$\Omega / (\tilde{A} \cup \tilde{B})$.  
Likewise, for any fixed values of $r_{\dA}, r_{\tilde{A}}, r_{\dB}, r_{\tilde{B}}, \varepsilon>0$, there exists a constant $c=c(r_\dA, r_{\tilde{A}}, r_\dB, r_{\tilde{B}}, \varepsilon)$ such that if $n \geq c$ then 
$h_{A,B}(x)$ is
$\varepsilon$-flat relative to 
$\Omega / (\tilde{A} \cup \tilde{B})$. 
\end{theorem}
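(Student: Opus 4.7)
The plan is to use flatness of $U$ outside $\dA\cup\dB$ to reduce the claim to a pure Brownian-motion question, and then to settle that question by a coupling/capacity argument. Because $\nabla U\equiv 0$ on $\Omega\setminus(\dA\cup\dB)$, the process $X$ is just reflected Brownian motion there, and $h_{A,B}$ is continuous on $\bar\Omega$ and satisfies $\tfrac12\Delta h_{A,B}=0$ on $\Omega\setminus(\dA\cup\dB)$ with a Neumann condition on $\partial\Omega$. Applying the strong Markov property at $\tau:=\tau_{\dA\cup\dB}$, for every $x\in\Omega\setminus(\tA\cup\tB)$
\[
h_{A,B}(x) \;=\; \int_{\partial\dA\cup\partial\dB} h_{A,B}(z)\,\nu_x(dz),
\]
where $\nu_x$ is the hitting distribution of $\partial\dA\cup\partial\dB$ for reflected Brownian motion started at $x$. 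Since $h_{A,B}\in[0,1]$, this gives $|h_{A,B}(x)-h_{A,B}(y)|\le \|\nu_x-\nu_y\|_{\mathrm{TV}}$, so it suffices to show that the total-variation oscillation of $\nu_x$ over $x\in\Omega\setminus(\tA\cup\tB)$ is below $\varepsilon$ in each of the two regimes of the statement.

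To bound $\|\nu_x-\nu_y\|_{\mathrm{TV}}$ I would construct a coupling $(X^x_t,X^y_t)$ of two reflected Brownian motions started at $x$ and $y$ with meeting time $T$---for example a mirror coupling away from $\partial\Omega$ patched with synchronous coupling near the boundary, or more robustly a Doeblin-type coupling obtained from a pointwise lower bound on the Neumann heat kernel of $\Omega$. On the event $\{T\le\tau_x\wedge\tau_y\}$ with $\tau_z:=\tau_{\dA\cup\dB}^{X^z}$, the post-meeting trajectories coincide, so $\nu_x$ and $\nu_y$ agree there and
\[
\|\nu_x-\nu_y\|_{\mathrm{TV}}\;\le\; \mathbb{P}(T>\tau_x\wedge\tau_y).
\]
Standard spectral-gap estimates give an exponential tail $\mathbb{P}(T>t)\le C_1 e^{-\lambda_n t}$; choosing $T^\star$ with $\mathbb{P}(T>T^\star)\le\varepsilon/2$ and applying Markov's inequality to $\tau_x\wedge\tau_y$ yields
\[
\|\nu_x-\nu_y\|_{\mathrm{TV}}\;\le\; \frac{\varepsilon}{2}+\frac{2T^\star}{\inf_{z\in\Omega\setminus(\tA\cup\tB)}\mathbb{E}_z[\tau_{\dA\cup\dB}]}.
\]

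The remaining ingredient is a uniform lower bound on $\mathbb{E}_z[\tau_{\dA\cup\dB}]$ that diverges in each regime. For fixed $n\ge 3$ as $r_{\dA},r_{\dB}\to 0$, classical capacity/Green's-function estimates for reflected Brownian motion in the unit ball give $\mathbb{E}_z[\tau_{\dA\cup\dB}] \gtrsim c(n)/(r_{\dA}^{n-2}+r_{\dB}^{n-2})\to\infty$, uniformly in $z$ because the annular gaps $\tA\setminus\dA$ and $\tB\setminus\dB$ have strictly positive width. For fixed radii as $n\to\infty$, the equilibrium mass of a fixed-radius ball inside the unit ball shrinks like $r^n$, and the radial drift $(n-1)/(2R)$ from the change to polar coordinates centered at $x_A$ (or $x_B$) pushes Brownian motion away from any fixed center at a rate growing with $n$; these force $\mathbb{E}_z[\tau_{\dA\cup\dB}]$ to grow at least exponentially in $n$ and therefore to outpace any polynomial-in-$n$ degradation of $\lambda_n$.

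The main obstacle will be the quantitative $n$-dependence in the second regime: one must verify that the coupling rate $\lambda_n$ for reflected Brownian motion on the unit ball does not deteriorate faster than the rate at which the hitting time diverges. A cleaner route that sidesteps the explicit coupling is to treat $\nu_x(E)$, for each fixed Borel $E\subset\partial\dA\cup\partial\dB$, as a bounded harmonic function of $x$ on $\Omega\setminus(\dA\cup\dB)$ and apply an interior/boundary Harnack inequality together with capacity estimates for the reflected Green's function; this directly yields an $x$-oscillation bound for $\nu_x(E)$, uniform in $E$. Either way the argument cleanly separates a dimension-free mixing input from a quantitative smallness-of-target input, mirroring the physical picture stressed in the introduction.
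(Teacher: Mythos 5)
Your overall picture---fast equilibration of the reflected Brownian motion before the slow event of reaching $\dA\cup\dB$---is exactly the separation-of-time-scales idea the paper uses, and your reduction $|h_{A,B}(x)-h_{A,B}(y)|\le\|\nu_x-\nu_y\|_{\mathrm{TV}}$ is sound. But there is a genuine gap at the step where you convert a lower bound on $\inf_z\mathbb{E}_z[\tau_{\dA\cup\dB}]$ into smallness of $\mathbb{P}(\tau_x\wedge\tau_y\le T^\star)$ ``by Markov's inequality.'' Markov's inequality controls the \emph{upper} tail $\mathbb{P}(\tau\ge a)\le\mathbb{E}[\tau]/a$; it gives no bound on the lower tail, and a large mean hitting time is compatible with a substantial probability of hitting very early (take $\tau=0$ with probability $1/2$ and huge otherwise). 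What the argument actually needs is a uniform lower-tail estimate $\sup_{z\notin\tA\cup\tB}\mathbb{P}_z(\tau_{\dA\cup\dB}<t)<\varepsilon/2$, and this is precisely where the paper spends its technical effort: its Lemma on hitting times bounds the exponential moment $g(x)=\mathbb{E}[e^{-\tilde\tau/2}\,|\,X_0=x]$ uniformly, via a strong-Markov/renewal argument across an intermediate sphere $S_{\tilde r_{\dA}}$ and Wendel's explicit Bessel-function formulas for the annulus exit problem, then applies a Chernoff bound $\mathbb{P}(\tilde\tau<t)\le e^{t/2}g(x)$. Your sketch replaces this with Green's-function heuristics for the mean, which is not enough; some argument of the exponential-moment or excursion type is required, in \emph{both} regimes (small $r_{\dA},r_{\dB}$ and large $n$).

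A second, smaller issue is the quantitative $n$-dependence you flag but do not resolve. Your coupling rate $\lambda_n$ (or the constants in the Harnack-inequality alternative, which typically degrade exponentially with dimension) must be compared against the hitting-time growth, and your proposal leaves this open. The paper avoids the issue entirely with a dimension-free mixing bound: for a convex domain of diameter $\xi$, $\sup_{f:\Omega\to[0,1]}|\mathbb{E}[f(M_t)-f(Z)]|\le\xi^2/4t$ (its uniform-ergodicity lemma), which is then combined with the lower-tail hitting estimate at $t=2/\varepsilon$. If you want to salvage your coupling route for the large-$n$ case, you would need either such a dimension-free mixing input or explicit control showing the coupling constants do not outpace the (to-be-proved) lower-tail hitting bound; as written, the proposal asserts rather than establishes both halves of that comparison.
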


The condition $\nabla U=0$ is severe and unrealistic.
However, inspection of the proof, which we defer to 
Appendix \ref{sec:proof_epsilon_flat}, demonstrates that the key for establishing $\varepsilon$-flatness is a proper separation of time scales: it takes a short time for the process $X$ to reach temporary equilibrium in $\Omega / (\tilde{A} \cup \tilde{B})$ and a long time for the process $X$ to hit the targets $A$ and $B$.  Any model with these characteristics (in particular more general entropic barriers) will feature $\varepsilon$-flat hitting probabilities.

\subsection{Hitting Probabilities and Capacities}
\label{subsec:Capacities}

Assume now, more generally, that $X$ is  
the reflected diffusion process (\ref{equ:general_sde}) in a bounded open set $\Omega \subset \mathbb{R}^n$, satisfying the reversibility conditions in (\ref{eqn:reversibility}) relative to a continuously differentiable potential $U$ on $\Omega$.
The concentric spherical neighborhoods of the toy problem are replaced by nested sets 
$A \subset\tA\subset\Omega$, and 
$B \subset\tB\subset\Omega$,
all with smooth boundaries, and where $\tA\cap\tB=\emptyset$.

Under the assumption that $h_{A,B}(x)$ is $\varepsilon$-flat
relative to $\Omega \backslash (\tilde A \cup \tilde B)$, 
we will show that the first-passage probabilities can be accurately estimated using only local information around the targets. In particular, it will be sufficient to calculate the ``capacities'' of 
the sets $\tA \backslash A$ and $\tB \backslash B$:

\begin{definition}(Capacity)
For $S \subset \tilde{S} \subset \Omega$, we define the capacity of $\ensuremath{\operatorname{cap}} (S, \tilde{S})$ relative to $U$ as
\[ \ensuremath{\operatorname{cap}} (S, \tilde{S}) \triangleq \int_{\tilde S \backslash S}
||\sigma(x) \nabla h_{S, \tilde{S}^c}(x)||^2 e^{- U(x)} \mathrm{d} x \]
\end{definition}

We refer the reader to Appendix \ref{sec:three_perspectives} for more details on capacity and the related concept of Dirichlet form.  Note that there are several related definitions of ``capacity" in the literature.  Throughout this work, the term is used only in the sense of the above definition. 

In the presence of $\varepsilon$-flatness, these local capacities are intimately related to global hitting probabilities:

\begin{theorem}\label{thm:main_thm}  
Assume that  $h_{A,B}(x)$ is $\varepsilon$-flat relative to 
$\Omega \backslash (\tilde A \cup \tilde B)$.
Then the first-passage probabilities can be well-approximated in terms of the target capacities:
\[ \sup_{x \notin \tilde A,\tilde B} \left| h_{A,B} (x) - \frac{\capac{A}{\tilde A}}{\capac{A}{\tilde A}+\capac{B}{\tilde B}} \right| \leqslant \varepsilon + \sqrt{\varepsilon/2} \]
\end{theorem}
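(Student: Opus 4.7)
The plan is to build a reference function $g$ out of the two capacitary potentials and to show $h_{A,B}\approx g$ via a Dirichlet-form argument. Let
\[ h^\star \triangleq \capac{A}{\tilde A}/\bigl(\capac{A}{\tilde A}+\capac{B}{\tilde B}\bigr), \qquad g(x) \triangleq h^\star + (1-h^\star)\,h_{A,\tilde A^c}(x) - h^\star\,h_{B,\tilde B^c}(x). \]
By construction $g=1$ on $\partial A$, $g=0$ on $\partial B$, and $g\equiv h^\star$ on $\Omega\setminus(\tilde A\cup\tilde B)$ (because each capacitary potential vanishes outside its own annulus), so the theorem reduces to bounding $|h_{A,B}-g|$ in that region.

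The workhorse is the Dirichlet form $\mathcal{E}(f_1,f_2)\triangleq\int_\Omega(\sigma\nabla f_1)^T(\sigma\nabla f_2)\,e^{-U}\,dx$, which under the reversibility conditions is symmetric and related to $L$ by integration by parts, the reflecting condition $\nu=a\normal$ killing the $\partial\Omega$ boundary term. Since $Lh_{A,B}=0$ off $A\cup B$, I get the Galerkin-type identity $\mathcal{E}(h_{A,B},f)=0$ for every $f$ vanishing on $\partial A\cup\partial B$. Applied to $f=\mathbf 1-h_{A,\tilde A^c}-h_{B,\tilde B^c}$ (which does vanish there), this yields the orthogonality relation
\[ \mathcal{E}(h_{A,B},h_{A,\tilde A^c}) + \mathcal{E}(h_{A,B},h_{B,\tilde B^c}) = 0. \]

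Next I would evaluate each term by integration by parts on the annulus where the corresponding capacitary potential is supported, using its $L$-harmonicity there together with the boundary values $1$ on the inner and $0$ on the outer boundary. This produces
\[ \mathcal{E}(h_{A,B},h_{A,\tilde A^c}) = (1-\bar h_A)\,\capac{A}{\tilde A}, \qquad \mathcal{E}(h_{A,B},h_{B,\tilde B^c}) = -\bar h_B\,\capac{B}{\tilde B}, \]
where $\bar h_A$ and $\bar h_B$ are convex averages of $h_{A,B}$ on $\partial\tilde A$ and $\partial\tilde B$, with weights proportional to the outward co-normal flux of the capacitary potential (nonnegative by Hopf's lemma, integrating respectively to $\capac{A}{\tilde A}$ and $\capac{B}{\tilde B}$ by flux conservation). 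Combining with the orthogonality relation gives the exact identity $(1-\bar h_A)\capac{A}{\tilde A}=\bar h_B\capac{B}{\tilde B}$. Because $\partial\tilde A,\partial\tilde B\subset\Omega\setminus(\tilde A\cup\tilde B)$, the $\varepsilon$-flatness hypothesis forces both $\bar h_A$ and $\bar h_B$ to lie within $\varepsilon$ of any reference constant $c$ representing $h_{A,B}$ outside the targets; substituting and rearranging expresses $c-h^\star$ as a capacity-weighted average of $\bar h_A-c$ and $\bar h_B-c$, so $|c-h^\star|\le\varepsilon$. The pointwise estimate then follows by a triangle inequality with $|h_{A,B}(x)-c|<\varepsilon$.

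The main obstacle lies in the boundary step: the clean convex-average interpretation of $\bar h_A,\bar h_B$ depends on the sign-definiteness of the co-normal flux on $\partial\tilde A,\partial\tilde B$, which relies on the smooth-boundary assumption on the targets and their neighborhoods (to invoke Hopf's lemma). The $\sqrt{\varepsilon/2}$ term in the stated bound suggests an alternative route that avoids a delicate sign argument: bound $\mathcal{E}(h_{A,B}-g,h_{A,\tilde A^c})$ via Cauchy--Schwarz against the Dirichlet-principle identity $\mathcal{E}(h_{A,B}-g,h_{A,B}-g)=\mathcal{E}(g,g)-\mathcal{E}(h_{A,B},h_{A,B})$ (itself controlled by $\varepsilon$ through the orthogonality above), which produces a square-root loss but an unconditional bound.
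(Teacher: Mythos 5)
Your argument is correct, and it takes a genuinely different route from the paper's. The paper works variationally: it bounds $\mathscr{E}(h_{A,B},h_{A,B})$ from above by $\capac{A}{\tilde A}\capac{B}{\tilde B}/(\capac{A}{\tilde A}+\capac{B}{\tilde B})$ using exactly the interpolating family you call $g$ (their $u_c$, optimized over the constant level $c$), bounds it from below by restricting the energy to the two annuli and invoking a lemma on relaxed (inequality) boundary conditions, and then solves the resulting quadratic inequality for the midpoint $m$ of $h_{A,B}$ on $\Omega\backslash(\tilde A\cup\tilde B)$, with a three-way case analysis that produces the $\sqrt{\varepsilon/2}$ loss. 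You instead exploit the weak formulation directly: the orthogonality $\mathscr{E}(h_{A,B},h_{A,\tilde A^c})+\mathscr{E}(h_{A,B},h_{B,\tilde B^c})=0$ plus integration by parts on each annulus gives the exact flux identity $(1-\bar h_A)\capac{A}{\tilde A}=\bar h_B\capac{B}{\tilde B}$, i.e.\ $\capac{A}{\tilde A}/(\capac{A}{\tilde A}+\capac{B}{\tilde B})$ is a convex combination of the boundary averages $\bar h_A,\bar h_B$, and $\varepsilon$-flatness (which applies on $\partial\tilde A,\partial\tilde B$ by continuity) immediately gives $\sup_{x\notin\tilde A,\tilde B}|h_{A,B}(x)-\capac{A}{\tilde A}/(\capac{A}{\tilde A}+\capac{B}{\tilde B})|\le\varepsilon$ --- a \emph{sharper} bound than the stated $\varepsilon+\sqrt{\varepsilon/2}$, so the theorem follows a fortiori, with no case analysis. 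Two small points to tighten: (i) Hopf's lemma is not actually needed for the convex-average interpretation --- on $\partial\tilde A$ the capacitary potential vanishes identically, so (given boundary regularity, which the paper's cited elliptic theory provides) $\nabla h_{A,\tilde A^c}$ is parallel to $\normal$ there and $\normal^Ta\normal>0$ gives nonnegativity of the co-normal flux, while its total mass equals $\capac{A}{\tilde A}$ by flux conservation; only nonnegativity, not strict positivity, is required. (ii) The test function $1-h_{A,\tilde A^c}-h_{B,\tilde B^c}$ is only Lipschitz across $\partial\tilde A$ and $\partial\tilde B$, so the Green-identity step should be justified by splitting the domain along those interfaces (the interface terms cancel since $h_{A,B}$ is smooth there) or by the $\mathcal{H}^1$ weak formulation. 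Your Cauchy--Schwarz fallback is unnecessary once these are noted.
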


We defer the proof to Appendix \ref{sec:proof_thm}, but here we note that, on account of the additive property of 
capacities (Proposition \ref{prop:capacity} in Appendix \ref{sec:three_perspectives}), the generalization to multiple targets is straightforward.
Assume that $A_1\cdots A_m \subset \Omega$ and $\tilde A_1\cdots \tilde A_m \subset \Omega$, such that $A_k \subset \tilde A_k$ for every $k=1,2,\ldots,m$ and $\tA_1,\ldots\tA_m$ are disjoint. Define 
\begin{equation*}
p_{A_k} = \frac{\ensuremath{\operatorname{cap}} (A_k, \tilde{A}_k)}{\sum_{i = 1}^m \ensuremath{\operatorname{cap}} (A_i, \tilde{A}_i)}, k=1,\dots, m
\end{equation*} 
and, for each $k$, let $u_k(x)$ be the probability that $X_t$ hits $A_k$ before any other target, given that 
$X_0=x\in M\doteq \Omega\backslash\bigcup_{k = 1}^n \tilde{A}_k $.
Then, given the $\varepsilon$-flatness of the functions $\{u_k\}_{k=1,\dots, n}$ relative to $M$, each $u_k$ is well-approximated by $p_{A_k}$:
\begin{cor}\label{thm:main_cor} 
If, for each $k=1,2,\ldots,m$, $u_k$ is 
$\varepsilon$-flat relative to $M$, then
\[ \sup_{x \in M} \left| u_k (x) - p_{A_k} \right| \leqslant \varepsilon + \sqrt{\varepsilon/2}, k=1,\dots, m\]
\end{cor}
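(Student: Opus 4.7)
My plan is to reduce Corollary \ref{thm:main_cor} to Theorem \ref{thm:main_thm} by aggregating the ``other'' targets into a single effective target. Fix an index $k \in \{1,\dots,m\}$ and set
\[
B_k \doteq \bigcup_{j \neq k} A_j, \qquad \tilde B_k \doteq \bigcup_{j \neq k} \tilde A_j.
\]
Because the $\tilde A_j$ are pairwise disjoint with smooth boundaries, the sets $B_k \subset \tilde B_k$ inherit smooth boundaries and $\tilde A_k \cap \tilde B_k = \emptyset$, so the nesting and regularity hypotheses required to apply Theorem \ref{thm:main_thm} to the pair $(A_k, B_k)$ are met. The first observation is that the multi-target first-passage function $u_k$ coincides exactly with the two-target function $h_{A_k, B_k}$: hitting $A_k$ before any of $A_1,\dots,A_{k-1},A_{k+1},\dots,A_m$ is the same event as hitting $A_k$ before $B_k$. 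Moreover $M = \Omega \setminus (\tilde A_k \cup \tilde B_k)$, so the hypothesized $\varepsilon$-flatness of $u_k$ relative to $M$ is precisely the $\varepsilon$-flatness hypothesis of Theorem \ref{thm:main_thm} applied to the pair $(A_k, B_k)$.

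Next I would invoke Theorem \ref{thm:main_thm} directly to obtain
\[
\sup_{x \in M} \left| u_k(x) - \frac{\capac{A_k}{\tilde A_k}}{\capac{A_k}{\tilde A_k} + \capac{B_k}{\tilde B_k}} \right| \leq \varepsilon + \sqrt{\varepsilon/2}.
\]
The remaining step is purely algebraic: identify the denominator. Here I would appeal to the additivity property of capacities promised by Proposition \ref{prop:capacity} in Appendix \ref{sec:three_perspectives}. Because $\tilde B_k \setminus B_k$ is the disjoint union of the annular regions $\tilde A_j \setminus A_j$ for $j \neq k$, and because the integrand defining capacity is supported on these annuli, additivity yields
\[
\capac{B_k}{\tilde B_k} = \sum_{j \neq k} \capac{A_j}{\tilde A_j}.
\]
Substituting this into the ratio above gives exactly $p_{A_k} = \capac{A_k}{\tilde A_k} / \sum_{i=1}^m \capac{A_i}{\tilde A_i}$, which completes the argument. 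Since $k$ was arbitrary, the bound holds for every $k$.

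The main thing to verify carefully is that the hypotheses of Theorem \ref{thm:main_thm} genuinely transfer to the aggregated pair $(A_k, B_k)$; this is essentially a bookkeeping check (smooth boundaries, disjointness, and the identification $u_k = h_{A_k, B_k}$). The only substantive appeal beyond Theorem \ref{thm:main_thm} is the additivity of capacities over disjoint annular regions, which is cited from the appendix. No new analytic difficulty arises, and in particular the $\varepsilon + \sqrt{\varepsilon/2}$ constant is inherited unchanged from Theorem \ref{thm:main_thm}, since we are applying it once per index $k$ with the same $\varepsilon$.
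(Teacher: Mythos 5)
Your proposal is correct and is exactly the route the paper intends: the remark preceding the corollary reduces the multi-target case to Theorem \ref{thm:main_thm} by aggregating the other targets and invoking the additivity of capacities (Proposition \ref{prop:capacity}), which iterated over the pairwise disjoint $\tilde A_j$ gives $\capac{B_k}{\tilde B_k}=\sum_{j\neq k}\capac{A_j}{\tilde A_j}$ just as you argue. Your write-up simply makes explicit the bookkeeping (smooth boundaries, disjointness, $u_k=h_{A_k,B_k}$, $M=\Omega\setminus(\tilde A_k\cup\tilde B_k)$) that the paper leaves as ``straightforward.''
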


\section{Capacity Estimation} 
\label{sec:Estimation}
For a class of stochastic systems, characterized by a separation of time scales such that the process of finding targets is slow compared to the process of exploring the regions away from the targets, we have reduced the estimation of first-passage probabilities to the evaluation, or approximation, of capacities.
Generically, given the process defined in Equation (\ref{equ:general_sde}), satisfying the reversibility conditions in (\ref{eqn:reversibility}) relative to a continuously differentiable energy $U$, our goal is to evaluate
\begin{equation}
\label{eqn:capacity}
\ensuremath{\operatorname{cap}} (A, \tilde{A}) = \int_{\tilde A \backslash A}
||\sigma(x) \nabla h_{A, \tilde{A}^c}(x)||^2 e^{- U(x)} \mathrm{d} x 
\end{equation}
for a target $A$ and neighborhood $\tA$.

The calculation is local, in that $\capac{A,\tilde{A}}$ depends only on the behavior of  $U$ on $\tA\backslash A$, but it is not uncomplicated. We will propose here a Monte Carlo approach to evaluating the integral, made up of a combination of analytic reductions and highly orchestrated random walks. Inevitably, the effectiveness, or even feasibility, of the approach will depend on the particulars of the stochastic system, (\ref{equ:general_sde}).

We begin by replacing the volume integral in (\ref{eqn:capacity}) with a surface integral:

\begin{proposition}
\label{prop:flux}
For any regions $G$ and $\tilde{G}$ having smooth boundaries and such that $A\subset G \subset \tilde G \subset \tilde A$, $\capac{A,\tA}$ can be expressed as a flux leaving $\tilde G \backslash G$:
\begin{equation}
\label{eqn:GIntegral}
\ensuremath{\operatorname{cap}} (A, \tilde{A}) = \int_{\partial (\tilde G \backslash G)}  h_{A, \tilde{A}^c} (x)   \normal(x)^T a (x) \nabla h_{G, \tilde{G}^c} (x)e^{- U (x)} \hausdorffmeasure
\end{equation}
where $a(x)=\sigma(x)\sigma(x)^T$ is the diffusion matrix, $\hausdorffmeasure$ is the $(n-1)$-dimensional Hausdorff measure, and $\normal$ represents the outward-facing (relative to the set $\tilde G \backslash G$) normal vector on $\partial (\tilde G \backslash G)$.
\end{proposition}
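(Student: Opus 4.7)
The plan is to recognize $\capac{A}{\tA}$ as the Dirichlet energy of $h \doteq h_{A,\tA^c}$, rewrite this volume integral as a boundary flux that is conserved across any smooth surface separating $A$ from $\partial \tA$, and then use Green's second identity on $\tilde{G} \backslash G$ to trade that flux for one involving $g \doteq h_{G,\tilde{G}^c}$, which is exactly the right-hand side of (\ref{eqn:GIntegral}).

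First I would put the generator of $X$ in divergence form. A direct calculation from the reversibility conditions (\ref{eqn:reversibility}) gives $L f = \tfrac{1}{2} e^{U} \nabla \cdot (e^{-U} a \nabla f)$, so $h$ satisfies $\nabla \cdot (e^{-U} a \nabla h) = 0$ on $\tA \backslash A$ with $h = 1$ on $\partial A$ and $h = 0$ on $\partial \tA$, and similarly $g$ satisfies $\nabla \cdot (e^{-U} a \nabla g) = 0$ on $\tilde{G} \backslash G$ with $g = 1$ on $\partial G$ and $g = 0$ on $\partial \tilde{G}$. Standard elliptic regularity for uniformly elliptic operators on smooth domains provides the $C^2$ regularity up to the boundaries that justifies the integrations by parts to follow.

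Next, using the pointwise identity $\nabla h^T a \nabla h \, e^{-U} = \nabla \cdot (h \, a \nabla h \, e^{-U}) - h \nabla \cdot (a \nabla h \, e^{-U})$ and integrating over $\tA \backslash A$, the second term vanishes by harmonicity of $h$ and the first becomes a boundary integral; inserting $h \equiv 1$ on $\partial A$ and $h \equiv 0$ on $\partial \tA$ yields
\[
\capac{A}{\tA} = -\int_{\partial A} \normal_A^T a \nabla h \, e^{-U} \hausdorffmeasure,
\]
where $\normal_A$ is the outward normal of $A$. Since $\nabla \cdot (e^{-U} a \nabla h) = 0$ also on $G \backslash A$, the divergence theorem applied there shows that this flux is unchanged if $\partial A$ is replaced by $\partial G$. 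Green's second identity applied to $h$ and $g$ on $\tilde{G} \backslash G$, combined with the harmonicity of both on that annulus, gives
\[
\int_{\partial(\tilde{G} \backslash G)} h \, \normal^T a \nabla g \, e^{-U} \hausdorffmeasure \;=\; \int_{\partial(\tilde{G} \backslash G)} g \, \normal^T a \nabla h \, e^{-U} \hausdorffmeasure,
\]
with $\normal$ outward from $\tilde{G} \backslash G$. Plugging in $g \equiv 1$ on $\partial G$ and $g \equiv 0$ on $\partial \tilde{G}$ collapses the right-hand side to the flux of $h$ through $\partial G$, and accounting for the sign flip between the outward normal of $\tilde{G} \backslash G$ and the outward normal of $G$ on $\partial G$ identifies this with the expression for $\capac{A}{\tA}$ obtained in the previous step.

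The only real obstacle is careful bookkeeping of the outward normals on the four boundary components $\partial A$, $\partial G$, $\partial \tilde{G}$, $\partial \tA$; everything else reduces to two applications of the divergence theorem, exploiting the harmonicity of $h$ on $\tA \backslash A$ and of $g$ on $\tilde{G} \backslash G$. The regularity needed to take traces of $a \nabla h$ and $a \nabla g$ on the smooth boundaries is routine under the continuously differentiable $U$, continuously differentiable $\sigma$, uniform ellipticity of $a$, and smoothness of the boundaries, and I would simply cite a standard elliptic-regularity reference rather than reproducing it.
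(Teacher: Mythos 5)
Your proposal is correct and follows essentially the same route as the paper: both rewrite the generator in divergence form via the reversibility conditions, obtain $\capac{A}{\tA}$ as a flux through $\partial A$, transfer it to $\partial G$ by the divergence theorem on $G\backslash A$, and then exchange the roles of $h_{A,\tA^c}$ and $h_{G,\tilde G^c}$ on $\tilde G\backslash G$. The only cosmetic difference is that the paper performs this last exchange by two applications of its Green's-first-identity lemma, passing through the mixed Dirichlet form $\int_{\tilde G\backslash G}\nabla h_{G,\tilde G^c}^T a\,\nabla h_{A,\tA^c}\,e^{-U}dx$, which is precisely your Green's-second-identity step.
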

\noindent The proof is in Appendix \ref{sec:proof_proposition}.

There is a great deal of freedom in choosing $G$ and $\tilde G$; the idea is to choose them so as to make the surface integrals as simple as possible. Before pursuing this, we mention that there are many other ways to reduce the volume integral (\ref{eqn:capacity}) to a flux integral, some of which might make more sense than (\ref{eqn:GIntegral}) for a particular problem. Specifically, by a corollary of 
Proposition (\ref{prop:flux}), $\capac{A}{\tilde A}$ can be written as the flux of a different field, but this time through a single surface (see Appendix \ref{sec:proof_proposition}):
\begin{cor}
\label{cor:flux}
For any region $S$ having smooth boundary $\partial S$, and such that $A\subset S \subset \tilde A$, $\capac{A,\tA}$ can be expressed as a flux leaving $S$:
\begin{equation}
\ensuremath{\operatorname{cap}} (A, \tilde{A}) = \int_{\partial S}   \normal(x)^T a (x) \nabla h_{A, \tilde{A}^c} (x)e^{- U (x)} \hausdorffmeasure
\end{equation}
where $a$ and $\hausdorffmeasure$ are as defined in the Proposition, and  $\normal$ is the outward-facing normal on $\partial S$.
\end{cor}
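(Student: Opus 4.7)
The plan is to derive the single-surface flux formula from Proposition \ref{prop:flux} by leveraging (i) the Dirichlet boundary conditions of $h_{A,\tilde A^c}$ on $\partial A$ and $\partial\tilde A$, and (ii) the fact that under reversibility the vector field $e^{-U}a\nabla h_{A,\tilde A^c}$ is divergence-free throughout $\tilde A\setminus A$, so that its flux through any hypersurface separating $A$ from $\partial\tilde A$ is independent of the surface.

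First I would apply Proposition \ref{prop:flux} with $G=A$ and $\tilde G=\tilde A$ (using a limiting argument $G\downarrow A$, $\tilde G\uparrow\tilde A$ if the proposition is formulated with strict inclusions). The boundary $\partial(\tilde A\setminus A)$ decomposes into $\partial A\cup\partial\tilde A$; since $h_{A,\tilde A^c}$ equals $1$ on $\partial A$ and $0$ on $\partial\tilde A$, only the $\partial A$ piece survives, producing
\[\capac{A}{\tilde A}=\int_{\partial A}\normal(x)^{T}a(x)\nabla h_{A,\tilde A^{c}}(x)\,e^{-U(x)}\hausdorffmeasure,\]
with $\normal$ the outward normal of $\tilde A\setminus A$ on $\partial A$, i.e.\ the inward normal of $A$.

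Next I would exploit the reversibility conditions (\ref{eqn:reversibility}) to rewrite the generator in divergence form, $Lf=\tfrac{1}{2}e^{U}\nabla\cdot(e^{-U}a\nabla f)$. Since $h_{A,\tilde A^{c}}$ is $L$-harmonic on $\tilde A\setminus A$, the vector field $e^{-U}a\nabla h_{A,\tilde A^{c}}$ has vanishing divergence there. The divergence theorem applied to the annular region $S\setminus A\subset\tilde A\setminus A$ then yields
\[\int_{\partial S}\normal(x)^{T}a(x)\nabla h_{A,\tilde A^{c}}(x)\,e^{-U(x)}\hausdorffmeasure\;=\;\int_{\partial A}\normal(x)^{T}a(x)\nabla h_{A,\tilde A^{c}}(x)\,e^{-U(x)}\hausdorffmeasure,\]
provided $\normal$ is chosen consistently to point ``toward $A$'' on both surfaces (that is, outward with respect to $\tilde A\setminus S$ on $\partial S$ and outward with respect to $\tilde A\setminus A$ on $\partial A$). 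Combining the two displayed identities gives the corollary.

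The main obstacle, to my eye, will not be any deep analysis but the careful bookkeeping of the normal's orientation between the two-surface form in Proposition \ref{prop:flux} and the single-surface form here: both the sign and the value of the integrand hinge on interpreting ``outward-facing normal on $\partial S$'' as outward relative to $\tilde A\setminus S$ rather than outward of $S$ itself. A secondary, routine technicality is providing enough regularity of $h_{A,\tilde A^{c}}$ up to the boundary to legitimize the limit in the first step and the divergence-theorem application in the second; this follows from the smoothness of $\partial A$, $\partial S$, $\partial\tilde A$ and the $C^{1}$ assumption on $U$ via standard elliptic regularity for the self-adjoint operator $L$.
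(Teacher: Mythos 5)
Your argument is correct and is essentially the paper's own proof: the paper disposes of the corollary by putting $G=S$ into the intermediate identity (\ref{eqn:cor}) established in the proof of Proposition \ref{prop:flux}, and that identity is obtained from exactly your two steps---Green's identity (Lemma \ref{lem:greenident}) expressing $\capac{A}{\tA}$ as the flux of the divergence-free field $e^{-U}a\nabla h_{A,\tilde A^c}$ through $\partial A$ (the $\partial \tilde A$ contribution being killed by the boundary values), followed by the vanishing of the net flux over $\partial(S\setminus A)$ to transfer that flux to $\partial S$. Your orientation caveat is also the right reading of the statement: with the normal taken literally outward of $S$ the integral equals $-\capac{A}{\tA}$ (for instance, for concentric spheres with $U=0$ in the setting of \S\ref{sec:Experiments} it gives the negative of Eq.~(\ref{eqn:analytic_capacities})), so the normal must indeed be understood as pointing toward $A$, i.e.\ outward of $\tilde A\setminus S$, just as you note.
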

\noindent
The possible advantage is that there is only one surface and the integrand involves only one first-passage probability function, $h_{A, \tilde{A}^c} (x)$, instead of two. The possible disadvantage is the need to estimate $\nabla h_{A, \tilde{A}^c}$ on $S$, which is harder than estimating $h_{A, \tilde{A}^c}$. As we will see shortly, judicious choices for $G$ and $\tilde G$ can mitigate, and in some cases even eliminate, the need to estimate gradients of first-passage probabilities.

Returning to the representation in (\ref{eqn:GIntegral}), there are two surface integrals, each of which can be viewed as an expectation, as follows: Define a probability measure on
$\partial G$ by
\[
\PMeasure\doteq\frac{1}{Z}e^{-U(x)}\hausdorffmeasure 
\text{   where   }
Z= \int_{\partial G} e^{-U(x)}\hausdorffmeasure
\]
and define $\tPMeasure$
and $\tilde{Z}$ analogously, but on $\partial\tilde{G}$ rather than $\partial G$.
Then
\begin{align}
\capac{A}{\tilde A} & = \int_{\partial\tilde{G}} h_{A, \tilde{A}^c}    \normal^T a  \nabla h_{G, \tilde{G}^c} e^{- U } \hausdorffmeasure
-\int_{\partial G} h_{A, \tilde{A}^c}    \normal^T a  \nabla h_{G, \tilde{G}^c} e^{- U } \hausdorffmeasure
\nonumber \\
&=Z
\int_{\partial\tilde{G}} h_{A, \tilde{A}^c}    \normal^T a  \nabla h_{G, \tilde{G}^c} \tPMeasure
-Z
\int_{\partial G} h_{A, \tilde{A}^c}    \normal^T a  
\nabla h_{G,\tilde{G}^c} \PMeasure
\label{eqn:PMeasureInt}
\end{align}
where, in these integrals, the normal, $\normal$, points outward from {\em both} $G$ and $\tilde G$.
If now $y_1,y_2,\dots,y_m\sim \text{iid}\ \PMeasure$, then 
\begin{align*}
\frac{1}{m}\sum_{i=1}^m 
h_{A, \tilde{A}^c}(y_i) \normal^T(y_i) a(y_i)  \nabla h_{G,\tilde{G}^c}(y_i) & \stackrel{m\to\infty}{\longrightarrow}
 \int_{\partial G} h_{A, \tilde{A}^c}    \normal^T a  
\nabla h_{G,\tilde{G}^c} \PMeasure \\
\text{and}\ \ \ \ \ \ \frac{1}{m}\sum_{i=1}^m e^{U(y_i)}
& \stackrel{m\to\infty}{\longrightarrow} \int_{\partial G} e^U
\PMeasure = \frac{|\partial G|}{Z}
\end{align*}
where $|\partial G|$ is the surface area of $G$. Putting these together, we get the large $n$ approximation
\[
Z
\int_{\partial G} h_{A, \tilde{A}^c}    \normal^T a  
\nabla h_{G,\tilde{G}^c} \PMeasure
\approx
|\partial G|
\frac{\sum_{i=1}^m 
h_{A, \tilde{A}^c}(y_i) \normal^T(y_i) a(y_i)  \nabla h_{G,\tilde{G}^c}(y_i)}
{\sum_{i=1}^m e^{U(y_i)}}
\]
If we now extend all of this to $\partial\tilde{G}$, with 
$\tilde{y}_1,\tilde{y}_2,\dots,\tilde{y}_n\sim \text{iid}\ \tPMeasure$, and put the approximations into 
(\ref{eqn:PMeasureInt}), then for large $n$ and $m$
\begin{align}
\capac{A}{\tilde A} & \approx
|\partial\tilde{G}|
\frac{\sum_{i=1}^n 
h_{A, \tilde{A}^c}(\tilde{y}_i) \normal^T(\tilde{y}_i) a(\tilde{y}_i)  \nabla h_{G,\tilde{G}^c}(\tilde{y}_i)}
{\sum_{i=1}^n e^{U(\tilde{y}_i)}}
\label{eqn:approximate_capacity}\\
& -|\partial G|
\frac{\sum_{i=1}^m 
h_{A, \tilde{A}^c}(y_i) \normal^T(y_i) a(y_i)  \nabla h_{G,\tilde{G}^c}(y_i)}
{\sum_{i=1}^m e^{U(y_i)}}
\nonumber
\end{align}

To make this useful, we will need to choose $G$ and $\tilde{G}$ so that (i) we can readily sample from $\PMeasure$ and $\tPMeasure$; (ii) the surface areas $|\partial G|$ and
$|\partial\tilde{G}|$ can be well approximated; (iii) 
the first-passage probability $h_{A, \tilde{A}^c}$ can be well approximated on  $G$ and $\tilde{G}$; and (iv) the gradient
$\nabla h_{G, \tilde{G}^c}$ can also be well approximated on 
 $G$ and $\tilde{G}$. 
The first two of these challenges lend themselves to more-or-less routine, though not necessarily easy methods, including importance and rejection sampling. Of course we're free to choose $G$ and $\tilde G$ to make (i) and (ii) as easy as possible.

As for approximating first-passage probabilities and their gradients, broadly speaking there are two approaches. It is well known that first-passage probabilities satisfy an elliptic PDE related to the infinitesimal generator---see Appendix \ref{sec:three_perspectives}, Equation (\ref{eq:pde})---and we could therefore choose from a selection of numerical solvers. One drawback with this approach is that numerical PDE methods are famously difficult to employ successfully in high dimensions (``curse of dimensionality''). Here, in a different direction, we exploit the connection between first-passage probabilities and the underlying random walk in order to develop Monte Carlo tools suitable for estimating both $h_{A, \tilde{A}^c}$ and $\nabla h_{G, \tilde{G}^c}$ on the surfaces
$\partial G$ and $\partial\tilde{G}$. These tools are based on
what we will call the ``shell method,'' which we describe briefly in the following paragraphs and in full detail in Appendix \ref{sec:shell_method}.

Generically, given two simply-connected regions $R$ and $\tilde{R}$, with $R\subset \tilde{R}$, and a set $S$ such that 
$R\subset S  \subset\tilde{R}$, we seek an approximation to 
the function $h_{R,\tilde{R}^c}$ on the surface $\partial S$. In principle, we could begin with a fine-grained partitioning of $\partial S$ into simply-connected ``cells,'' and for each cell run the diffusion $X_t$ many times, recording whether or not the path first exits $\partial (\tilde{R}\backslash R)$ at $\partial R$. The fraction of paths that first exit at $\partial R$ constitutes an estimate of 
$h_{R,\tilde{R}^c}(x)$ for any $x$ in the current cell. But this is wasteful and likely infeasible in all but the simplest of settings. Much of the waste stems from the fact that the ensemble of all paths generated from all cells will likely include many near collisions of paths scattered throughout $\partial (\tilde{R}\backslash R)$. An alternative, divide-and-conquer  approach, is to introduce multiple sets, $S_0,S_1,\ldots,S_n$ such that 
\begin{equation*}
R=S_0 \subset \cdots \subset S_{m-1}\subset S_m = S \subset S_{m+1} \subset \cdots \subset S_n = \tilde{R}
\end{equation*}
and use sample paths from $X$, {\em locally}, to estimate the transition probability matrices from each cell within each ``shell'' $\partial S_k$ to each cell of its neighboring shells, $\partial S_{k-1}$ and $\partial S_{k+1}$. Equipped with these transition matrices, the first-passage probability for a given $x\in S$ is computed algebraically, without further approximation.  

$S$ must have been chosen not only to satisfy $R\subset S  \subset\tilde{R}$ but also in such a way as to make it feasible to sample from $\partial S$ under the probability measure
$\frac{1}{Z}e^{-U}\hausdorffmeasure$. After that, $S_k\ k=1,\ldots,n-1$ are chosen so that the shells nest and are in close proximity; the hitting times starting from a sample in $\partial
 S_k$ and ending at $\partial S_{k-1} \cup \partial S_{k+1}$ must be short enough to encourage many repeated runs. The output is a set of samples, 
 $z_1,\ldots,z_N \sim \frac{1}{Z}e^{-U}\hausdorffmeasure$ on $\partial S$ together with the approximate value of $h_{R,\tilde{R}^c}(x)$ at each sample $x=z_i$. (In fact, though the main purpose is to estimate $h_{R,\tilde{R}^c}$ on 
 $\partial S$, a byproduct is a sample from $\frac{1}{Z}e^{-U}\hausdorffmeasure$ on all of the shells $\partial S_k$, along with an estimate of $h_{R,\tilde{R}^c}$ at every sample.)
With the choice of $A$ for $R$ and $\tilde A$ for $\tilde R$, the algorithm becomes directly applicable to the estimation of $h_{A, \tilde{A}^c}$ on $\partial G$ and $\partial\tilde{G}$, taking $S=G$ in the former case and $S=\tilde{G}$ in the latter.
 
The shell method is closely related to milestoning\cite{West2007-cn, Bello-Rivas2015-ld, Aristoff2016-gc} and Markov state models\cite{Pande2010-yi, Chodera2014-bh, Husic2018-xp}, though more tailored to the problem at hand. In particular, our interest here is in computing the first-passage probabilities rather than in approximating the underlying process. Also, the discretizations of the shells are {\em adaptive}, in that they are based on clusters that are derived from an ensemble of samples, as opposed to being crafted for a particular landscape.  See Appendix \ref{sec:shell_method}.
 
As for the required gradients, these are generally harder to estimate. Nevertheless, for the particular gradient $\nabla h_{G, \tilde{G}^c}$, the problem is substantially mitigated by noting that we are only interested in its evaluation on $\partial G$ and $\partial\tilde G$, each of which is a level set of 
$h_{G, \tilde{G}^c}$ ($h_{G, \tilde{G}^c}=1$ on $G$ and 0 on $\tilde G$). Consequently, on each surface the gradient is in the normal direction and we need only estimate its magnitude. And for this purpose it is enough to know the values of $h_{G, \tilde{G}^c}$ on a surface close to $G$ and interior to $\tilde{G}\backslash G$ (for estimating $\nabla h_{G, \tilde{G}^c}$ on $G$) and on another surface  
close to $\tilde{G}$ and also interior to $\tilde{G}\backslash G$ (for estimating $\nabla h_{G, \tilde{G}^c}$ on $\tilde G$). Two such surfaces would be $\partial S_1$ and $\partial S_{n-1}$, were we to apply the shell method with $R=G$ and $\tilde{R}=\tilde{G}$,
since, as already noted, a byproduct of the method is an estimate of 
$h_{R,\tilde{R}^c}$ on all of the shells. Alternatively, in the interest of better accuracy, the method could be run twice, once with $S=S_1$, a well-chosen outer approximation of $G$, and then again with 
$S=S_{n-1}$, a well-chosen inner approximation of $\tilde G$. 


 \section{Numerical Experiments\footnote{All the experimental results can be reproduced or easily modified from open-source code, which can found, along with detailed instructions, at \url{https://github.com/StannisZhou/entropic_barrier}.}}
\label{sec:Experiments}
We experimented with the two-target system discussed in \S\ref{sec:Introduction} and depicted in Figure \ref{fig:ToyModel}, with $n=5$ dimensions 
and the particular targets $A=\bb{x_A, r_A}$,  where $x_A=(0.5,0.6,0.0,0,0,0.0)$ and $r_A=0.02$, and 
$B=\bb{x_B, r_B}$,  where $x_B=(-0.7,0.0,0.0,0,0,0.0)$ and $r_B=0.04$. The configuration space is the unit ball centered at 
the origin, $\Omega=\bb{0,1}$. 
The goal is to estimate $h_{A,B}(x)$, the probability that $A$ is visited before $B$, using only the behavior of $U$ in the vicinity of the targets, provided that $x$, the starting configuration, is sufficiently far from $A\cup B$. The entropic barrier is idealized by assuming that $\nabla U(x)=0$ outside of $\dA=\bb{x_\dA, r_\dA}$ and $\dB=\bb{x_\dB, r_\dB}$ and ``sufficiently far away'' means outside of $\tA \cup \tB$, where $\tA=\bb{x_\tA, r_\tA}$ and $\tB=\bb{x_\tB, r_\tB}$. The concentric neighborhoods around the targets are supposed to satisfy $A\subset\dA\subset\tA$ and 
$B\subset\dB\subset\tB$, which was enforced in our experiments by the choices $r_\dA= 0.05$, $r_\tA=0.1$, $r_\dB=0.075$, and
$r_\tB=0.15$.
Our experiments test the overall approximation to $h_{A,B}$ developed in \S\ref{sec:MainResults}-\ref{sec:Estimation} as well as each of the three components, separately: $\epsilon$-flatness, the role of capacities, and the methodology developed for estimating capacities.

The experimental setup is sufficiently simple to allow
exhaustive simulation for approximating ground truth. In each experiment, we compare the results of using the approximations developed here to the results from an ensemble of first-passage events simulated by simply running the diffusion 2,000 times at each of 100 randomly chosen points in $\Omega\backslash (\tA,\tB)$. There are two sets of experiments: In the first (a kind of ``sanity check''), the potential $U$ is flat everywhere outside of the targets $A$ and $B$, in other words, the diffusion is Brownian motion. In the second, there are complex landscapes in the vicinities of the targets, i.e. within $\dA$ and $\dB$.

Both Theorems, \ref{thm:epsilon_flat} and \ref{thm:main_thm}, are in force, and hence the first-passage probability $h_{A,B}(x)$, on $\Omega\backslash(\tA,\tB)$, is approximately a constant, and the value depends only on the two local capacities $\capac{A}{\tA}$ and
$\capac{B}{\tB}$:
\[
 h_{A,B} (x) \approx \frac{\capac{A}{\tilde A}}{\capac{A}{\tilde A}+\capac{B}{\tilde B}}
 \]
We need to compute, or approximate, $\capac{A}{\tA}$ and $\capac{B}{\tB}$. Will will work through the details for $\capac{A}{\tA}$, but the identical considerations apply to 
$\capac{B}{\tilde B}$.  We start with the flux representation
established in the Proposition, and the numerical approximation from Equation (\ref{eqn:approximate_capacity}), which reduces the problem to selecting $G$ and $\tilde G$ and then applying the shell method, as described in \S\ref{sec:Estimation}.
In the current setup, 
good choices for $G$ and $\tilde G$ are $G=\dA$ and 
$\tilde G = \tA$, as can be seen from the following observations:
\begin{enumerate}

\item Recall that $h_{A,\tA^c}(x)$ is the probability of first exiting $\tilde{A}\backslash A$ at $\partial A$ rather than at $\partial\tA$, given that the process started at $x$. 
Consequently $h_{A,\tA^c}(x)=0$ for all $x\in\partial\tA$, and hence also on $\partial\tilde{G}$.
Hence, with reference to Equation (\ref{eqn:approximate_capacity}), we need only consider the flux approximation on $\partial G$:
\begin{equation}
\label{eqn:toy_cap_approx}
\capac{A}{\tilde A} \approx
-|\partial G|
\frac{\sum_{i=1}^m 
h_{A, \tilde{A}^c}(y_i) \normal(y_i)\cdot \nabla h_{G,\tilde{G}^c}(y_i)}
{\sum_{i=1}^m e^{U(y_i)}}
\end{equation}
where $y_1,\ldots,y_m$ are independent samples from 
$\PMeasure=\frac{1}{Z}e^{-U}\hausdorffmeasure$ on $\partial G$, $\normal(x)$ faces outward from $G$, and compared to (\ref{eqn:approximate_capacity}), we have used the fact that in the current setup $a(x)$ is the identity $I$.
 
 \item The surface area $|\partial G|$ is just the area of the 4-sphere (in five dimensions), with radius $r_\dA$:
 \[
 |\partial G| = \frac{2\pi^{\frac{5}{2}}}{\Gamma(\frac{5}{2})}r_\dA^4
 \]
 
 \item 
Furthermore, since 
$\nabla U(x)=0$ on $\Omega\backslash(\dA\cup\dB)$ and the diffusion is unchanged by a constant shift of the potential, we can assume that $U(x)=0$ on $\Omega\backslash(\dA\cup\dB)$. 
Since $G=\dA$ and since $U$ is continuous, $U(x)=0$ for all $x\in \partial G$. Hence $\sum_{i=1}^m e^{U(y_i)}=m$.
 
\item 
\label{lab:analytic}
Since $U$ is flat on $\tilde G \backslash G$, the first-passage function $h_{G,\tilde G^c}$ is that of a standard Brownian motion between two concentric spheres. The PDE in Equation (\ref{eq:pde}) of Appendix \ref{sec:three_perspectives} reduces to an instance of Laplace's equation, with analytic solution\cite{Wendel1980-sj}
 \[ h_{G, \tilde{G}^c} (x) = \frac{1}{r_\dA^{-3} - r_{\tilde{A}}^{ -3}} \| x
- x_A \|^{-3} - \frac{r_{\tilde{A}}^{-3}}{r_\dA^{-3} -
r_{\tilde{A}}^{-3}} \]
from which the gradient is found to be 
\[
\nabla h_{G, \tilde{G}^c} (x)  = \frac{-3}{r_\dA^{-3} - r_{\tilde{A}}^{-3}} \| x - x_A \|^{-4} \frac{x - x_A}{\| x - x_A \|} 
= \frac{-3}{r_\dA^{-3} - r_{\tilde{A}}^{-3}} \| x - x_A \|^{-4} \frac{x - x_A}{\| x - x_A \|} 
\]
And since $\| x - x_A \|=r_\dA$ and  $\frac{x - x_A}{\| x - x_A \|} 
=\normal(x)$,
\[
\normal(x)\cdot\nabla h_{G, \tilde{G}^c} (x)  = 
\frac{-3}
{r_\dA^4(r_\dA^{-3} - r_\tA^{-3})} 
\]

\item  It remains to choose a sample $y_1,\ldots,y_m$ from $\PMeasure$ on $G$, and estimates of the accompanying values of $h_{A,\tA^c}(y_i),\ i=1,\ldots,m$. Here, the shell method (Appendix \ref{sec:shell_method}) can be used, with $R=A$, $\tilde{R}=\tA$, and $S=G$, resulting in the desired samples $y_1,\ldots,y_m$ and estimates of $h_{A,\tA^c}(y_1),
\ldots,h_{A,\tA^c}(y_m)$,  say $u_1,\ldots,u_m$.

\end{enumerate}

Putting together the pieces, the approximation in (\ref{eqn:toy_cap_approx}) becomes
\begin{equation}
\label{eqn:approx_capacities}
\capac{A}{\tilde A} \approx
\frac{6\pi^{\frac{5}{2}}}
{\Gamma(\frac{5}{2})(r_\dA^{-3} - r_\tA^{-3})} 
\frac{1}{m}\sum_{i=1}^m u_i
\end{equation}
The approximation we used for $\capac{B}{\tB}$ is the same, but with the substitutions $A\to B$ and $\tA\to\tB$.

\subsection{Brownian Diffusion}
\label{sec:B_D}

The first set of experiments test the approach in the simplest possible case: there is no gradient in $U$ anywhere outside of the targets. Even though the resulting diffusion is just a standard Brownian motion, there are few symmetries in the configuration space and hence still no closed-form solution for the first-passage probabilities. There is however a large entropic barrier and an opportunity to dissect and test separately each component of the the overall approximation. 

\subsubsection{Are first-passage probabilities approximately constant outside of $\bm{\tA\cup\tB}$?}
\label{sec:toy_constant}
Is it true that in regions with golf-course potentials, the hitting probability is approximately constant over initial conditions that are modestly far away from the targets? In other words, is $h_{A,B}$  $\varepsilon$-flat on $\Omega\backslash(\tA\cup\tB)$?  Theorem \ref{thm:epsilon_flat} shows that this must hold in the limiting regime of small $r_A,r_B$ or large $n$, but it is not obvious whether the parameters of the current model lie in this regime.  

We ran 2,000 diffusion simulations at each of 100 randomly selected initial conditions in the region $\mathcal{B}(0, 1) \setminus (\tilde{A} \cup \tilde{B})$, yielding 100 well-estimated probabilities of hitting $A$ before $B$. A histogram of these probabilities can be be found in panel (a) of Figure \ref{fig:results}. The observations fall in a fairly narrow window, $[0.0820, 0.1185]$, suggesting that $h$ is indeed $\varepsilon$-flat, with $\varepsilon \approx 0.0365$. In addition, the empirical distribution is quite peaked.

\subsubsection{Are first-passage probabilities proportional to capacity?}
\label{sec:toy_capacity}
According to Theorem \ref{thm:main_thm},
in light of the $\varepsilon$-flatness of 
$h_{A,B}$  with respect to $\Omega\backslash(\tA\cup\tB)$, 
$h_{A,B}(x)$ should be well approximated by 
\begin{equation}
\label{eqn:capacity_ratio}
p_A \doteq \frac{\capac{A}{\tA}}{\capac{A}{\tA}+\capac{B}{\tB}}
\end{equation}
for $x\in 
\mathcal{B}(0, 1) \setminus (\tilde{A} \cup \tilde{B})$. 
In the current experiment, with $\nabla U=0$ on 
$\tA\backslash A$ and $\tB\backslash B$, the capacities can be computed analytically, e.g. from Corollary \ref{cor:flux}, 
with $S=A$ and $S=B$:
\begin{equation}
\label{eqn:analytic_capacities}
\capac{A}{\tA}  =
\frac{6\pi^{\frac{5}{2}}}
{\Gamma(\frac{5}{2})(r_A^{-3} - r_\tA^{-3})}  \qquad
\capac{B}{\tB}  =
\frac{6\pi^{\frac{5}{2}}}
{\Gamma(\frac{5}{2})(r_B^{-3} - r_\tB^{-3})}
\end{equation}
and hence 
\begin{equation*}
p_A = \frac{\frac{1}{r_A^{-3} - r_{\tilde{A}}^{-3}}}{\frac{1}{r_A^{-3} - r_{\tilde{A}}^{-3}} + \frac{1}{r_B^{-3} - r_{\tilde{B}}^{-3}}}
\approx 0.1100
\end{equation*}
This probability is within $2\%$ of the average found by direct simulations, which was $0.0975$. 

\subsubsection{Accuracy of the shell method.}
\label{sec:toy_shell}
Lastly, we check the accuracy of the capacity estimation algorithm that we call the shell method. Here, the exact values are available from the formulas in (\ref{eqn:analytic_capacities}):
$\capac{A}{\tA}=0.000637$ and $\capac{B}{\tB}=0.005151$. 
The approximate values come from two applications of (\ref{eqn:approx_capacities}), which produced $0.000591$ and $0.004827$, respectively.\footnote{With reference to Appendix \ref{sec:shell_method}, the following parameters were used to implement the shell method:
$m = 2, n = 4, N_p = 100, N_b = 3, N_s = 1000$, and a time-step of 
$10^{-7}$.}

Had we used the estimated capacities instead of the actual capacities for computing  $p_A$ from Equation
(\ref{eqn:capacity_ratio}) the approximation of the first-passage probability, given $x\in\mathcal{B}(0, 1) \setminus (\tilde{A} \cup \tilde{B})$, would have been 0.1091 instead of 0.1100.

In summary, all three components of the approach performed well in the Brownian motion test case.

\begin{figure}
\fbox{\begin{minipage}{\textwidth}
    \includegraphics[width=\textwidth]{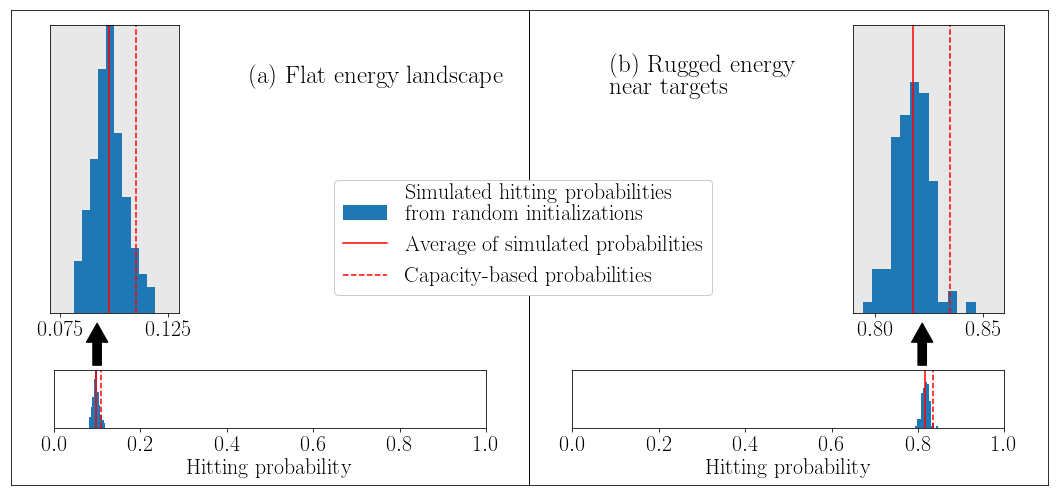}
    \caption{\label{fig:results} {\bf Hitting probabilities: direct simulations vs. evaluation of local capacities.}  Local capacities  can accurately answer the question ``where will we go next?''  We start by designating two targets, $A$ and $B$.  For any given initial condition, the task is to determine the probability that the diffusion will hit $A$ first. We study this question in the context of two different energy landscapes. In panel (a), we study the simplest possible case, where we have a flat energy landscape and the diffusion is simply a Brownian motion.  In panel (b), we consider the case that the energy landscape becomes complicated near the targets.  In both cases we consider 100 random initial locations which are modestly far  away from either target.  For each initial location, we conduct 2000 simulations to estimate the probability of hitting target $A$ first.  For both energy landscapes we see that the hitting probability is approximately constant for initial conditions which are modestly far away from the targets.  Furthermore, the value of this constant is very nearly proportional to the ratio of the local capacities. In summary, in these simulations first-passage probabilities are accurately estimated using only local simulations around the targets, and do not require any computations that involve the large space around $A$ and $B$.}
\end{minipage}}
\end{figure}

\subsection{Nontrivial Landscape}\label{sec:nontrivial_results}
We performed the same tests, but with a complex energy landscape in the neighborhoods of the targets, i.e. on $\dA\backslash A$ and $\dB\backslash B$. The details of the specification can be found in Appendix \ref{sec:energy_function}.

\subsubsection{Are first-passage probabilities approximately constant outside of $\bm{\tA\cup\tB}$?}
Following the same procedure used in \S\ref{sec:toy_constant}
we tested for the near-constancy of $h_{A,B}$ on 
$\Omega\backslash(\tA\cup\tB)$. The results are illustrated in Figure \ref{fig:results}, panel (b).  For each of the 100 initial conditions, the probability of first-passage at $A$ fell in the interval $[0.7985, 0.8460]$, consistent with the conclusions of Theorem \ref{thm:epsilon_flat}.

\subsubsection{Are first-passage probabilities proportional to capacity?}
In the previous experiments we were able to compute the capacities in closed form, and use $p_A$, from Equation (\ref{eqn:capacity_ratio}), to directly verify the conclusion of Theorem \ref{thm:main_thm}. This is not possible in the current experiments. We are forced instead to use the estimated capacities, from which we obtained the estimate 
$p_A\approx 0.8360$, which is within 2\% of $0.8175$,\footnote{Using the following parameters (Appendix \ref{sec:shell_method}): $m = 2$, $N_p = 3000$, $N_b = 5$, $N_s = 1000$, with time-step $10^{-6}$.  For target $A$ we used $n = 4$ and for the larger target $B$ we used $n = 5$.} the average found by direct simulations.

Notable in these results is the fact that the larger target, $B$, is substantially less likely than $A$ to be visited first. Evidently, the energy in $\dB\backslash B$, the region surrounding $B$, introduces a significant barrier to the diffusion process, at least in comparison to the energy surrounding $A$. Keep in mind that the process here is identical to the one in the previous experiments, i.e. Brownian motion, for so long as the process remains outside of 
$\dA\cup\dB$, and that in those experiments the first-passage occurred at $B$ approximately ten times more often than at $A$.
Evidently, the process is much more likely to exit $\dB\backslash B$ at $\partial\dB$ than at $\partial B$, at least in comparison to the dynamics in $\dA\backslash A$. These observations serve to further illustrate the role of local capacities and the importance of  their accurate approximation.

\subsubsection{Accuracy of the shell method.}

As already remarked, the nontrivial energy landscape precludes a direct assessment of the accuracy of the capacity estimation algorithm.  It is not possible to obtain exact values for the capacity.  Therefore, we cannot directly test whether our algorithm is accurately estimating the capacities.  However, the good agreement between the estimated value of $p_A$ and the results of straightforward simulation constitute indirect evidence supporting the approximations, and the accuracy of the shell method in particular.

Concerning computational efficiency, it is difficult to make a direct comparison between the capacity-estimation approach and straightforward simulation. There are many parameters and, besides, run times will depend on the dimension and details of the energies, possibly affecting the two approaches differently. In our experiments, for direct simulations we used the ``walk-on-spheres'' method to simulate trajectories in the flat region,\cite{bingham1972random} JIT compilation to remove loop overhead, multi-CPU parallelization, and the coarsest time step that yielded accurate results. As for capacity estimation, we made no effort to adjust the number of samples or the discretization parameters. Under these conditions, a single run of the direct simulation took about as long as estimating the two capacities.

Assume that  $h_{A,B}(x)$ is $\varepsilon$-flat relative to $\Omega \backslash (\tilde A \cup \tilde B)$. Both methods suffer if $\varepsilon$ is large: direct simulation because it will depend on the initial condition, which is unknowable in any realistic experiment, and capacity estimation because the capacity ratio has a built-in error (Theorem \ref{thm:main_thm}) that depends on $\varepsilon$. If we assume that $\varepsilon$ is negligible, then for a fixed $x\in\Omega \backslash (\tilde A \cup \tilde B)$ we can more-or-less directly compare the standard deviations of the capacity-based estimation of $p\doteq h_{A,B}(x)$ via the shell method to direct simulation via repeated-samples. For example, under the specific (albeit idealized) circumstances of the experiments in \S\ref{sec:B_D}, about how many direct-simulation samples would be needed to attain a confidence interval for $p$ of comparable width as from a single sample using capacity estimation by the shell method? The binomial estimator from $n$ direct samples has standard deviation $\sqrt\frac{p(1-p)}{n}$, which we can compare to the empirical standard deviation, $\hat{\sigma}\approx 0.006$, from 100 runs of the shell method estimating $p$. Recall from \S\ref{sec:B_D} that $p\approx 0.1100$, in which case approximately $n=2,700$ runs of direct simulation would be needed to get a comparable confidence interval. Bear in mind that a single run of the shell method requires about as much time as  a single run of direct diffusion. It would appear, then, that capacity-based estimation of first-passage probabilities can be several orders-of-magnitude faster than direct simulation.

\section{Discussion} 
\label{sec:Discussion}

Entropic barriers, which arise because of the difficulty of reaching a small number of target configurations out of a vast number of possible configurations, are in contrast to enthalpic barriers, which arise because of the difficulty of escaping local minima. The challenges of molecular dynamics result from the complicated energy landscapes associated with different biomolecular systems, and can largely be summarized by these two different kinds of barriers. Most of the existing methodology focuses on enthalpic barriers, where the central picture consists of an energy landscape with a multitude of local minima separated by high energetic barriers. A less studied picture includes entropic barriers---large 
flat regions of the energy landscape. Accurate simulations of  dynamical systems involving large single molecules, or multitudes of molecules, will typically require a better understanding of both kinds of barrier.

Sometimes, entropic barriers can be circumvented. We have provided conditions under which direct simulation of a diffusion across a nearly flat landscape, as in a ``golf-course type potential,'' can be 
replaced by the computation of the ``local capacity'' in the neighborhood of each target. The approach is appropriate when the first-passage probabilities, rather than first-passage times, are the objects of interest. Specifically, we give conditions 
under which the first-passage probabilities are approximately invariant to the initial configuration, as long as that configuration is moderately removed from each target region. In turn, a consequence of this invariance is that the hitting probabilities can be approximated using only local computations, or simulations, around the targets.  Numerical experiments on a prototypical entropic barrier, with a golf-course potential, demonstrate the validity of these results and the effectiveness of the approximations.

To what extent can these results contribute to the understanding of the folding of a large biopolymer? As a specific example, consider the folding of an RNA molecule, starting from either a denatured state or an intermediate state characterized by a non-native secondary structure. That is, some of the existing stems are not part of the native structure or additional stems will eventually appear. Assume that stem formation and destruction are sufficiently rapid to be considered as immediate events in the time-scale of folding. In the former case because stems, once seeded, are completed 
rapidly\cite{Porschke1977-xz}, and in the latter case because large-deviations, once they occur, occur rapidly. 
Given the current state, what happens next? 

Put aside, for the time being, the possibility of an existing stem unfolding, and assume that there are enough internal degrees of freedom that are sufficiently unconstrained so as to constitute an entropic barrier to the search for seeding a new stem. The available stems are easy to delineate, and each one can be considered a target, defined in terms of its own reaction coordinates. As shown here, target capacities can be estimated and used to construct a distribution over the ensemble of available stems. A choice from this distribution amounts to a shortcut around the entropic barrier. The geometry is certainly complicated, with boundaries defined by physical limits on bond configurations, including allowed dihedral angles, bond separations, and so-on. But movements that are within these constraints and at the same time distant from targets will likely be largely free of substantial gradients. In such cases the analyses presented here could be useful for predicting the next stem formation. 
Depending on the native structure, at some point in its folding trajectory the molecule may reach a state in which the secondary structure is sufficiently rich and constraining that the dynamics would face
little if any meaningful entropic barriers. In this regime, efforts to recapitulate the pathway would necessarily rely on direct simulation or other types of approximations, e.g. one of the variants of transition-path sampling.

More generally, these considerations will apply to the extent that a folding pathway can be viewed as 
a discrete-state random walk, from one secondary structure to a neighboring secondary structure (cf. Zhao et al.\cite{Zhao2010-zipping}),
and to the extent that each addition involves traversing large flat regions of configuration space. Obviously, there are many challenges, but perhaps chief among them is the loss of information about first-passage times. How are we to know whether a new stem will be seeded before an old stem unravels? More generally, what are the relative probabilities of transitions to neighboring states when some of these are defined by the unraveling of substructures? The answer may require a resolution, or marriage, of two approaches, one designed for enthalpic barriers, such as the barrier preventing a stem from unraveling, and the other designed for entropic barriers, such as the barrier in the way of the self intersection needed to seed a new stem. The former naturally addresses rates and time scales, but these are mostly unavailable in the approach to entropic barriers conceived here.

\newpage
\appendix
\noindent {\bf APPENDIX}

                                                         
\section{Three Perspectives on Hitting Probabilities}
\label{sec:three_perspectives}

Many of our results are based on the fact that hitting probabilities can actually be seen from three distinct perspectives.  Let $A,B\subset \Omega$, disjoint, open with smooth boundary.  

\begin{enumerate}
\item Hitting probabilities.  Let $\tau_S \triangleq \inf\{t:\ X_t \in S\}$ for any set $S$ and $h_{A,B}(x) \triangleq \mathbb{P}(X_{\tau_{A\cup B}}\in \partial A|X_0=x)$.
    
\item Elliptic equation.  Let $h^\mathrm{dir}_{A,B}(x)$ denote the solution to the partial differential equation:
	    \begin{align}\label{eq:pde}
    0 &= \sum_{i = 1}^n b_i (x) \frac{\partial u
        (x)}{\partial x_i} + \frac{1}{2} \sum_{i = 1}^n \sum_{j = 1}^n a_{ij} (x)
        \frac{\partial^2 u (x)}{\partial x_i \partial x_j}\quad x\notin \bar A,\bar B\\
    1 &= u(x),x\in  A \nonumber \\ 
    0 &= u(x),x\in  B \nonumber \\
    0 &= \normal(x)^Ta(x)\nabla u(x), x \in \partial{\Omega} 
    \nonumber
    \end{align}
This solution is unique and smooth.\cite{lieberman1986mixed}  What's more, it is equal to the hitting probability function: $h^\mathrm{dir}_{A,B}(x)=h_{A,B}(x)$ (cf.\ Section 6.7 of Chen\cite{chen2012symmetric}).
\item Variational form.  For any open set $S\subset \Omega$ let $\mathscr{E}_{S}(f,g)\triangleq \int_S \nabla f(x)^T a(x) \nabla g(x) \rho(dx)$ denote the ``Dirichlet Form'' of $f,g$ on the domain $S$.  Let $\mathscr L^2(S,\rho)$ denote the Hilbert space of functions on $S$ which are square-integrable with respect to $\rho$.  Let $\mathcal{H}^1(S,\rho)=W^{1,2}(S) \subset \mathscr{L}^2(S)$ denote the corresponding once-weakly-differentiable Hilbert Sobolev space.  We define $h^\mathrm{var}_{A,B}(x)$ as the solution to 
    \begin{align*}
    \min_{u \in \mathcal H^1(S)} \quad & \mathscr{E}_S(u,u) \\
    \mbox{subject to} \quad & u(x)=1,x\in \partial A \\
     & u(x)=0,x\in \partial B
    \end{align*}
    where $S=\Omega \backslash (A\cup B)$.  This solution is unique and equal to $h^\mathrm{dir}_{A,B}$ on $S$ (cf.\ Section 4 of Dret\cite{dret2016partial}).  
This variational perspective leads us to the notion of the ``condenser capacity'' associated with $h_{A,B}$.  It is defined as 
    \[
    \capac{A}{\Omega \backslash B} \triangleq \mathscr{E}_S(h_{A,B},h_{A,B})
    \]
    where again $S=\Omega \backslash (A\cup B)=(\Omega \backslash B) \backslash A$.  
\end{enumerate}
We will use all three of these perspectives to show our results.  For example, consider how the hitting probability perspective helps us show a result about capacities:
\begin{proposition}\label{prop:capacity}
Let $A\subset \tilde A,B\subset \tilde B$ with $\tilde A,\tilde B$ disjoint.  Then 
\[
\capac{A\cup B}{\tilde A \cup \tilde B}=\capac{A}{\tilde A}+\capac{B}{\tilde B}
\]
\end{proposition}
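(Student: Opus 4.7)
The plan is to reduce the left-hand side to the right-hand side by (i) decomposing the domain of integration into the two disjoint ``annuli'' and (ii) identifying the hitting-probability function on each annulus with the corresponding local hitting-probability function.

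First I would observe that, because $\tilde A$ and $\tilde B$ are disjoint open sets and $A\subset \tilde A$, $B\subset \tilde B$, we have the disjoint decomposition
\begin{equation*}
(\tilde A \cup \tilde B)\setminus (A\cup B) \;=\; (\tilde A\setminus A)\;\sqcup\;(\tilde B\setminus B).
\end{equation*}
Thus the volume integral defining $\capac{A\cup B}{\tilde A\cup \tilde B}$ splits as a sum of two integrals, one over $\tilde A\setminus A$ and one over $\tilde B\setminus B$.

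The main step (and the crux of the argument) is to show that on $\tilde A\setminus A$ one has the pointwise identity
\begin{equation*}
h_{A\cup B,\,(\tilde A\cup \tilde B)^c}(x)\;=\;h_{A,\,\tilde A^c}(x),
\end{equation*}
and symmetrically on $\tilde B\setminus B$. I would prove this using the probabilistic perspective from Appendix A, item 1. Fix $x\in \tilde A\setminus A$ and consider the diffusion $X_t$ starting at $x$. Since the open sets $\tilde A$ and $\tilde B$ are disjoint, any continuous path from $x$ to $\tilde B$ must cross $\partial\tilde A$ and hence enter $\tilde A^c$; but every point of $\tilde A^c$ near $\partial\tilde A$ lies in $(\tilde A\cup\tilde B)^c$. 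Consequently, before the process can reach $B$ (or even $\tilde B$), it must first visit $(\tilde A\cup\tilde B)^c$. Therefore the first-passage event into $A\cup B$ before $(\tilde A\cup\tilde B)^c$ reduces to the first-passage event into $A$ before $\tilde A^c$, yielding the claimed identity. (The analogous argument on $\tilde B\setminus B$ is identical.)

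Combining the two steps: on $\tilde A\setminus A$ the gradients $\nabla h_{A\cup B,(\tilde A\cup \tilde B)^c}$ and $\nabla h_{A,\tilde A^c}$ coincide, so the first piece of the split integral is exactly $\capac{A}{\tilde A}$; likewise the second piece equals $\capac{B}{\tilde B}$. Adding gives the desired identity. The only delicate point is the topological/probabilistic separation argument in the middle step; the rest is bookkeeping. One could alternatively carry out the same argument from the variational or PDE perspective (item 3 or item 2 of Appendix A) by noting that $h_{A\cup B,(\tilde A\cup \tilde B)^c}$ restricted to $\tilde A\setminus A$ solves the same elliptic boundary-value problem as $h_{A,\tilde A^c}$ and invoking uniqueness, but the probabilistic route is the cleanest.
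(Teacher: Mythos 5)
Your proof is correct and follows essentially the same route as the paper's: both decompose the integration domain $(\tilde A\cup \tilde B)\setminus(A\cup B)$ into the two disjoint annuli and identify $h_{A\cup B,(\tilde A\cup\tilde B)^c}$ with the local hitting functions via the same path-continuity/disjointness argument (the paper phrases it as an equality of stopping times, $\tau_{\partial\tilde A\cup\partial\tilde B\cup\partial A\cup\partial B}=\tau_{\partial\tilde A\cup\partial A}$ for $X_0\in\tilde A$, which is your observation that a path leaving $\tilde A$ must hit $\partial\tilde A\subset(\tilde A\cup\tilde B)^c$ before it can reach $\tilde B$). No substantive difference.
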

\begin{proof}
Since $\tilde A,\tilde B$ are disjoint and $X$ is continuous, the process cannot cross from one to the other without hitting the boundary.  Thus we have $\tau_{\partial \tilde A\cup \partial \tilde B \cup \partial A \cup \partial B}=\tau_{\partial \tilde A \cup \partial A}$ as long as $X_0\in\tilde A$.  We get a symmetric result if $X_0\in \tilde B$.  It follows that
\[
h_{A\cup B,(\tilde A\cup\tilde B)^c}(x) = 
    \begin{cases}
    h_{A,\tilde A^c}(x) & \mbox{if }x\in \tilde A\\
    h_{B,\tilde B^c}(x) & \mbox{if }x\in \tilde B\\
    \end{cases}
\]
We can now use this probabilistic perspective to help us understand the capacity by articulating it as the Dirichlet form on the relevant hitting probability functions
\begin{align*}
\capac{A\cup B}{\tilde A \cup \tilde B} 
        &= \int_{\tilde A\cup \tilde B \backslash (A\cup B)} \Vert \sigma \nabla h_{A\cup B,(\tilde A \cup \tilde B)^c}\Vert^2\rho(dx) \\
        &= \int_{\tilde A \backslash A} \Vert \sigma \nabla h_{A,\tilde A^c}\Vert^2\rho(dx)
            +\int_{\tilde B \backslash B} \Vert \sigma \nabla h_{B,\tilde B^c}\Vert^2 \rho(dx) \\
        &= \capac{A}{\tilde A}+\capac{B}{\tilde B}
\end{align*}
as desired.
\end{proof}


\section{Proof of Theorem \ref{thm:epsilon_flat}}
\label{sec:proof_epsilon_flat}

Consider the setup of the ``Toy Model'' as described in the Introduction.  A stationary reversible diffusion $X$ is trapped inside the unit $n$-dimensional ball.  We are interested to know which target $X$ will hit first: $A=\bb{x_A,r_A}$ or $B=\bb{x_B,r_B}$.  The function $h_{A,B}(x)$ indicates the probability we will hit $A$ first if $X_0=x$.  We will be particularly interested in the case where $X_0$ is outside of $\tilde A=\bb{x_A,r_{\tilde A}},\tilde B=\bb{x_A,r_{\tilde B}}$.  Also recall that in the toy example the diffusion behaves as a Brownian motion outside of $\dot A=\bb{x_A,r_{\dot A}},\dot B=\bb{x_B,r_{\dot B}}$.  

It turns out that by taking $n$ to be sufficiently high or $r_{\dot A},r_{\dot B}$ to be sufficiently small, we can make the hitting probabilities arbitrarily close to constant in the region away from $\tilde A,\tilde B$.  This is the content of Theorem \ref{thm:epsilon_flat} from the main text, which we restate here for the convenience of the reader:

\begingroup
\def\thetheorem{\ref{thm:epsilon_flat}}
\begin{theorem}  
If $\nabla U(x)=0$ on $x\in\Omega\backslash(\tA\cup\tB)$, then
for any fixed value of the dimension $n \geq 3$ and any $r_{\tilde{A}}, r_{\tilde{B}}, \varepsilon > 0$, there exists a constant $c=c(n, r_{\tilde{A}}, r_{\tilde{B}}, \varepsilon)$ such that if $r_{\dA}, r_{\dB} < c$ then 
$h_{A,B}(x)$ is 
$\varepsilon$-flat  relative to 
$\Omega / (\tilde{A} \cup \tilde{B})$.  
Likewise, for any fixed values of $r_{\dA}, r_{\tilde{A}}, r_{\dB}, r_{\tilde{B}}, \varepsilon>0$, there exists a constant $c=c(r_\dA, r_{\tilde{A}}, r_\dB, r_{\tilde{B}}, \varepsilon)$ such that if $n \geq c$ then 
$h_{A,B}(x)$ is
$\varepsilon$-flat relative to 
$\Omega / (\tilde{A} \cup \tilde{B})$. 
\end{theorem}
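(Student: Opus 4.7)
The plan is to condition on the first hitting time $\sigma \doteq \tau_{\dA \cup \dB}$ of the inner enlargements of the targets and reduce the $\varepsilon$-flatness of $h_{A,B}$ on $M \doteq \Omega \setminus (\tA \cup \tB)$ to a statement that the law of $X_\sigma$ is nearly independent of the starting point $x \in M$. Up to time $\sigma$ the process never enters $\dA \cup \dB$, and on its complement $\nabla U \equiv 0$ by assumption, so prior to $\sigma$ the process $X$ is simply reflected Brownian motion on $\Omega$. By the strong Markov property,
\begin{equation*}
h_{A,B}(x) = \int_{\partial \dA \cup \partial \dB} h_{A,B}(z)\, d\mu_x(z),\qquad \mu_x \doteq \operatorname{Law}_{\mathbb{P}_x}(X_\sigma),
\end{equation*}
and since $0 \leq h_{A,B} \leq 1$,
\begin{equation*}
|h_{A,B}(x) - h_{A,B}(y)| \leq d_{TV}(\mu_x, \mu_y).
\end{equation*}
Thus everything reduces to bounding $d_{TV}(\mu_x, \mu_y)$ uniformly in $x, y \in M$ in each asymptotic regime.

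For that bound, the plan is a separation-of-timescales argument. First I would fix a ``mixing time'' $T$, depending only on $n$ and $\Omega$, such that for the free reflected Brownian motion on $\Omega$ the law of $X_T$ started from any $x \in \Omega$ is within TV distance $\varepsilon/4$ of the uniform measure $\pi$. This can be obtained from the positivity of the first nonzero Neumann eigenvalue of the Laplacian on the unit ball together with short-time heat-kernel estimates that smooth a point mass into an $L^2$ density, and crucially $T$ does not depend on $r_{\dA}, r_{\dB}$. Second, I would show that $\mathbb{P}_x(\sigma \leq T) < \varepsilon/4$ uniformly in $x \in M$. For fixed $n \geq 3$ and shrinking radii, this uses the classical fact that the capacity of a ball of radius $r$ in $n$-space scales like $r^{n-2}$, so the hitting rate of $\dA \cup \dB$ vanishes. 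For fixed radii and $n \to \infty$, the hitting rate is controlled by the exponentially vanishing volume ratio $\operatorname{vol}(\dA \cup \dB)/\operatorname{vol}(\Omega)$.

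Combining these two ingredients is then routine: for any measurable $E \subset \partial\dA \cup \partial\dB$,
\begin{equation*}
\mu_x(E) = \mathbb{P}_x(\sigma \leq T,\ X_\sigma \in E) + \mathbb{E}_x\bigl[\mathbf{1}\{\sigma > T\}\, \mu_{X_T}(E)\bigr].
\end{equation*}
The first summand has total mass at most $\mathbb{P}_x(\sigma \leq T) < \varepsilon/4$. On $\{\sigma > T\}$ the law of $X_T$ differs from $\pi$ by at most $\varepsilon/4 + \mathbb{P}_x(\sigma \leq T) \leq \varepsilon/2$ in TV, so the second summand differs from the $x$-independent measure $\int \mu_z\, d\pi(z)$ by at most $\varepsilon/2$. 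The triangle inequality then yields $d_{TV}(\mu_x, \mu_y) < \varepsilon$ uniformly in $x, y \in M$, finishing the proof.

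The main obstacle I expect is the uniform quantitative control of $\mathbb{P}_x(\sigma \leq T)$ as $x$ approaches $\partial \tA$ or $\partial \tB$, i.e.\ sits very close to the inner balls. Here the buffer annulus $\tA \setminus \dA$ (and its $B$-counterpart) plays an essential role: the process must first exit this annulus, and the probability that the first exit occurs at $\partial \dA$ rather than $\partial \tA$ is governed by an exit problem for concentric balls in $\mathbb{R}^n$, which has an explicit closed form via the radial Laplace equation and exhibits the desired decay in both regimes. Correctly balancing this buffer estimate against the choice of $T$ is the quantitatively delicate step, but nothing beyond standard potential theory for reflected Brownian motion should be required.
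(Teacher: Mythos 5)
Your overall architecture is the same as the paper's: (i) away from $\dA\cup\dB$ the process is reflected Brownian motion, which equilibrates at a rate independent of the target sizes; (ii) the hitting time of $\dA\cup\dB$ from any $x\notin\tA\cup\tB$ is long with high probability in either asymptotic regime; (iii) combine the two at an intermediate time. Your combination step (total variation between the hitting distributions $\mu_x$ on $\partial\dA\cup\partial\dB$) is a cosmetic variant of the paper's, which applies Dynkin's formula to $h_{A,B}$ and bounds $|h_{A,B}(x)-\mathbb{E}[h_{A,B}(Z)]|\le \mathbb{P}_x(\tilde\tau\le t)+1/t$ using a dimension-free uniform-ergodicity bound $\xi^2/4t$ (Lemma \ref{lem:uniform_ergodicity}); that part of your plan is sound, though if you let the mixing time $T$ depend on $n$ via a Neumann spectral gap you must later check that the hitting estimate beats $T(n)$, a bookkeeping issue the paper avoids by using a dimension-free mixing bound.

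The genuine gap is ingredient (ii), which is the heart of the paper's proof (Lemma \ref{lem:longtime}) and which you only gesture at. Two of your stated justifications do not suffice as written. First, the time-free annulus exit probability (probability of reaching $\partial\dA$ before $\partial\tA$) does not bound $\mathbb{P}_x(\sigma\le T)$: within time $T$ the process can make many crossing attempts of the buffer annulus, so the per-attempt probability must be coupled with a control on the number of attempts, or, as the paper does, one works directly with the Laplace transform $g(x)=\mathbb{E}[e^{-\frac12\tilde\tau_A}\,|\,X_0=x]$, introduces an intermediate sphere $S_{\tilde r_{\dot A}}$ with $\tilde r_{\dot A}\in(r_{\dA},r_{\tA})$, iterates the strong Markov property to get $\sup_{S_{\tilde r_{\dot A}}}g\le L$, and then drives $L\to 0$ in both regimes using Wendel's explicit modified-Bessel-function formulas for the time-dependent annulus exit problem; those asymptotics (including a nontrivial lower bound on $I_h(a)/I_h(b)-(a/b)^h$, Lemma \ref{lem:bessel}) constitute most of the work. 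Second, your high-dimensional claim that the hitting rate "is controlled by the exponentially vanishing volume ratio'' is not a correct principle: hitting rates are governed by capacity, not volume, and an occupation-time argument based on $\operatorname{vol}(\dA\cup\dB)/\operatorname{vol}(\Omega)$ needs a lower bound on the post-hit occupation time, which degenerates with $n$ because of the strong outward radial drift; making it rigorous essentially reproduces a capacity estimate rather than following from the volume ratio. So while your plan points in the right direction and matches the paper's road map, the quantitatively decisive lemma is missing, and the shortcuts you propose for it would fail without substantial additional argument.
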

\addtocounter{theorem}{-1}
\endgroup

\begin{proof}
Let us assume $X_0 \notin \tilde A,\tilde B$.  There are essentially two things that must be proved:
\begin{enumerate}
\item The process $X$ converges to its stationary distribution fairly quickly.  This is supported by Lemma \ref{lem:uniform_ergodicity}.  Since $h_{A,B}(x) \in [0,1]$, this Lemma gives that
\[
|\mathbb{E}[h_{A,B}(M_t)-h_{A,B}(Z)]| \leq 2^2/4t =1/t
\]
where $Z$ is distributed according to the uniform distribution on $\Omega$ and $M$ is a Brownian motion trapped in $\Omega$ by normally reflecting boundaries.  

To connect this result on $M$ to our object of interest $h_{A,B}$, note that without loss of generality we may assume $X,M$ are on the same probability space and $M_t=X_t$ for $t<\tilde\tau \triangleq \inf \{t:\ X_t \in \partial \dot A \cup \partial \dot B\}$.  Moreover, note that $\forall t \geq 0$, $\tilde\tau\wedge t$ is a stopping time with finite expectation, and $h_{A,B}$ is a solution to the elliptic equation \ref{eq:pde}. Applying Dynkin's formula, we observe that 
\[
h_{A,B}(x) = \mathbb{E} [h_{A,B}(X_{t\wedge\tilde\tau})] = \mathbb{E} [h_{A,B}(M_{t\wedge\tilde\tau})]
\]
Thus
\begin{align*}
|h_{A,B}(x) - \mathbb{E}[h_{A,B}(Z)]| &= |\mathbb{E} [h_{A,B}(M_{t\wedge\tilde\tau})] - \mathbb{E}[h_{A,B}(Z)]|\\
&\leq |\mathbb{E} [h_{A,B}(M_{t\wedge\tilde\tau})] - h_{A,B}(M_t)]| + |\mathbb{E} [h_{A,B}(M_{t}) - h_{A,B}(Z)]|\\
&\leq \mathbb{P}(\tilde \tau \leq t) + 1/t
\end{align*}

\item The hitting time $\tilde \tau$ is generally long.  This is supported by Lemma \ref{lem:longtime}, which says that if $X_0 \not\in\tilde{A} \cup \tilde{B}$ we can ensure
\[
\mathbbm{P} \left( \tilde{\tau} \leq t \right) < \varepsilon/2
\]
for arbitrarily small $\varepsilon$ and arbitrarily large $t$, by taking $n$ sufficiently high or $r_{\dot A},r_{\dot B}$ sufficiently small.
\end{enumerate}
Putting these two results together at $t=2/\varepsilon$ we obtain that we can make $|h_{A,B}(x) - \mathbb{E}[h_{A,B}(Z)]| \leq \varepsilon / 2 + \varepsilon/2 = \varepsilon$ for arbitrarily small $\varepsilon$ by taking $n$ sufficiently high or $r_{\dot A},r_{\dot B}$ sufficiently small.
\end{proof}


\begin{lemma}
\label{lem:uniform_ergodicity}(Uniform ergodicity) Let $\Omega\subset\mathbbm{R}^d$ be a convex set with diameter $\xi$ and let $M$ denote a Brownian motion trapped by reflecting boundaries inside $\Omega$.  The distribution of $M_t$ converges uniformly to the uniform distribution, in the sense that:
\[ 
\sup_{f:\ \Omega\rightarrow [0,1]} | \mathbb{E}[f(M_t) - f(Z)|M_0=x]| \leq \xi^2/4t \qquad \forall x\in \Omega, t>0
\]
where $Z$ is uniformly distributed on $\Omega$.
\end{lemma}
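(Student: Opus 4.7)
Since $\pi$ (uniform on $\Omega$) is the stationary measure of the normally reflected Brownian motion $M$, the supremum on the left-hand side equals the total-variation distance $\operatorname{TV}(\mathcal L(M_t\mid M_0=x),\pi)$. The plan is to bound this TV distance by $\xi^2/(4t)$ via a coupling argument combined with Markov's inequality.

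Concretely, I would construct a coupling $(M,M')$ on a common probability space with $M_0=x$ and $M'_0\sim\pi$, each of $M$ and $M'$ marginally a normally reflected Brownian motion in $\Omega$, such that the two processes agree from the meeting time $\tau:=\inf\{s:M_s=M'_s\}$ onward. Because $\pi$ is stationary, $M'_s\sim\pi$ for every $s$, and the classical coupling inequality yields $\operatorname{TV}(\mathcal L(M_t),\pi)\le\mathbb P(\tau>t)$. Markov's inequality then gives $\mathbb P(\tau>t)\le\mathbb E[\tau]/t$, so the lemma reduces to the estimate $\mathbb E[\tau]\le\xi^2/4$.

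For the coupling I would use a mirror coupling oriented along the initial separation direction $\hat v:=(M'_0-M_0)/\|M'_0-M_0\|$: drive $M$ by a Brownian motion $W$ and drive $M'$ by the reflected process $\widetilde W_s=W_s-2\langle W_s,\hat v\rangle\hat v$. Away from $\partial\Omega$ this keeps $M-M'$ parallel to $\hat v$, and the signed projection $D_s:=\langle M_s-M'_s,\hat v\rangle$ has quadratic variation $4\,ds$, starts at $-\|M'_0-M_0\|\in[-\xi,0]$, and is absorbed at $0$; the diameter bound $\|M_s-M'_s\|\le\xi$ keeps $D_s\ge-\xi$. A direct exit-time calculation for this one-dimensional diffusion---solving $2u''(x)=-1$ on $[-\xi,0]$ with $u(0)=0$ and $u'(-\xi)=0$---yields $u(-\xi)=\xi^2/4$, which upper-bounds $\mathbb E[\tau]$.

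The main obstacle is reconciling the mirror coupling with the reflecting boundary in dimensions $d\ge 2$: when $M$ and $M'$ hit $\partial\Omega$ at different points, their normal reflections can in principle tilt $M-M'$ off the axis $\hat v$ and invalidate the one-dimensional picture. Convexity of $\Omega$ is what rescues the argument---a Skorohod-type decomposition of $D_s$ shows that the boundary local-time contributions to $D_s$ from both $M$ and $M'$ can only move $D_s$ toward $0$, never push it below $-\xi$---so the one-dimensional absorbing/reflecting picture above provides a valid upper bound on the full $d$-dimensional meeting time. Pinning down this Skorohod argument is the one substantive technical step; after that, the coupling inequality, Markov's inequality, and the exit-time ODE assemble routinely.
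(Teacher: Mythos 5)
Your overall architecture --- identify the supremum with a total-variation distance, bound it by the tail of a coupling time, and finish with Markov's inequality --- is exactly the route the paper takes, except that the paper outsources the hard part: its proof is one line, quoting Loper (2018) for the inequality $\sup_f|\mathbb{E}[f(M_t)-f(Z)]|\le \mathbb{P}(\tau>4t)$ with $\tau$ the exit time of a standard Brownian motion from $[-\xi,\xi]$, and then applying $\mathbb{E}[\tau]=\xi^2$ and Markov. So what you are really attempting is a from-scratch proof of the cited coupling estimate, and that is where your argument has a genuine gap, precisely at the step you yourself flag. With the mirror fixed for all time along the initial direction $\hat v$, each normal reflection pushes a particle along the inward normal at \emph{its own} boundary point, which is generally not parallel to $\hat v$; hence $M-M'$ acquires a component orthogonal to $\hat v$ after the first boundary visit. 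Two things then break: (i) your claim that convexity forces the local-time contributions to $D_s=\langle M_s-M'_s,\hat v\rangle$ to point toward $0$ is unjustified --- convexity gives $\langle M-M',\normal(M)\rangle\ge 0$ and $\langle M-M',\normal(M')\rangle\le 0$, i.e.\ it controls the push along the \emph{current difference vector}, not along the fixed axis $\hat v$ (in the unit disk, take $M$ on the boundary in the right half-plane and $M'$ in the interior still further right: the push $-\normal(M)\,dL$ drives $D$ away from $0$); and (ii) even if $D$ hits $0$, the processes have not met unless the orthogonal component also vanishes, so the one-dimensional exit time does not dominate the meeting time $\tau=\inf\{s:M_s=M'_s\}$. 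As stated, the reduction to the one-dimensional picture therefore fails in dimension $d\ge 2$.

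The repair is standard but essential: use the \emph{moving} mirror coupling, reflecting the driving noise of $M'$ across the hyperplane orthogonal to the current difference $Z_s=M_s-M'_s$ (an Atar--Burdzy-type mirror coupling for reflected diffusions). Then $|Z_s|$ is a one-dimensional semimartingale whose martingale part has quadratic variation $4\,ds$, and the boundary local-time contributions are nonpositive by exactly the two convexity inequalities above; together with $|Z_s|\le\xi$, optional stopping with $u(z)=\tfrac{\xi z}{2}-\tfrac{z^2}{4}$ (your ODE, transported to $[0,\xi]$) gives $\mathbb{E}[\tau]\le\xi^2/4$, and a stochastic-domination version of the same computation yields the sharper statement $\mathbb{P}(\tau>t)\le\mathbb{P}(\tau_{[-\xi,\xi]}>4t)$ that the paper quotes; after meeting, run the two copies synchronously so they stay together. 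With that substitution your identification of the supremum as a TV distance, the coupling inequality, and Markov's inequality assemble into the stated bound $\xi^2/4t$. Be aware that the construction of the mirror coupling for normally reflected Brownian motion and the It\^o--Tanaka manipulations of $|Z|$ at $0$ and at $\partial\Omega$ are themselves nontrivial technical points --- which is presumably why the paper delegates this entire step to the citation rather than reproving it.
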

\begin{proof}
Per Loper,\cite{Loper2018} 
\[
\sup_{f:\ \Omega\rightarrow [0,1]} | \mathbb{E}[f(M_t) - f(Z)|M_0=x]| \leq \mathbbm{P}(\tau>4t)
\]
where $\tau$ is the first exit time of a standard Brownian motion from $[-\xi,\xi]$. It is well known that $\mathbb{E}=\xi^2$. Hence, by the Markov inequality, $\mathbbm{P}(\tau>4t)\leq \xi^2/4t$.
\end{proof}

\begin{lemma}
\label{lem:longtime}Let $\tilde{\tau}= \inf \{ t : X_t \in \dot A \cup \dot B\}$. For any fixed value of $n
\geqslant 3$ and $r_{\tilde{A}}, r_{\tilde{B}}, \varepsilon,t > 0$, there
exists some $c (n, r_{\tilde{A}}, r_{\tilde{B}}, \varepsilon,t)$ such that if
$r_{\dot A}, r_{\dot B} < c$, then
\[ \mathbbm{P} \left( \tilde{\tau} < t |X_0 \not\in\tilde{A} \cup \tilde{B} \right) < \varepsilon \]
Likewise, for any fixed value of $r_{\dot A}, r_{\tilde{A}}, r_{\dot B}, r_{\tilde{B}},
\varepsilon,t > 0$, there exists some $c (r_{\dot A}, r_{\tilde{A}}, r_{\dot B},
r_{\tilde{B}}, \varepsilon,t)$, such that the same property holds as long as $n \geqslant c$.
\end{lemma}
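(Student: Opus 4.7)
The plan is to exploit the hypothesis $\nabla U\equiv 0$ outside $\dA\cup\dB$: up to time $\tilde\tau$ the process $X$ agrees pathwise with a reflected Brownian motion $M$ on $\bar\Omega$, so it suffices to bound $\mathbb{P}_x(\tau^M_\dA\leq t)$ uniformly in $x\notin\tA\cup\tB$, and symmetrically for $\dB$; a union bound then finishes. I would attack this with a second-moment / occupation-time estimate. Setting $L^{\dA}_t=\int_0^t \mathbb{1}_\dA(M_s)\,ds$, the strong Markov property at $\tau_\dA$ yields
\[
\mathbb{E}_x\!\left[L^{\dA}_{2t}\right] \;\geq\; \mathbb{P}_x(\tau_\dA\leq t)\,\cdot\, \inf_{y\in\partial\dA}\mathbb{E}_y\!\left[L^{\dA}_t\right],
\]
so the task reduces to an upper bound on the numerator and a lower bound on the denominator.

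For the numerator I would split $\int_0^{2t}\mathbb{P}_x(M_s\in\dA)\,ds$ into a short-time regime, in which the Gaussian tail of the displacement $|M_s-x|$ makes $\mathbb{P}_x(M_s\in\dA)$ negligible (since $|x-x_A|\geq r_\tA$), and a long-time regime, in which Lemma~\ref{lem:uniform_ergodicity} applied to $f=\mathbb{1}_\dA$ gives $\mathbb{P}_x(M_s\in\dA)\leq |\dA|/|\Omega|+\xi^2/(4s) = r_\dA^n + 1/s$ with $\xi=2$. Patching these two regimes with a local Gaussian heat-kernel bound in the intermediate range gives $\mathbb{E}_x[L^{\dA}_{2t}]\leq C(n,r_\tA,t)\,r_\dA^n$. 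For the denominator, the local half-space approximation at the smooth boundary $\partial\dA$ gives $\mathbb{P}_y(M_s\in\dA)\to 1/2$ as $s\downarrow 0$, and Brownian scaling then yields $\mathbb{E}_y[L^{\dA}_t]\geq c\,r_\dA^2$ uniformly in $y\in\partial\dA$ whenever $t\geq r_\dA^2$. Dividing,
\[
\mathbb{P}_x(\tau_\dA\leq t) \;\leq\; C'(n,r_\tA,t)\,r_\dA^{n-2},
\]
which vanishes as $r_\dA\to 0$ at fixed $n\geq 3$, and also as $n\to\infty$ at fixed $r_\dA<r_\tA<1$, the exponential decay of $r_\dA^{n-2}$ dominating.

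The main obstacle is controlling the $n$-dependence of $C(n,r_\tA,t)$ in the high-dimensional case: the Gaussian heat-kernel contribution $\int s^{-n/2}e^{-c/s}\,ds$ carries $\Gamma(n/2-1)$-type factors that must be shown to grow more slowly than $(1/r_\dA)^{n-2}$. This hinges on the dimension-free ergodicity bound of Lemma~\ref{lem:uniform_ergodicity} in the unit ball and on the high-dimensional concentration of $|M_s-x|$ near $\sqrt{ns}$, which cleanly separates the short- and long-time regimes and keeps the intermediate patching benign. Modulo that bookkeeping, the occupation-time argument is standard, and the symmetric estimate for $\dB$ follows verbatim.
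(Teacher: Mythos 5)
Your reduction to the reflected Brownian motion $M$ and the Kac-type occupation-time inequality is a genuinely different, more elementary route than the paper's proof (which bounds the Laplace transform $\mathbb{E}[e^{-\tilde\tau_A/2}]$ through a renewal inequality across an intermediate sphere and evaluates the resulting one-dimensional functionals with Wendel's Bessel-function formulas). For the first assertion (fixed $n\ge 3$, $r_\dA,r_\dB\to 0$) your scheme can be made rigorous, but not with Lemma~\ref{lem:uniform_ergodicity} used the way you state it: its error $\xi^2/(4s)$ is additive and does not scale with the target, so $\int_{t_1}^{2t}\bigl(r_\dA^n+1/s\bigr)\,ds$ carries a $\log(2t/t_1)$ term that does not vanish as $r_\dA\to0$. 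The whole numerator must instead be carried by a Gaussian upper bound for the Neumann heat kernel on the unit ball valid on all of $(0,2t]$, which gives $\mathbb{P}_x(M_s\in\dA)\le |\dA|\,C(n,t)\,s^{-n/2}e^{-c(r_\tA-r_\dA)^2/s}$ and hence a numerator of order $r_\dA^n$; together with the denominator bound (whose constant is $\asymp r_\dA^2/n$, not dimension-free, from the outward radial drift $(n-1)/2\rho$) this yields $O(r_\dA^{n-2})\to0$, which is fine at fixed $n$.

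The genuine gap is the second assertion, $n\to\infty$ at fixed $r_\dA<r_\tA$. The constants you defer are not benign bookkeeping: the heat-kernel prefactor, the volume factor $|\dA|=\omega_n r_\dA^n$, and the time integral $\int_0^{2t}s^{-n/2}e^{-c(r_\tA-r_\dA)^2/s}ds\asymp\Gamma(n/2-1)\,\bigl(c(r_\tA-r_\dA)^2\bigr)^{1-n/2}$ combine, after dividing by the denominator $\asymp r_\dA^2/n$, into a bound of order $\mathrm{poly}(n)\,\bigl(C\,r_\dA/(r_\tA-r_\dA)\bigr)^{n}$, which diverges whenever $r_\tA-r_\dA\le C\,r_\dA$ --- and the lemma must hold for every fixed pair $r_\dA<r_\tA$, including ratios $r_\dA/r_\tA$ arbitrarily close to $1$. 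Lemma~\ref{lem:uniform_ergodicity} cannot rescue this (its error does not shrink with the target), and the ``concentration of $|M_s-x|$ near $\sqrt{ns}$'' heuristic would have to be turned into a quantitative estimate sharp enough to beat these exponential-in-$n$ losses uniformly in $r_\dA/r_\tA$; nothing in your sketch does so. This delicate regime is precisely what the paper handles by a different mechanism: the renewal bound across the intermediate sphere of radius $\tilde r_{\dot A}=\sqrt{r_\dA r_\tA}$ and sharp Bessel asymptotics (their Lemma~\ref{lem:bessel}), which produce a ratio that vanishes as $n\to\infty$ for every fixed $r_\dA<r_\tA$. As written, your argument establishes (after the repair above) only the small-radius half of the lemma.
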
\begin{proof}
Our task is to show that $\tilde \tau$ is large with high probability.  Note that it suffices to show that $\tilde \tau_A \triangleq \inf\{t:\ X_t\in \dot A\}$ and $\tilde \tau_B \triangleq \inf\{t:\ X_t\in \dot B\}$ are both large with high probability and then apply a union bound.  Thus, without loss of generality we will focus on showing that $\tilde \tau_A$ is large with high probability.  For this, Chernoff's bound shows that it would be sufficient to show that we can make
\[
g(x)=\mathbb{E}[e^{-\frac{1}{2}\tilde \tau_A}|X_0=x]
\] 
arbitrarily small for every $x\notin \tilde A$. We note that $\forall X_0 \notin \tilde A$, the continuity of $X$ dictates that $X$ would have to cross the set $S_{\tilde r_{\dot A}}=\{x:\ |x-x_A| = \tilde r_{\dot A}\}$ before $\tilde \tau_A$, as long as $\tilde r_{\dot A} \in (r_{\dot A},r_{\tilde A})$.  Applying the strong Markov property of $X$, we obtain $g(x)\leq \sup_{y\in S_{\tilde r_{\dot A}}}g(y), \forall x \notin \tilde A$. As a result, to prove our theorem it will suffice to show that by taking $r_{\dot A}$ sufficiently small or $n$ sufficiently large we can ensure that $g(x)$ is uniformly arbitrarily small on $S_{\tilde r_{\dot A}}$ for some $\tilde r_{\dot A} \in (r_{\dot A},r_{\tilde A})$.  To show this, we will make use of two other hitting times:
\[
T =\inf \{t:\ X_t \notin \tilde A \backslash \dot A\} 
\qquad
T_1 =\inf \{t\geq T:\ X_t \in S_{\tilde r_{\dot A}}\} 
\]
$\forall X_0 \in S_{\tilde r_{\dot A}}$, if $X_T\notin \partial\dot A$, then the process exits $\tilde A \backslash \dot A$ at $\partial \tilde A$, and would have to cross $S_{\tilde r_{\dot A}}$ again before reaching $\dot A$. In this case, we have $T_1 \leq \tilde \tau_A$.  Applying the strong Markov property, we see that $\forall x\in S_{\tilde r_{\dot A}}$
\begin{align*}
	g(x) = &\mathbb{E}[e^{-\frac{1}{2}\tilde \tau_A}\indicatorf{X_T \in\partial\dot A}|X_0=x]+ \mathbb{E}[e^{-\frac{1}{2}\tilde \tau_A}\indicatorf{X_T \in\partial\tilde A}|X_0=x]\\
	    =  &\mathbb{E}[e^{-\frac{1}{2}\tilde \tau_A}\indicatorf{T=\tilde \tau_A}|X_0=x]+ \mathbb{E}[e^{-\frac{1}{2}\tilde \tau_A}\indicatorf{T\neq\tilde \tau_A}|X_0=x]\\
     = & \mathbb{E}[e^{-\frac{1}{2}\tilde \tau_A}\indicatorf{T=\tilde \tau_A}|X_0=x]+
     \mathbb{E}[e^{-\frac{1}{2}T_1}g(X_{T_1})\indicatorf{T\neq\tilde \tau_A}|X_0=x]\\
     \leq & \mathbb{E}[e^{-\frac{1}{2}T}\indicatorf{T=\tilde \tau_A}|X_0=x]+
                    \mathbb{E}[e^{-\frac{1}{2}T}\indicatorf{T\neq\tilde \tau_A}|X_0=x]\left(\sup_{y\in 
		   S_{\tilde r_{\dot A}}} g(y)\right)
\end{align*}
Note furthermore that the law of $e^{-\frac{1}{2}T},\indicatorf{T=\tilde \tau_A}$ is actually the same for every $x\in S_{\tilde r_{\dot A}}$, due to the fact that the diffusion behaves simply like a Brownian motion inside $\tilde A \backslash \dot A$ and the law of $T,\indicatorf{T=\tau_A}$ are thus functions of the one-dimensional diffusion of $|X_t-x_A|$.  In fact, Wendel\cite{Wendel1980-sj} gives explicit formulas for $\mathbb{E}[e^{-\frac{1}{2}T}\indicatorf{T=\tilde \tau_A}|X_0=x],\mathbb{E}[e^{-\frac{1}{2}T}\indicatorf{T\neq\tilde \tau_A}|X_0=x]$ which depend only upon $|x-x_A|,r_{\tilde{A}},r_{\dot A}$.  Applying this, taking the supremum over $x$ of our previous formula, and rearranging, we obtain
\begin{align*}
\sup_{x\in S_{\tilde r_{\dot A}}} g(x) &\leq \frac{\mathbb{E}[e^{-\frac{1}{2}T}\indicatorf{T=\tilde \tau_A}|X_0\in S_{\tilde r_{\dot A}}]}{1-\mathbb{E}[e^{-\frac{1}{2}T}\indicatorf{T\neq\tilde \tau_A}|X_0\in S_{\tilde r_{\dot A}}]} \triangleq L(r_{\dot A},\tilde r_{\dot A},r_{\tilde A},n)
\end{align*}
Applying Wendel's formulas, we obtain a closed form expression for $L$:
\[
=\frac{\left( \frac{r_{\dot A}}{\tilde r_{\dot A}} \right)^h (I_h (
r_{\tilde{A}}) K_h ( \tilde r_{\dot A}) - I_h ( \tilde r_{\dot A}) K_h (
r_{\tilde{A}}))}{
    \left( \frac{I_h ( r_{\tilde{A}})}{I_h ( \tilde r_{\dot A})} - \left(
\frac{r_{\tilde{A}}}{\tilde r_{\dot A}} \right)^h  \right)
K_h ( r_{\dot A}) I_h ( \tilde r_{\dot A})+ 
    \left( \left( \frac{r_{\tilde{A}}}{\tilde r_{\dot A}}
\right)^h  - \frac{K_h ( r_{\tilde{A}})}{K_h ( \tilde r_{\dot A})} \right) I_h (r_{\dot A})K_h ( \tilde r_{\dot A})}\\
\]
where $h=(n-2)/2$ and $I_h,K_h$ represent modified Bessel functions of the first and second kind of order $h$.  
 
Thus, to complete our proof, it suffices to show that we can drive $L(r_{\dot A}, \tilde r_{\dot A}, r_{\tilde A},n)$ to zero by taking $r_{\dot A}$ small or $n$ large:
\begin{itemize}
    \item When $r_{\dot A}$ is small.  The numerator of $L$ converges to zero as $r_{\dot A}\rightarrow 0$, because $\frac{r_{\dot A}}{\tilde r_{\dot A}}\rightarrow 0,h>0$ and the other terms are constant.  On the other hand, the denominator explodes, because as $r_{\dot A}\rightarrow0$ we have
    \begin{align*}
    & \left( \frac{I_h ( r_{\tilde{A}})}{I_h ( \tilde r_{\dot A})} - \left(\frac{r_{\tilde{A}}}{\tilde r_{\dot A}} \right)^h  \right)
                K_h ( r_{\dot A}) I_h ( \tilde r_{\dot A}) \rightarrow +\infty \\
    & \left( \left( \frac{r_{\tilde{A}}}{\tilde r_{\dot A}}\right)^h  - \frac{K_h ( r_{\tilde{A}})}{K_h ( \tilde r_{\dot A})} \right) 
                I_h (r_{\dot A})K_h ( \tilde r_{\dot A})\rightarrow 0
    \end{align*}
    These limits follow immediately from three properties of Bessel functions:
    \begin{itemize}
        \item $K_h(x)\rightarrow \infty,I_h(x)\rightarrow 0$ as $x\rightarrow 0$ for $h>0$
        \item $K_h(x),I_h(x)>0$ for $x>0,h>0$
        \item $\frac{I_{h}(y)}{I_{h}(x)} > \left(\frac{y}{x}\right)^{h}$ for $y>x$ and $h>0$
    \end{itemize} 
    The first two properties are well-known and can be found in DLMF\cite{noauthor_undated-ti}; the second can be found in Baricz.\cite{noauthor_undated-ti,baricz2010bounds}  In conclusion, since the numerator vanishes and the denominator explodes, we have that overall $L$ vanishes.  
    \item When $n$ is large.  It is clear that the previous result holds for any value of $\tilde r_{\dot A}\in (r_{\dot A}, r_{\tilde A})$.  For the large-$n$ case, we will be more picky: we will take $\tilde r_{\dot A}=\sqrt{r_{\dot A} r_{\tilde A}}$.  

    Let us look at the numerator first.  Asymptotics from the DLMF give that as $h=(n-2)/2 \rightarrow \infty$ we have
    \begin{align*}
    I_h(x) \sim \frac{x^h}{2^h\Gamma(h+1)} & & K_h(x) \sim \frac{2^h\Gamma(h+1)}{(2h)x^h}  
    \end{align*}
    Here by $f_1(h)\sim f_2(h)$ we mean ``asymptotic equivalence,'' i.e.\ $\lim_{h\rightarrow\infty}f_1(h)/f_2(h)=1$. Applying to our case:
    \begin{align*}
        \left(\frac{r_{\dot A}}{\tilde r_{\dot A}}\right)^h I_h (r_{\tilde{A}}) K_h ( \tilde r_{\dot A}) &
	\sim \frac{1}{2h}\left(\frac{r_{\tilde{A}} r_{\dot A}}{\tilde r_{\dot A}^2}\right)^h = \frac{1}{2h} \rightarrow 0 \\
        \left(\frac{r_{\dot A}}{\tilde r_{\dot A}}\right)^h I_h ( \tilde r_{\dot A}) K_h (r_{\tilde{A}}) &
	\sim \frac{1}{2h}\left(\frac{ r_{\dot A}}{r_{\tilde A}}\right)^h \rightarrow 0
    \end{align*}
    Putting these two limits together we see that the numerator of $L$ is asymptotically vanishing.

    Now let us turn to the denominator.  First note that 
    \begin{align*}
     \left( \frac{r_{\tilde{A}}}{\tilde r_{\dot A}}\right)^h  
	    I_h (r_{\dot A})K_h ( \tilde r_{\dot A}) &\sim \frac{1}{2h}\left(\frac{r_{\tilde{A}} r_{\dot A}}{\tilde r_{\dot A}^2}\right)^h = \frac{1}{2h} \rightarrow 0\\
     - \frac{K_h ( r_{\tilde{A}})}{K_h ( \tilde r_{\dot A})} 
	    I_h (r_{\dot A})K_h ( \tilde r_{\dot A}) &\sim -\frac{1}{2h}\left(\frac{r_{\dot A}}{r_{\tilde A}}\right)^h \rightarrow 0
    \end{align*}
    So those terms are negligible.  However, the other two terms of the denominator are in fact exploding: one to positive infinity and one to negative infinity.  To understand this delicate balance, we these we turn to Lemma \ref{lem:bessel}.  Applying this Lemma and the asymptotics of the DLMF, we obtain that
    \begin{align*}
    \left( \frac{I_h ( r_{\tilde{A}})}{I_h ( \tilde r_{\dot A})} - \left(\frac{r_{\tilde{A}}}{\tilde r_{\dot A}} \right)^h  \right)
                K_h ( r_{\dot A}) I_h ( \tilde r_{\dot A}) 
        &\geq \frac{r_{\tilde{A}}^h}{\cancel{I_h(\tilde r_{\dot A})}}\times\frac{r_{\tilde{A}}^{2} -\tilde r_{\dot A}^{2}}{2^{h+2}\Gamma(h+2)} K_h ( r_{\dot A}) \cancel{I_h ( \tilde r_{\dot A})} \\
        &\sim r_{\tilde{A}}^h\times\frac{r_{\tilde{A}}^{2} -\tilde r_{\dot A}^{2}}{2^{h+2}\Gamma(h+2)} \frac{2^h \Gamma(h+1)}{2hr_{\dot A}^h}  \\
        &=\left(\frac{r_{\tilde{A}}}{r_{\dot A}} \right)^h \left(\frac{r_{\tilde A}^2 - \tilde r_{\dot A}^2}{8h(h+1)}\right)
    \end{align*}
    which is indeed exploding as $h=(n-2)/2 \rightarrow \infty$.  

    In conclusion, with this choice of $\tilde r_{\dot A}$, the numerator is asymptotically vanishing, but the denominator is asymptotically exploding.  Thus $L$ vanishes.
\end{itemize}

Let us now review our arguments to see how the vanishing of $L$ leads to our desired conclusion.  We have just shown that by taking $n$ large or $r_{\dot A}$ small, we can ensure that $L$ vanishes.  Thus for any $x_0\notin \tilde A$ and any $\varepsilon'>0$, we can ensure
\begin{align*}
\varepsilon' & \geq L \geq \sup_{y \in S_{\tilde r_{\dot A}}} g(y) \geq g(x) \\
            & = \mathbb{E}[e^{-\frac{1}{2}\tilde \tau_A}|X_0=x]\\
            & \geq \mathbb{P}(\tilde \tau_A < t) e^{\frac{1}{2} t}
\end{align*}
In particular, for any $t,\varepsilon>0$ we can ensure that $\mathbb{P}(\tilde \tau_A < t)<\varepsilon/2$ by ensuring $\varepsilon' < \varepsilon e^{-\frac{1}{2} t}/2$.  We can then ensure that $\mathbb{P}(\tilde \tau < t)<\varepsilon$ by applying the same arguments to $\tilde \tau_B$ and applying a union bound.

\end{proof}

\begin{lemma}\label{lem:bessel}
If $a>b$ then
    \[
    \frac{I_h ( a)}{I_h (b)} - \left(\frac{a}{b} \right)^h 
    \geq
    \frac{a^h}{I_h(b)}\times\frac{a^{2} -b^{2}}{2^{h+2}\Gamma(h+2)}
    \]
\end{lemma}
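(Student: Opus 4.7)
The plan is to rewrite the inequality in a form that makes it amenable to a direct series comparison. Multiplying both sides by $I_h(b)>0$, the claim becomes
\[
I_h(a) - \left(\frac{a}{b}\right)^h I_h(b) \;\geq\; \frac{a^h(a^2-b^2)}{2^{h+2}\Gamma(h+2)}.
\]
So the task reduces to lower-bounding the left-hand side by a single explicit quantity, which strongly suggests expanding $I_h$ as its defining power series and comparing term by term.

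First I would recall the standard series
\[
I_h(x) = \sum_{k=0}^{\infty} \frac{(x/2)^{2k+h}}{k!\,\Gamma(h+k+1)},
\]
valid for $x>0$. Substituting this into the left-hand side and pulling out the common factor $a^h/2^{2k+h}$ gives
\[
I_h(a) - \left(\frac{a}{b}\right)^h I_h(b)
= \sum_{k=0}^{\infty} \frac{a^h\,(a^{2k}-b^{2k})}{2^{2k+h}\,k!\,\Gamma(h+k+1)}.
\]
The key observation is that the $k=0$ term vanishes identically (since $a^0-b^0=0$), so the factor $(a/b)^h$ has exactly cancelled the leading $I_h$-behavior. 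For $k\geq 1$, every remaining term is strictly positive because $a>b>0$, so the series is bounded below by any one of its terms.

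Then I would simply keep the $k=1$ term and discard all higher terms as a nonnegative remainder. The $k=1$ term is exactly
\[
\frac{a^h(a^2-b^2)}{2^{h+2}\,\Gamma(h+2)},
\]
which is the claimed lower bound. Dividing back through by $I_h(b)>0$ recovers the statement of the lemma. The only point requiring care is the positivity of the discarded tail, which is immediate from $a>b>0$ and $\Gamma(h+k+1)>0$ for $h>0$; I expect no substantive obstacle, as the lemma is really just the observation that the leading-order cancellation in the power series of $I_h(a)-(a/b)^h I_h(b)$ leaves a series of strictly positive terms whose first term provides the needed bound.
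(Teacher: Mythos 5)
Your proposal is correct and follows essentially the same argument as the paper: both expand $I_h$ in its power series, observe that the $k=0$ term cancels exactly, note the positivity of the remaining terms when $a>b$, and retain only the $k=1$ term to obtain the bound. The only cosmetic difference is that you clear the $I_h(b)$ denominator first, whereas the paper keeps the expression as a fraction throughout; the substance is identical.
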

\begin{proof}
Recall that $I_h$ may be defined as 
\[
I_h(x) = \sum_{m=0}^\infty \frac{x^{h+2m}}{2^{h+2m}\Gamma(m+h+1)\Gamma(m+1)}
\]
Thus
\begin{align*}
\frac{I_h ( a)}{I_h (b)} - \left(\frac{a}{b} \right)^h 
    &= \frac{I_h(a)b^h-a^hI_h(b)}{b^hI_h(b)}  \\
    &= \frac{a^hb^h\sum_{m=0}^\infty \frac{a^{2m} -b^{2m}}{2^{h+2m}\Gamma(m+h+1)\Gamma(m+1)}}{b^{h}I_h(b)}
\end{align*}
Since $a>b$, we have that $a^{2m} -b^{2m}$ is always positive.  Thus we can get a lower bound by simply taking one of the terms.  Choosing $m=1$, we get our final result.
\end{proof}


\section{Proof of Theorem \ref{thm:main_thm}}
\label{sec:proof_thm}

Let $A\subset\tilde A\subset\Omega,B\subset\tilde B\subset\Omega$.  Let $\tilde A,\tilde B$ be disjoint and
$h_{A,B}(x)$  $\varepsilon$-flat with respect to $\Omega \backslash (\tilde A \cup \tilde B)$. We assume the set boundaries are all smooth.  

Under these conditions, we will show we can use local capacities to get good approximations for $h_{A,B}(x)$ when $x\notin \tilde A,\tilde B$.  To do so, our key idea is to uncover upper and lower bounds on the value of the Dirichlet form applied to this function, $\mathscr{E}(h_{A,B},h_{A,B})$.  We will see that these bounds can be understood in terms of local capacities, and the resulting inequalities will then yield our main result in the form of Theorem \ref{thm:main_thm}.

\begin{lemma}  Let $S=\Omega \backslash (A\cup B)$.  The Dirichlet form of $h_{A,B}$ on $S$ can be upper-bounded in terms of the capacities:
\[ \mathscr{E}_S(h_{A,B}, h_{A,B}) \leqslant \frac{\tmop{cap} (A, \tilde{A}) \tmop{cap} (B,
\tilde{B})}{\tmop{cap} (A, \tilde{A}) + \tmop{cap} (B, \tilde{B})} \]
\end{lemma}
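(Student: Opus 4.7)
The plan is to exploit the variational characterization of $h_{A,B}$ from Appendix \ref{sec:three_perspectives}: $h_{A,B}$ is the unique minimizer of $\mathscr{E}_S(u,u)$ over all $u \in \mathcal{H}^1(S)$ satisfying $u = 1$ on $\partial A$ and $u = 0$ on $\partial B$. Consequently, any admissible competitor yields an upper bound on $\mathscr{E}_S(h_{A,B},h_{A,B})$. My strategy is to construct a one-parameter family of test functions whose Dirichlet form can be expressed in closed form in terms of $\capac{A}{\tilde A}$ and $\capac{B}{\tilde B}$, then optimize the parameter to get the tightest bound.

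For each $\alpha\in[0,1]$ I would set
\[
u_\alpha(x) = \begin{cases}
\alpha + (1-\alpha)\, h_{A,\tilde A^c}(x), & x \in \tilde A \backslash A, \\
\alpha, & x \in \Omega \backslash (\tilde A \cup \tilde B), \\
\alpha\bigl(1 - h_{B,\tilde B^c}(x)\bigr), & x \in \tilde B \backslash B.
\end{cases}
\]
By construction $u_\alpha = 1$ on $\partial A$ and $u_\alpha = 0$ on $\partial B$, and the three pieces glue continuously to the constant value $\alpha$ across $\partial \tilde A$ and $\partial \tilde B$ because $h_{A,\tilde A^c}$ and $h_{B,\tilde B^c}$ vanish on those outer boundaries. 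The first verification step is that this patchwork really lies in $\mathcal{H}^1(S)$; since each piece is smooth (the local hitting probability solutions to (\ref{eq:pde}) are smooth, and constants are trivially smooth) and the pieces agree on the gluing surfaces, a standard Sobolev-gluing argument supplies the needed weak differentiability.

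The payoff is that $\nabla u_\alpha$ is supported on $(\tilde A \backslash A) \cup (\tilde B \backslash B)$ and is a scalar multiple of the local hitting-probability gradient on each of those regions, so by the definition of capacity
\[
\mathscr{E}_S(u_\alpha, u_\alpha) = (1-\alpha)^2 \capac{A}{\tilde A} + \alpha^2 \capac{B}{\tilde B}.
\]
Minimizing the right-hand side over $\alpha$ is a one-line calculus exercise yielding $\alpha^{\star} = \capac{A}{\tilde A}/\bigl(\capac{A}{\tilde A}+\capac{B}{\tilde B}\bigr)$, and substituting back gives exactly $\capac{A}{\tilde A}\,\capac{B}{\tilde B}/\bigl(\capac{A}{\tilde A}+\capac{B}{\tilde B}\bigr)$. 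I do not expect a serious obstacle here: the only step that requires real care is confirming the $\mathcal{H}^1$ regularity of the glued test function so that the variational inequality $\mathscr{E}_S(h_{A,B},h_{A,B}) \leq \mathscr{E}_S(u_{\alpha^{\star}}, u_{\alpha^{\star}})$ applies, after which the bound follows immediately.
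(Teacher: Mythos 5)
Your proposal is correct and matches the paper's proof essentially line for line: the same variational characterization, the same piecewise test function (your $u_\alpha$ is the paper's $u_c$), the same computation of its Dirichlet form as $(1-\alpha)^2\capac{A}{\tilde A}+\alpha^2\capac{B}{\tilde B}$, and the same optimization over the parameter. The only addition is your explicit remark about the Sobolev-gluing step needed for $\mathcal{H}^1$ admissibility, which the paper leaves implicit.
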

\begin{proof}
We recall from Appendix \ref{sec:three_perspectives} that
\[
\mathscr{E}_S(h_{A,B}, h_{A,B}) = \capac{A}{\Omega \backslash B} \leq \mathscr{E}_S (u,u)
\]
for any $u$ with $u(\partial A) = 1,u(\partial B) = 0$.  Thus, to prove an upper bound it suffices to find any such function for which we can calculate $\mathscr{E}(u,u)$.  To this end, consider
\[
u_c (x) \triangleq \left\{ \begin{array}{ll}
(1 - c) h_{A, \widetilde{A}^c} (x) + c & \tmop{if} x \in \tilde{A} \\
c (1 - h_{B, \widetilde{B}^c} (x)) & \tmop{if} x \in \tilde{B} \\
c & \mbox{otherwise}
\end{array} \right. 
\]
These functions are well-suited to giving us upper bounds on $\mathscr{E}_S (h_{A,B}, h_{A,B})$.  Indeed:
\begin{itemize}
\item $u_c(\partial A)=1,u_c(\partial B)=0$.  In fact, $u_c$ takes a constant value $c$ outside of $\tilde A,\tilde B$, drops smoothly in $\tilde B$ to achieve 0 on $\partial B$, and rises smoothly in $\tilde A$ to achieve 1 on $\partial \tilde A$.  
\item Noting that $u_c$ is written as a piecewise combination of hitting probability functions, we see that its Dirichlet form can be calculated in terms of capacities on local regions: $\mathscr{E}_S(u_c,u_c) = (1 - c)^2 \tmop{cap} (A, \tilde{A}) + c^2 \tmop{cap} (B, \tilde{B})$.
\end{itemize}
Thus the $u_c$ functions give us a practical way to calculate upper bounds: 
\[
\mathscr{E}_S(h_{A,B}, h_{A,B})\leq (1 - c)^2 \tmop{cap} (A, \tilde{A}) + c^2 \tmop{cap} (B, \tilde{B})
\]
This inequality holds for any value of $c$.  To get the best bound, we can take derivatives to minimize the right hand side with respect to $c$.  The result is 
\[
c^* = \frac{\capac{A}{\tilde A}}{\capac{A}{\tilde A}+\capac{B}{\tilde B}}
\]
Plugging this into the previous equation, we obtain our final result.
\end{proof}


\begin{lemma}  Let $S=\Omega \backslash (A\cup B)$.  Let $m = \frac{1}{2} (\sup_{x \notin \tilde A,\tilde B} h_{A,B} (x) + \inf_{x \notin \tilde A,\tilde B} (h_{A,B} (x)))$.  The Dirichlet form of $h_{A,B}$ can be lower-bounded in terms of $m$ and the capacities:
\[\mathscr{E}_S (h_{A,B}, h_{A,B}) \geq  
   \left( 1 - m - \frac{\varepsilon}{2} \right)^2 \tmop{cap} (A,\tilde{A}) \indicatorf{m\leq 1-\frac{\varepsilon}{2}} + 
   \left( m- \frac{\varepsilon}{2} \right)^2 \tmop{cap} (B, \tilde{B})\indicatorf{m\geq \frac{\varepsilon}{2}} 
\]
\end{lemma}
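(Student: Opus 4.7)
The plan is to get the lower bound by constructing, on each of the two nested annular regions $\tilde A\setminus A$ and $\tilde B\setminus B$, a rescaled and truncated version of $h_{A,B}$ that is admissible for the variational problem defining the corresponding capacity. By the Dirichlet-form characterization of capacity recalled in Appendix \ref{sec:three_perspectives}, any function $u\in\mathcal H^1(\tilde A\setminus A)$ with $u=1$ on $\partial A$ and $u=0$ on $\partial\tilde A$ satisfies $\mathscr{E}_{\tilde A\setminus A}(u,u)\geq \capac{A}{\tilde A}$, and symmetrically for $\tilde B\setminus B$.

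On $\tilde A\setminus A$, note that $h_{A,B}=1$ on $\partial A$, while the $\varepsilon$-flatness hypothesis forces $h_{A,B}\leq m+\varepsilon/2$ on $\partial\tilde A\subset\Omega\setminus(\tilde A\cup\tilde B)$. Assuming $m\leq 1-\varepsilon/2$ so that the normalization is positive, I would take
\[
v_A(x)\;\triangleq\;\frac{(h_{A,B}(x)-(m+\varepsilon/2))^+}{1-(m+\varepsilon/2)},
\]
which equals $1$ on $\partial A$ and $0$ on $\partial\tilde A$. Since $u\mapsto u^+$ is a standard Dirichlet-form contraction (applied here to $h_{A,B}-(m+\varepsilon/2)$), and since the constant shift contributes no gradient,
\[
\mathscr{E}_{\tilde A\setminus A}(v_A,v_A)\;\leq\;\frac{1}{(1-m-\varepsilon/2)^2}\,\mathscr{E}_{\tilde A\setminus A}(h_{A,B},h_{A,B}).
\]
Combining this with the variational lower bound $\mathscr{E}_{\tilde A\setminus A}(v_A,v_A)\geq\capac{A}{\tilde A}$ and rearranging yields
\[
\mathscr{E}_{\tilde A\setminus A}(h_{A,B},h_{A,B})\;\geq\;(1-m-\varepsilon/2)^2\,\capac{A}{\tilde A}.
\]

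The same idea on $\tilde B\setminus B$, where $h_{A,B}=0$ on $\partial B$ and $h_{A,B}\geq m-\varepsilon/2$ on $\partial\tilde B$, uses (for $m\geq\varepsilon/2$) the test function $v_B\triangleq ((m-\varepsilon/2)-h_{A,B})^+/(m-\varepsilon/2)$, producing the symmetric bound $\mathscr{E}_{\tilde B\setminus B}(h_{A,B},h_{A,B})\geq(m-\varepsilon/2)^2\,\capac{B}{\tilde B}$. Because $\tilde A$ and $\tilde B$ are disjoint subsets of $S$ and the integrand $\|\sigma\nabla h_{A,B}\|^2 e^{-U}$ is non-negative, summing the two contributions gives $\mathscr{E}_S(h_{A,B},h_{A,B})\geq \mathscr{E}_{\tilde A\setminus A}(h_{A,B},h_{A,B})+\mathscr{E}_{\tilde B\setminus B}(h_{A,B},h_{A,B})$, which is precisely the claimed inequality once the two indicators $\indicatorf{m\leq 1-\varepsilon/2}$ and $\indicatorf{m\geq\varepsilon/2}$ are inserted to disable whichever term has an invalid rescaling.

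The one non-cosmetic step is the contraction inequality $\mathscr{E}(u^+,u^+)\leq\mathscr{E}(u,u)$; this is the main obstacle in the sense of being the only non-elementary ingredient, but it is a classical fact for Dirichlet forms associated with uniformly elliptic operators under our smoothness assumptions on $\sigma,b,U$ (the gradient of $u^+$ equals $\nabla u\cdot\indicatorf{u>0}$ almost everywhere for $u\in\mathcal H^1$, whence the bound follows pointwise). Everything else reduces to verifying boundary values of $v_A$ and $v_B$ and adding two non-negative integrals on disjoint domains.
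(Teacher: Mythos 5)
Your proof is correct and follows essentially the same route as the paper: localize the Dirichlet form to the two annular regions, affinely rescale $h_{A,B}$, and handle the slack at $\partial\tilde A$ (resp.\ $\partial\tilde B$) by a clamping/contraction argument before invoking the variational characterization of capacity. The only difference is presentational --- you fold the truncation $(\cdot)^+$ directly into the test function and use the contraction property in one step, whereas the paper keeps the test function $u_A$ affine, observes that it satisfies only inequality boundary conditions, and delegates the clamping to a separate lemma (Lemma \ref{lem:inequalityboundaryvar}), whose proof uses exactly the same $\nabla\,\mathtt{clamp}(u) = \indicatorf{u\in[0,1]}\nabla u$ device.
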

\begin{proof}
Recall that $\mathscr{E}_S(h_{A,B}, h_{A,B})$ can be expressed as an integral over $S$.  We decompose this into three integrals: one over $\tilde A$, one over $\Omega \backslash \tilde A,\tilde B$, and one over $\tilde B$.  
\[
\mathscr{E}_S (h_{A,B}, h_{A,B}) = \int_{\tilde A \backslash A} \Vert \sigma \nabla h_{A,B}\Vert^2 \rho(dx)
                                 + \int_{\tilde B \backslash B} \Vert \sigma \nabla h_{A,B}\Vert^2 \rho(dx)
                                 + \int_{\Omega \backslash \tilde A,\tilde B} \Vert \sigma \nabla h_{A,B}\Vert^2 \rho(dx)
\]
Since the integrand is always positive, we can get a lower bound by simply ignoring the integral over $\Omega \backslash \tilde A,\tilde B$ and focusing on the integrals over $\tilde A,\tilde B$.  The $\tilde A,\tilde B$ integrals can be lower-bounded using capacities.  

For example, let us focus on the $A$ integral.  There are two different possibilities we must consider:
\begin{itemize}
\item If $m>1-\varepsilon/2$ we will simply note that the integral over the $\tilde A$ region is non-negative.  
\item If $m\leq 1-\varepsilon/2$, then we define
    \[
    u_A (x) \triangleq \frac{h_{A,B} (x) - m - \frac{\varepsilon}{2}}{1 - m - \frac{\varepsilon}{2}}
    \]
    Note that  $h_{A,B}(x)=1$ for $x \in \partial A$ and the $\varepsilon$-flatness condition shows that $h_{A,B}(x) \leq m + \frac {\varepsilon}{2}$ for $x \in \partial \tilde A$.  Thus $u_A(\partial A)\geq1,u_A(\partial \tilde A)\leq 0$.  Lemma \ref{lem:inequalityboundaryvar} from Appendix \ref{sec:inequalityboundaryvar} may then be applied to yield that $\mathscr{E}_{\tilde A \backslash A}(u_A,u_A) \geq \capac{A}{\tilde A}$.  We can thus obtain the bound
    \begin{align*}
    \int_{\tilde A \backslash A} \Vert \sigma \nabla h_{A,B}\Vert^2 \rho(dx) 
	 &= \left(1 - m - \frac{\varepsilon}{2}\right)^2 \mathscr{E}_{\tilde{A} \backslash A}(u_A,u_A)\\
         &\geq \left(1 - m - \frac{\varepsilon}{2}\right)^2 \capac{A}{\tilde A}
    \end{align*}
\end{itemize}
Putting these two possibilities together, we obtain
\[
\int_{\tilde A \backslash A} \Vert \sigma \nabla h_{A,B}\Vert^2 \rho(dx) \geq \left(1 - m - \frac{\varepsilon}{2}\right)^2 \capac{A}{\tilde A}\indicatorf{m\leq 1-\frac{\varepsilon}{2}}
\]
Applying the same ideas to the integral over $\tilde B$, we obtain our result.
\end{proof}


We are now in a position to prove Theorem \ref{thm:main_thm} from the main text:

\begingroup
\def\thetheorem{\ref{thm:main_thm}}
\begin{theorem}
Assume that  $h_{A,B}(x)$ is $\varepsilon$-flat relative to 
$\Omega \backslash (\tilde A \cup \tilde B)$.
Then the first-passage probabilities can be well-approximated in terms of the target capacities:
\[ \sup_{x \notin \tilde A,\tilde B} \left| h_{A,B} (x) - \frac{\capac{A}{\tilde A}}{\capac{A}{\tilde A}+\capac{B}{\tilde B}} \right| \leqslant \varepsilon + \sqrt{\varepsilon/2} \]
\end{theorem}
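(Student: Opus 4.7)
The plan is to use the two preceding lemmas to sandwich the Dirichlet form $\mathscr{E}_S(h_{A,B},h_{A,B})$, and then exploit the algebraic structure of the sandwich to pin down the midpoint
\[
m = \tfrac{1}{2}\bigl(\sup_{x\notin\tilde A\cup\tilde B}h_{A,B}(x)+\inf_{x\notin\tilde A\cup\tilde B}h_{A,B}(x)\bigr),
\]
after which $\varepsilon$-flatness converts a bound on $m$ into the desired uniform bound on $h_{A,B}(x)$.

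Write $\alpha=\capac{A}{\tilde A}$, $\beta=\capac{B}{\tilde B}$, and $p=\alpha/(\alpha+\beta)$. First I treat the main case $\varepsilon/2\le m\le 1-\varepsilon/2$, where both indicator functions in the lower-bound lemma are active. The two lemmas then give
\[
(1-m-\tfrac{\varepsilon}{2})^{2}\alpha+(m-\tfrac{\varepsilon}{2})^{2}\beta \;\le\; \mathscr{E}_S(h_{A,B},h_{A,B}) \;\le\; \frac{\alpha\beta}{\alpha+\beta}.
\]
The left-hand side is a convex quadratic in $m$. A one-line minimization places its minimizer at $m^{*}=(1-\varepsilon)p+\varepsilon/2$ and gives minimum value $(1-\varepsilon)^{2}\alpha\beta/(\alpha+\beta)$, so by Taylor expansion about $m^{*}$,
\[
(1-m-\tfrac{\varepsilon}{2})^{2}\alpha+(m-\tfrac{\varepsilon}{2})^{2}\beta=(1-\varepsilon)^{2}\frac{\alpha\beta}{\alpha+\beta}+(\alpha+\beta)(m-m^{*})^{2}.
\]
Combining with the upper bound and using the AM--GM inequality $\alpha\beta/(\alpha+\beta)^{2}\le 1/4$ yields
\[
(m-m^{*})^{2}\le (2\varepsilon-\varepsilon^{2})\frac{\alpha\beta}{(\alpha+\beta)^{2}}\le \frac{\varepsilon}{2},
\]
so $|m-m^{*}|\le\sqrt{\varepsilon/2}$. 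Since $|m^{*}-p|=\varepsilon|p-\tfrac{1}{2}|\le\varepsilon/2$, the triangle inequality gives $|m-p|\le\sqrt{\varepsilon/2}+\varepsilon/2$, and one more application together with the $\varepsilon$-flatness bound $|h_{A,B}(x)-m|\le\varepsilon/2$ produces the claimed inequality $|h_{A,B}(x)-p|\le\varepsilon+\sqrt{\varepsilon/2}$.

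I expect the main obstacle to be of a housekeeping nature rather than a conceptual one: namely, handling the two edge regimes $m<\varepsilon/2$ and $m>1-\varepsilon/2$, in which only one of the two indicators in the lower-bound lemma survives. In those regimes the quadratic-expansion argument degenerates, and I would instead use the surviving single term: e.g.\ when $m<\varepsilon/2$, $\varepsilon$-flatness forces $h_{A,B}(x)\le\varepsilon$ uniformly, while the surviving bound $(1-m-\varepsilon/2)^{2}\alpha\le\alpha\beta/(\alpha+\beta)$ combined with $1-m-\varepsilon/2\ge 1-\varepsilon$ forces $p\le 2\varepsilon-\varepsilon^{2}$, from which $|h_{A,B}(x)-p|\le 2\varepsilon$ follows directly, and this is absorbed into $\varepsilon+\sqrt{\varepsilon/2}$ whenever $\varepsilon\le 1/2$; the symmetric case is handled identically, and for $\varepsilon>1/2$ the stated bound exceeds $1$ and is vacuous. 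No new ideas are required for these cases, but the bookkeeping must be done carefully to guarantee uniformity over $x\notin\tilde A\cup\tilde B$.
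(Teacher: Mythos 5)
Your proof is correct and follows essentially the same route as the paper: the same two lemmas sandwich $\mathscr{E}_S(h_{A,B},h_{A,B})$, the same midpoint $m$ is pinned down, and the same three-case split on $m$ is made. The only cosmetic difference is in the main case $m\in(\varepsilon/2,1-\varepsilon/2)$, where the paper applies the quadratic formula directly to the inequality in $m$ whereas you rewrite the quadratic as its minimum plus $(\alpha+\beta)(m-m^*)^2$ and read off the same root interval; the two are algebraically identical, though your version is arguably a bit more transparent about why AM--GM enters.
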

\addtocounter{theorem}{-1}
\endgroup

\begin{proof}

To simplify notation, let $\capA=\capac{A}{\tilde A}$ and $\capB=\capac{B}{\tilde B}$.  Applying the previous two lemmas together, we obtain the inequality
\[
\frac{\capA \capB}{\capA +\capB}
\geq \mathscr{E}(h_{A,B},h_{A,B}) \geq
\left( 1 - m - \frac{\varepsilon}{2} \right)^2 \capA \indicatorf{m\leq 1-\frac{\varepsilon}{2}} + 
   \left( m- \frac{\varepsilon}{2} \right)^2 \capB \indicatorf{m\geq \frac{\varepsilon}{2}} 
\]
where $m = \frac{1}{2} (\sup_{x \notin \tilde A,\tilde B} h_{A,B} (x) + \inf_{x \notin \tilde A,\tilde B} (h_{A,B} (x)))$.  In analyzing this inequality, there are three possibilities to consider.  
\begin{itemize}
    \item If $m \in (\varepsilon/2,1-\varepsilon/2)$, the quadratic formula yields
    \begin{eqnarray*}
    m & \geqslant & \frac{\capA}{\capA + \capB} + \frac{
    \frac{\varepsilon}{2}(\capB - \capA) - 
    \sqrt{\capA \capB \varepsilon( 2 - \varepsilon)}}{\capA+\capB}\\
    m & \leqslant & \frac{\capA}{\capA + \capB} + \frac{
    \frac{\varepsilon}{2}(\capB - \capA) +
    \sqrt{\capA \capB \varepsilon( 2 - \varepsilon)}}{\capA+\capB}
    \end{eqnarray*}
    Applying $\left|\frac{\capB - \capA}{\capA + \capB}\right| \leq 1$ and the fact that the geometric mean $\sqrt{\capA \capB}$ never exceeds the arithmetic mean $(\capA+\capB) / 2$, it follows that
    \[
    \left|m -  \frac{\capA}{\capA + \capB}\right| \leq \frac{\varepsilon +\sqrt{\varepsilon (2-\varepsilon)}}{2}
    \]
    Applying the fact that $m$ was designed so that $|h_{A,B}(x) - m| <\varepsilon/2$ for all $x \notin \tilde A,\tilde B$, we obtain
    \[
    \left|h_{A,B}(x) -  \frac{\capA}{\capA + \capB}\right| \leq \frac{2\varepsilon +\sqrt{\varepsilon (2-\varepsilon)}}{2}
    \]

    \item If $m<\varepsilon/2$, our equations become 
    \[
    \frac{\cancel{\capA} \capB}{\capA +\capB}
    \geq
    \left( 1-m- \frac{\varepsilon}{2} \right)^2 \cancel{\capA} 
    \]
    Our assumption that $m \leq \varepsilon/2$ indicates that $(1- m- \varepsilon/2)^2 \geq (1-\varepsilon)^2$, thus in fact we have 
    \[
    \frac{\capB}{\capA +\capB}
    \geq
    \left( 1- \varepsilon \right)^2  = 1 + \varepsilon^2 - 2\varepsilon
    \]
    which means that $\capA/(\capA +\capB)
    \leq
    2\varepsilon - \varepsilon^2 \leq 2\varepsilon$.  
    Thus we assumed $m\in[0,\varepsilon/2]$ and showed that $\capA/(\capA+\capB)\in[0,2\varepsilon-\varepsilon^2]$, so it follows that 
    \[
    \left|m -  \frac{\capA}{\capA + \capB}\right| \leq 2\varepsilon -\varepsilon^2
    \]
    and so for any $x\notin \tilde A,\tilde B$, we have 
    \[
    \left|h_{A,B}(x) -  \frac{\capA}{\capA + \capB}\right| \leq 2.5\varepsilon -\varepsilon^2
    \]
    
    \item If $m>1-\varepsilon/2$, the same bound can be achieved by arguments which are symmetric to those employed in $m<\varepsilon/2$:
    \[
    \left|h_{A,B}(x) -  \frac{\capA}{\capA + \capB}\right| \leq 2.5\varepsilon -\varepsilon^2
    \]
\end{itemize}
Our final result is found by noting that all these bounds are upper-bounded by $\varepsilon + \sqrt{\varepsilon/2}$.
\end{proof}


\section{Proof of Proposition \ref{prop:flux}}
\label{sec:proof_proposition}

We first establish a lemma, using Green's first identity and some properties of the stationary SDE (\ref{equ:general_sde}), under the reversibility conditions (\ref{eqn:reversibility}), relative to $U$: 

\begin{lemma}  \label{lem:greenident}Fix some $S \subset \Omega$ with smooth boundary.  Then for any smooth function $g$ that satisfies $\mathcal{L}g = 0$ and smooth function $f$,
\begin{equation*}
\int_{S} \nabla f(x)^T a(x) \nabla g(x) e^{-U(x)}dx = \int_{\partial S} f(x) \normal(x)^T a(x) \nabla g(x) e^{-U(x)}\hausdorffmeasure
\end{equation*}
where $\normal$ are the normal vectors facing out of the set $S$ and $\hausdorffmeasure$ is the integral with respect to the $(n-1)$-dimensional Hausdorff measure and 
\[
(\mathcal{L}f)(x) \triangleq \sum_i b_i(x) \frac{\partial f (x)}{\partial x_i} + 
    \frac{1}{2} \sum_{ij}a_{ij}(x)\frac{\partial^2 f(x)}{\partial x_i \partial x_j} 
\]
\end{lemma}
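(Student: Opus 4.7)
The plan is to recognize the identity as a weighted Green's identity and obtain it from the divergence theorem applied to the vector field $F(x) = f(x)\,a(x)\nabla g(x)\,e^{-U(x)}$. The role of the hypothesis $\mathcal{L}g=0$ and of the reversibility conditions (\ref{eqn:reversibility}) will be to kill the ``bulk'' term produced when differentiating $F$, leaving only the mixed gradient term $(\nabla f)^T a \nabla g\,e^{-U}$ inside $S$.

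Concretely, I would proceed in three steps. First, apply the divergence theorem on $S$:
\begin{equation*}
\int_S \nabla \cdot F(x)\,dx = \int_{\partial S} f(x)\,\normal(x)^T a(x)\nabla g(x)\,e^{-U(x)}\,\hausdorffmeasure.
\end{equation*}
Second, expand the divergence by the product rule into two pieces,
\begin{equation*}
\nabla \cdot F = (\nabla f)^T a\nabla g\,e^{-U} + f\,\nabla\cdot\bigl(a\nabla g\,e^{-U}\bigr),
\end{equation*}
so the lemma will follow once the second piece is shown to vanish. Third, compute $\nabla\cdot(a\nabla g\,e^{-U})$ explicitly, using symmetry of $a$ to reindex:
\begin{equation*}
\nabla\cdot(a\nabla g\,e^{-U}) = e^{-U}\Bigl[\sum_{ij}(\partial_i a_{ij})(\partial_j g) + \sum_{ij} a_{ij}\partial_i\partial_j g - \sum_{ij} a_{ij}(\partial_j g)(\partial_i U)\Bigr].
\end{equation*}
The reversibility identity $b_i = \tfrac{1}{2}\sum_j \partial_j a_{ij} - \tfrac{1}{2}\sum_j a_{ij}\partial_j U$ lets me collapse the first and third inner sums into $2\sum_i b_i\partial_i g$, so the bracket equals $2\mathcal{L}g$. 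Thus $\nabla\cdot(a\nabla g\,e^{-U}) = 2e^{-U}\mathcal{L}g \equiv 0$, and the lemma drops out.

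The only genuine calculation is the identification of the bracketed expression with $2\mathcal{L}g$, which I expect to be the main (though still routine) obstacle: it requires carefully exploiting the symmetry $a_{ij}=a_{ji}$ to rewrite $\sum_{ij}(\partial_i a_{ij})(\partial_j g)$ as $\sum_i(\partial_i g)\sum_j \partial_j a_{ij}$ and likewise for the $\nabla U$ term, before invoking (\ref{eqn:reversibility}). Smoothness of $g$ and $f$ and of $\partial S$ justify both the pointwise product-rule computation and the application of the divergence theorem, so no further regularity work is needed.
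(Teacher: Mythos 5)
Your proposal is correct and is essentially the paper's own argument: the paper invokes Green's first identity (which is precisely your divergence theorem plus product rule on $f\,a\nabla g\,e^{-U}$) and then verifies $\nabla\cdot(a\nabla g\,e^{-U}) = 2e^{-U}\mathcal{L}g = 0$ using the reversibility condition, exactly as you do. The symmetry-reindexing step you flag as the ``main obstacle'' is the same bookkeeping the paper elides with ``it's not hard to verify.''
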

\begin{proof}
First, we apply Green's first identity to get
\begin{align*}
	 &\int_{S} \nabla f^T a \nabla g e^{-U} dx\\
	=&\int_{\partial S}f \normal^T a \nabla g e^{-U} \hausdorffmeasure - \int_{S} f \nabla \cdot (a \nabla g e^{-U}) dx
\end{align*}
where
\begin{equation*}
	\nabla \cdot (a \nabla g e^{-U}) = \sum_{i}\frac{\partial}{\partial x_i}\left[\sum_{j}e^{-U}a_{i j}\frac{\partial g}{\partial x_j}\right]
\end{equation*}
Next, using the reversibility constraint on $b$ from Equation (\ref{eqn:reversibility}), it's not hard to verify that 
\begin{equation*}
	\nabla \cdot (a \nabla g e^{-U}) = 2e^{-U}\mathcal{L} g = 0
\end{equation*}
This gives us the desired result.
\end{proof}

\def\theproposition{\ref{prop:flux}}
\begin{proposition}
For any regions $G$ and $\tilde{G}$ having smooth boundaries and such that $A\subset G \subset \tilde G \subset \tilde A$, $\capac{A,\tA}$ can be expressed as a flux leaving $\tilde G \backslash G$:
\[
\ensuremath{\operatorname{cap}} (A, \tilde{A}) = \int_{\partial (\tilde G \backslash G)}  h_{A, \tilde{A}^c} (x)   \normal(x)^T a (x) \nabla h_{G, \tilde{G}^c} (x)e^{- U (x)} \hausdorffmeasure
\]
where $a(x)=\sigma(x)\sigma(x)^T$ is the diffusion matrix, $\hausdorffmeasure$ is the $(n-1)$-dimensional Hausdorff measure, and $\normal$ represents the outward-facing (relative to $\tilde G \backslash G$) normal vector on $\partial (\tilde G \backslash G)$.
\end{proposition}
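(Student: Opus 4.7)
The plan is to chain together applications of Lemma \ref{lem:greenident}, exploiting that $h_{A,\tilde A^c}$ and $h_{G,\tilde G^c}$ are $\mathcal L$-harmonic on their respective annuli (the elliptic representation from Equation (\ref{eq:pde})) and that $a=\sigma\sigma^{T}$ is symmetric. Let $I$ denote the surface integral on the right-hand side of the Proposition.

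First I would convert $I$ from a flux of $\nabla h_{G,\tilde G^c}$ across $\partial(\tilde G\backslash G)$ into the mirror-image flux of $\nabla h_{A,\tilde A^c}$ across just $\partial G$. Applying Lemma \ref{lem:greenident} on $S=\tilde G\backslash G$ with $f=h_{A,\tilde A^c}$ and $g=h_{G,\tilde G^c}$ (using $\mathcal L h_{G,\tilde G^c}=0$ there) rewrites $I$ as the bilinear volume integral $\int_{\tilde G\backslash G}\nabla h_{A,\tilde A^c}^{T}a\nabla h_{G,\tilde G^c}\,e^{-U}\,dx$. Applying the Lemma again on the same $S$ with the roles of $f$ and $g$ swapped (legal because $\mathcal L h_{A,\tilde A^c}=0$ on the larger annulus $\tilde A\backslash A\supset\tilde G\backslash G$), and using symmetry of $a$ to identify the two volume integrals, pushes the integral back out as $\int_{\partial(\tilde G\backslash G)} h_{G,\tilde G^c}\normal^{T}a\nabla h_{A,\tilde A^c}\,e^{-U}\,\hausdorffmeasure$. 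Because $h_{G,\tilde G^c}\equiv 1$ on $\partial G$ and $\equiv 0$ on $\partial\tilde G$, only the $\partial G$ piece survives, so $I=\int_{\partial G}\normal^{T}a\nabla h_{A,\tilde A^c}\,e^{-U}\,\hausdorffmeasure$ with $\normal$ outward from $\tilde G\backslash G$ (hence pointing into $G$).

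Next I would migrate this flux from $\partial G$ inward to $\partial A$. Applying Lemma \ref{lem:greenident} on $S=G\backslash A$ with the constant $f\equiv 1$ and $g=h_{A,\tilde A^c}$ makes the volume integral vanish (since $\nabla f=0$), yielding the balance that the flux of $e^{-U}a\nabla h_{A,\tilde A^c}$ outward through $\partial G$ plus the flux outward through $\partial A$, both normals taken relative to $G\backslash A$, is zero. Matching signs against the previous step, this rewrites $I=\int_{\partial A}\normal^{T}a\nabla h_{A,\tilde A^c}\,e^{-U}\,\hausdorffmeasure$ with $\normal$ pointing into $A$. A final application of Lemma \ref{lem:greenident} on $\tilde A\backslash A$ with $f=g=h_{A,\tilde A^c}$ turns the defining volume integral $\capac{A}{\tilde A}=\int_{\tilde A\backslash A}\nabla h_{A,\tilde A^c}^{T}a\nabla h_{A,\tilde A^c}\,e^{-U}\,dx$ into precisely the same boundary flux (the $\partial\tilde A$ contribution drops because $h_{A,\tilde A^c}=0$ there), closing the chain and giving $I=\capac{A}{\tilde A}$.

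The whole argument is essentially bookkeeping, and the one pitfall is the orientation convention: each interface $\partial A$, $\partial G$, $\partial\tilde G$ appears in several Green's-identity calculations with opposite outward normals depending on which annular domain is in view. Reading ``outward from the current domain of integration'' consistently at each invocation resolves this. Smoothness of the boundaries and interior regularity of the two hitting-probability functions on their respective annuli (which is all that classical Green's identity needs) is everything required beyond the Lemma itself.
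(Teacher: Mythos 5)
Your proposal is correct and matches the paper's own proof essentially step for step: both hinge exclusively on Lemma \ref{lem:greenident}, the harmonicity of $h_{A,\tilde A^c}$ and $h_{G,\tilde G^c}$, the symmetry of $a$, and the boundary values $h_{G,\tilde G^c}\equiv 1$ on $\partial G$, $\equiv 0$ on $\partial\tilde G$ (resp.\ $h_{A,\tilde A^c}$ on $\partial A$, $\partial\tilde A$), passing through the same intermediate identity $\capac{A}{\tilde A}=\int_{\partial G}\normal^Ta\nabla h_{A,\tilde A^c}e^{-U}\hausdorffmeasure$. The only difference is the direction of travel — the paper starts from the volume definition of $\capac{A}{\tilde A}$ and works outward to the stated surface integral, whereas you start from the surface integral and unwind it back to the capacity — which is a purely cosmetic reordering of the same chain.
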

\addtocounter{proposition}{-1}
\begin{proof}
First recall that $\mathcal{L} h_{A, \tilde{A}^c} = 0$ and
\begin{equation*}
	\tmop{cap}(A,\tilde A) = \int_{\tilde A \backslash A} \nabla h_{A, \tilde{A}^c} (x)^T a(x) \nabla h_{A, \tilde{A}^c} (x) e^{-U(x)} dx
\end{equation*}
Together with Lemma \ref{lem:greenident}, this yields that
\begin{equation}\label{eq:lemmaproof}
\tmop{cap}(A,\tilde A)  = \int_{\partial (\tilde A \backslash A)}  h_{A, \tilde{A}^c} \normal^T a  \nabla h_{A, \tilde{A}^c} e^{- U } \hausdorffmeasure = -\int_{\partial A}  \normal^T a  \nabla h_{A, \tilde{A}^c} e^{- U } \hausdorffmeasure
\end{equation}
where in the last step we used $h_{A,\tilde A^c}(x)=1,x\in \partial A,h_{A,\tilde A^c}(x)=0,x\in \partial \tilde A$.  Also, note that the normal vector on the right hand side is pointing {\em out of} the set $A$, as is our convention. Hence then negative sign.

Next we apply Lemma \ref{lem:greenident} again to get
\begin{align*}
	0 = \int_{G \backslash A} \cancel{\nabla (1)} a \nabla h_{A,\tilde A^c}^T e^{- U } dx = \int_{\partial (G\backslash A)} \normal^T a  \nabla h_{A,\tilde A^c} e^{- U } \hausdorffmeasure
\end{align*}
Combining this with Equation \ref{eq:lemmaproof} gives us
\begin{equation}
\label{eqn:cor}
\tmop{cap}(A,\tilde A)  = -\int_{\partial A}  \normal^T a  \nabla h_{A, \tilde{A}^c} e^{- U } \hausdorffmeasure = \int_{\partial G}  \normal^T a  \nabla h_{A, \tilde{A}^c} e^{- U } \hausdorffmeasure
\end{equation}
Using the facts that $h_{G,\tilde G^c}(x)=1,x\in \partial G,h_{G,\tilde G^c}(x)=0,x\in \partial \tilde G$, and $\mathcal{L} h_{G, \tilde G^c} = 0$, we apply Lemma \ref{lem:greenident} two more times to obtain 
\begin{align*}
	\tmop{cap}(A,\tilde A)  &= \int_{\partial G}  \normal^T a  \nabla h_{A, \tilde{A}^c} e^{- U } \hausdorffmeasure = \int_{\partial G}  h_{G, \tilde{G}^c} \normal^T a   \nabla h_{A, \tilde{A}^c} e^{- U } \hausdorffmeasure\\
				&= \int_{\partial (\tilde G \backslash G)}  h_{G, \tilde{G}^c} \normal^T a   \nabla h_{A, \tilde{A}^c} e^{- U } \hausdorffmeasure 
				= \int_{\tilde G \backslash G}  \nabla h_{G, \tilde{G}^c} a \nabla h_{A, \tilde{A}^c} e^{- U } dx\\
				&= \int_{\partial (\tilde G \backslash G)}  h_{A, \tilde{A}^c}    \normal^T a  \nabla h_{G, \tilde{G}^c} e^{- U } \hausdorffmeasure
\end{align*}
\end{proof}

\begin{corollary*}
For any region $S$ having smooth boundary $\partial S$, and such that $A\subset S \subset \tilde A$, $\capac{A,\tA}$ can be expressed as a flux leaving $S$:
\begin{equation*}
\ensuremath{\operatorname{cap}} (A, \tilde{A}) = \int_{\partial S}   \normal(x)^T a (x) \nabla h_{A, \tilde{A}^c} (x)e^{- U (x)} \hausdorffmeasure
\end{equation*}
where $a$ and $\hausdorffmeasure$ are as defined in the Proposition, and  $\normal$ is the outward-facing normal on $\partial S$.
\end{corollary*}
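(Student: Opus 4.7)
The corollary is essentially the intermediate identity (\ref{eqn:cor}) that appears within the proof of Proposition \ref{prop:flux}, with $G$ renamed $S$, so my plan is to reproduce this identity as a self-contained consequence of Lemma \ref{lem:greenident} applied twice.

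First, I would apply Lemma \ref{lem:greenident} with $f = g = h_{A, \tilde A^c}$ on the annulus $\tilde A \backslash A$. The hypotheses are met because $h_{A, \tilde A^c}$ is smooth and satisfies $\mathcal{L} h_{A, \tilde A^c} = 0$ on $\tilde A \backslash A$ by the elliptic characterization in Equation (\ref{eq:pde}). The left-hand side of the lemma is exactly $\tmop{cap}(A, \tilde A)$ by the definition of capacity. On the right-hand side, the boundary decomposes as $\partial A \sqcup \partial \tilde A$, and the boundary values $h_{A, \tilde A^c} = 1$ on $\partial A$ and $h_{A, \tilde A^c} = 0$ on $\partial \tilde A$ collapse it to a single flux integral over $\partial A$.

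Next, I would apply Lemma \ref{lem:greenident} a second time, now with $f \equiv 1$ and $g = h_{A, \tilde A^c}$ on the annulus $S \backslash A$ (both smoothness and $\mathcal{L}g = 0$ remain satisfied). The left-hand side vanishes because $\nabla(1) \equiv 0$, while the right-hand side splits along $\partial(S \backslash A) = \partial A \sqcup \partial S$. This yields that the two fluxes of $a \nabla h_{A, \tilde A^c} e^{-U}$, one over $\partial A$ and one over $\partial S$, agree up to the relative orientation of the normals. Combining this with the identity from the first step will give the claimed flux representation of $\tmop{cap}(A, \tilde A)$ over $\partial S$.

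The only delicate point will be bookkeeping of normal orientations: the normal pointing outward from $S \backslash A$ coincides with $\normal_S$ (outward from $S$) on $\partial S$ but is opposite to $\normal_A$ (outward from $A$) on $\partial A$, and so the sign conventions inherited from Lemma \ref{lem:greenident} must be tracked carefully when we glue the two identities together. Beyond this, the argument is routine: two applications of Green's identity plus the boundary values of $h_{A, \tilde A^c}$, with no new analytic machinery beyond what already underpins Proposition \ref{prop:flux}.
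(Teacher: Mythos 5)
Your proposal is correct and is exactly the paper's argument: the paper's one-line proof ("Put $G=S$ in Equation (\ref{eqn:cor})") relies on the intermediate identity (\ref{eqn:cor}), which is derived inside the proof of Proposition \ref{prop:flux} by precisely the two applications of Lemma \ref{lem:greenident} you describe — first with $f=g=h_{A,\tilde A^c}$ on $\tilde A\backslash A$, then with $f\equiv 1$, $g=h_{A,\tilde A^c}$ on $G\backslash A$ — followed by the same boundary-value and orientation bookkeeping. You have simply unpacked the cross-reference rather than found a different route.
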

\begin{proof}
Put $G=S$ in Equation (\ref{eqn:cor}).
\end{proof}

\section{Inequality Boundary Conditions for the Variational Form}
\label{sec:inequalityboundaryvar}

Recall that $h_{A,B}(x)$ may be defined variationally.  We have let 
\[
\mathscr{E}(f,g)\triangleq \int_\Omega \nabla f(x)^T a(x) \nabla g(x) \rho(dx)
\]
denote the ``Dirichlet Form.''  Let $\Omega \subset \mathbb{R}^n$ compact and open with smooth boundary.  Let $\mathscr L^2(\bar \Omega,\rho)$ denote the Hilbert space of functions on $\bar \Omega$ which are square-integrable with respect to a continuous positive measure $\rho(dx)=e^{-U}dx$.  Let $\mathcal{H}^1(\bar \Omega,\rho)=W^{1,2}(\bar \Omega,\rho) \subset \mathscr{L}^2(\bar \Omega,\rho)$ denote the corresponding once-weakly-differentiable Hilbert Sobolev space.  Let $A,B\subset \Omega$, open, disjoint, with smooth boundary, and define $\capac{A}{\Omega \backslash B} \in \mathbb{R}$ as the minimizing value of the problem
    \begin{align*}
    \min_{u \in \mathcal H^1} \quad & \mathscr{E}(u,u) \\
    \mbox{subject to} \quad & u(x)=1,x\in A \\
     & u(x)=0,x\in B
    \end{align*}

It is natural to consider an apparently different problem, where the equality boundary conditions are replaced with inequalities.  Here we show that it is not possible to get lower than $\capac{A}{\Omega \backslash B}$ by such a relaxation.

\begin{lemma} \label{lem:inequalityboundaryvar} Let $\tilde h$ satisfy $\tilde h(x)\geq 1$ on $A$ and $\tilde h(x)\leq 0$ on $B$.  Then $\mathscr{E}(\tilde h,\tilde h) \geq \capac{A}{\Omega \backslash B}$.
\end{lemma}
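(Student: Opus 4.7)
The plan is to prove the lemma by a standard truncation argument: starting from any $\tilde h$ satisfying the inequality boundary conditions, construct a candidate $h^\ast$ that satisfies the equality boundary conditions and whose Dirichlet form is no larger, and then invoke the variational characterization of $\capac{A}{\Omega \backslash B}$.

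First, I would define
\[
h^\ast(x) \;\triangleq\; \max\bigl(0,\ \min(1,\ \tilde h(x))\bigr).
\]
On $A$ we have $\tilde h \geq 1$, so $h^\ast = 1$; on $B$ we have $\tilde h \leq 0$, so $h^\ast = 0$. Thus $h^\ast$ is admissible for the equality-constrained variational problem defining $\capac{A}{\Omega \backslash B}$, provided $h^\ast \in \mathcal H^1(\bar\Omega,\rho)$.

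The next step is to verify that truncation preserves Sobolev regularity and does not increase the Dirichlet form. This is the standard Stampacchia chain rule for $W^{1,2}$ functions composed with a Lipschitz map: if $\tilde h \in \mathcal H^1(\bar\Omega,\rho)$ and $\varphi(t) = \max(0,\min(1,t))$, then $\varphi \circ \tilde h \in \mathcal H^1(\bar\Omega,\rho)$ and its weak gradient is
\[
\nabla h^\ast(x) \;=\; \nabla \tilde h(x)\,\mathbbm{1}_{\{0 < \tilde h(x) < 1\}}
\qquad \text{a.e.}
\]
Since $a(x)$ is positive semidefinite, this pointwise identity yields the pointwise inequality
\[
\nabla h^\ast(x)^T a(x)\, \nabla h^\ast(x) \;\leq\; \nabla \tilde h(x)^T a(x)\, \nabla \tilde h(x)
\]
for a.e.\ $x\in\Omega$. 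Integrating against $\rho(dx) = e^{-U(x)}dx$ gives $\mathscr{E}(h^\ast,h^\ast) \leq \mathscr{E}(\tilde h,\tilde h)$.

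Finally, by the variational characterization of the capacity,
\[
\capac{A}{\Omega\backslash B} \;\leq\; \mathscr{E}(h^\ast,h^\ast) \;\leq\; \mathscr{E}(\tilde h,\tilde h),
\]
which is the desired bound.

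The only non-routine step is the chain rule and the pointwise gradient identity for the truncation $h^\ast$; this is a classical fact about Sobolev functions (essentially the Stampacchia lemma applied to the Lipschitz map $\varphi(t)=\max(0,\min(1,t))$, noting that $\nabla \tilde h = 0$ a.e.\ on the level sets $\{\tilde h = 0\}$ and $\{\tilde h = 1\}$). Everything else is immediate from the definitions, so I do not anticipate any real obstacle beyond citing or briefly justifying this standard fact in the measure $\rho$, which, being equivalent to Lebesgue measure on the bounded set $\bar\Omega$, presents no additional difficulty.
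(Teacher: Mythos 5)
Your proof is correct and follows essentially the same route as the paper: both define the clamp/truncation $h^\ast=\max(0,\min(1,\tilde h))$, observe it is admissible for the equality-constrained problem, and note that truncation cannot increase the Dirichlet form because the gradient vanishes outside $\{0<\tilde h<1\}$. Your version is just a bit more explicit in citing the Stampacchia chain rule to justify the Sobolev regularity and gradient identity, which the paper leaves implicit.
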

\begin{proof}
Let $k=\mathtt{clamp}(\tilde h,0,1)$, i.e.
\[
k(x)=
\begin{cases}
\tilde h(x) & \tilde h(x)\in[0,1] \\
0 & \tilde h(x)\leq 0 \\
1 & \tilde h(x)\geq 1 \\
\end{cases}
\]  
Note that $k\in \mathcal{H}^1$ and satisfies the equality boundary conditions.  Thus, by definition, $\capac{A}{\Omega \backslash B} \leq \mathscr{E}(k,k)$.  This immediately yields our result:
\begin{align*}
\capac{A}{\Omega \backslash B} \leq \mathscr{E}(k,k) &= \int \Vert \sigma \nabla k \Vert^2 \rho(dx) \\
    &=\int_{x:\ \tilde h(x)\in[0,1]} \Vert \sigma \nabla \tilde h \Vert^2 \rho(dx) \\
    &\leq \int \Vert \sigma \nabla \tilde h \Vert^2 \rho(dx) = \mathscr{E}(\tilde h,\tilde h)
\end{align*}

\end{proof}

\section{Shell Method}
\label{sec:shell_method}

The algorithm for estimating local hitting probabilities is outlined as follows:

{\algorithm{\label{alg:hitting_prob_estimation}Estimating $h_{R,\tilde R^c}(x)$ for many values of $x$ on a shell $\partial S$

\begin{description}
    \item[Input] $R \subset S \subset \tilde{R} \subset \Omega$ and a stationary reversible  diffusion process $\{X_t\}_{t \geq 0}$ in $\Omega$ with invariant measure ${\mu}= e^{- U (x)} \mathrm{d} x$. We also require a series of subsets
  \[ R = S_0 \subset S_1 \subset \cdots \subset S_{m - 1}
     \subset S_m = S \subset S_{m + 1} \subset \cdots \subset S_n = \tilde{R} \]
     which indicate a kind of reaction coordinate.

    \item[Output] A collection of points $z_1,\cdots z_{N_p}$ on $\partial S$ sampled from the invariant measure ${\mu}= e^{- U (x)} \mathrm{d} x$ restricted on $\partial S$, along with estimates of $h_{R,\tilde R^c}(z_i)$ for each point.

\end{description}
     
\begin{enumerate}
\item Discretize the space.  
\begin{enumerate}
  \item Generate an ensemble of samples $z_1, \ldots, z_{N_p}$ on $\partial S$
  according to the invariant measure ${\mu}= e^{- U (x)} \mathrm{d} x$ restricted to $\partial S$. 
  
  \item Evolve the ensemble on $\partial S$, by repeatedly sampling an initial location from the uniform distribution on $\{ z_1, \ldots, z_{N_p} \}$ and carry out a local simulation following the dynamics of $\{X_t\}_{t \geq 0}$
  until the trajectory hits either $\partial S_{m - 1}$ or $\partial S_{m +
  1}$. Record the hitting locations on $\partial S_{m - 1}$ and $\partial S_{m + 1}$ until we have $N_p$ points on both $\partial S_{m - 1}$ and $\partial
  S_{m + 1}$. In most cases, the process is more likely to hit one of $S_{m - 1}, S_{m + 1}$ than the other, and we need to run more than $2N_p$ local simulations to get at least $N_p$ samples on both shells. We store the results of the redundant local simulations for future estimation of transition probabilities.
  
  \item Repeat the above process to sequentially evolve the ensembles on $\partial S_{m - 1}, \ldots,
  \partial S_2$ and on $\partial S_{m + 1}, \ldots, \partial S_{n - 2}$, to get $N_p$ samples on all of the intermediate shells $\partial S_1,
  \ldots, \partial S_{n - 1}$. Store the results of the redundant local simulations for future
  estimation of transition probabilities.
  
  \item For each one of the shells $\partial S_1, \ldots, \partial S_{n -
  1}$, cluster the $N_p$ samples on that shell into $N_b \,$ states. In our implementation, we use $k$-means, and represent the $N_b$ states by the $N_b$ centroids we get from the algorithm.
\end{enumerate}

The result of this step is a partitioning of each shell $\partial S_i$ into
$N_b$ regions, representing an adaptive discretization of the shells. For a point on a shell $\partial S_i$, we identify its corresponding discrete state by finding the nearest centroid.

\item Estimate the transition probabilities between these discrete states by running an additional $N_s$ local simulations for each one of the $N_b$ states on each shell. The result of this step is an estimate of the probability of transitioning from state $k$ on $\partial S_i$ to state $l$ on $\partial S_j$, which we denote by $P^{(i, j)}_{k, l}$, where $k, l \in \{ 1, \ldots, N_b \}$ and $i, j \in \{ 1, \ldots, n - 1 \}$ with $| i - j | = 1$.
  
\item Use the transition probabilities to get an estimate of the hitting probabilities for the $N_b$ states on
  $\partial S$.  In line with related works on Markov state models,\cite{Pande2010-yi, Chodera2014-bh, Husic2018-xp} we  approximate the continuous dynamics using closed-form calculations from the discrete Markov chain we have developed in the previous two steps.  In particular, we estimate overall hitting probabilities using the standard ``one-step analysis.'' For any $k \in \{ 1, \ldots, N_b \}$ and $i \in \{ 1, \ldots, n - 1 \}$, let $u^{(i)}_k$ denote the probability of hitting $\partial R = \partial S_0$ before hitting $\partial \tilde{R} = \partial S_n$ if we start the discretized process at state $k$ on $\partial S_i$.  We can calculate our object of interest by solving the matrix difference equation \[ u^{(i)} = P^{(i, i + 1)} u^{(i + 1)} + P^{(i, i - 1)} u^{(i - 1)}, i = 1, \ldots, n - 1 \] with boundary conditions $u^{(0)} = {\mathbf{1}}, u^{(n)} = {\mathbf{0} }$, where ${\mathbf{0}}$ and ${\mathbf{1}}$ are vectors of all $0$'s and $1$'s. This gives the estimated hitting probability for each discrete state.  We then estimate the hitting probability of each point $z_i$ by
  \begin{equation}
      \label{equ:hitprobest}
  h_{R,\tilde R^c}(z_i) = u^{(m)}_k, z_i \in \text{state }k\text{ on }\partial S\end{equation}
\end{enumerate}
}}

\section{Details on Energy Function} \label{sec:energy_function}

For the energy function, we hand-designed two different kinds of landscape: random well energy, which we use for the region around target $A$, and random crater energy, which we use for the region around target $B$. The basic components of these energy functions are the well component, given by
\begin{equation}
F_w(x|d_w, r) = -\frac{d_w}{r^4}(||x - x_A||_2^4 - 2r^2||x - x_A||_2^2) - d_w
\end{equation}
where $d_w$ gives the depth of the well; the crater component, given by
\begin{equation}
F_c(x|d_c, h, r) = \frac{d_c}{3b^2r^4 - r^6}(2||x - x_B||_2^2 - 3(b^2 + r^2)||x - x_B||_2^4 + 6b^2r^2||x - x_B||_2^2) - d_c
\end{equation}
where $d_c$ and $h$ give the depth and the height of the crater, respectively, and
\begin{equation}
b^2 = -\frac{1}{3d}(-3 d_c r^2 + C + \frac{\Delta_0}{C})
\end{equation}
with
\begin{align*}
C &= 3r^2 \sqrt[3]{d_c h (d_c + \sqrt{d_c (d_c + h)})} \\
\Delta_0 &= -9 d_c h r^4
\end{align*}
and finally a random component, given by
\begin{equation}
F_r(x|\mu, \sigma) = \sum_{i=1}^m\prod_{j=1}^d exp(-\frac{(x_j - \mu_{i j})^2}{2\sigma_{i j}^2})
\end{equation}
where $\mu=(\mu_{i j})_{m \times d}$ and $\sigma=(\sigma_{i j})_{m \times d}$, with $\mu_i=(\mu_{i 1}, \cdots, \mu_{i, d}), i=1, \cdots, m$ being the locations of $m$ Gaussian random bumps in the region around the targets, and $\sigma_{i j}, i=1, \cdots, m, j=1, \cdots, d$ gives the corresponding standard deviations.

To make sure the energy function is continuous, and the different components of the energy function are balanced, we introduce a mollifier, given by
\begin{equation}
F_m(x|x_0, r) = exp(-\frac{r}{r - ||x - x_0||_2^{20}})
\end{equation}
where $x_0=x_A, r=r_{\dot A}$ or $x_0=x_B, r=r_{\dot B}$, depending on which target we are working with, and a rescaling of the random component, which is given by $0.1 * d_w$ if we are perturbing the well component, and $0.1 * (d_c + h)$ if we are perturbing the crater component.

Intuitively, for the well component, we use a $4$th order polynomial to get a well-like energy landscape around the target that is continuous and differentiable at the boundary. Similarly, for the crater component, we use a $6$th order polynomial to get a crater-like energy landscape around the target that is also continuous and differentiable at the boundary. For the random component, we are essentially placing a number of Gaussian bumps around the target. And for the mollifier, we are designing the function such that it's almost exactly 1 around the target, until it comes to the outer boundary, when it transitions smoothly and swiftly to 0.
To summarize, given parameters $d_w, d_c, h$ and random bumps $\mu_{A}, \mu_{B}$ with $\mu^{A}_i\in \dot{A} \setminus A, i=1, \cdots, m_A, \mu^{B}_i \in \dot{B} \setminus B, i=1, \cdots, m_B$, and the corresponding standard deviations $\sigma^{A}, \sigma^{B}$ with $\sigma^{A}_{i j}, i=1,\cdots, m_A, j=1, \cdots, d, \sigma^{B}_{i j}, i=1, \cdots, m_B, j=1, \cdots, d$, the energy function we used in the experiments is given by
\begin{align}
F(x) &= F_w(x|d_w, r_{\dot A}) + 0.1 \times d_w \times F_m(x|x_A, r_{\dot A}) + F_r(x|\mu^{A}, \sigma^{A}), \forall x \in \dot{A} \setminus A \\
F(x) &= F_c(x|d_c, h, r_{\dot B}) + 0.1 \times (d_c + h) \times F_m(x|x_B, r_{\dot B}) + F_r(x|\mu^{B}, \sigma^{B}), \forall x \in \dot{B} \setminus B
\end{align}

In our actual experiments, we used
\begin{equation*}
d_w = 10.0, 
d_c = 6.0,
h = 1.0,
\sigma^{A}_{i j} = \sigma^{B}_{k, l} = 0.01, \forall i, j, k, l
\end{equation*} and
\begin{equation*}
\mu^{A} = \begin{pmatrix}%
0.512&0.583&-0.013&0.013&-0.001\\%
0.464&0.575&-0.001&0.019&-0.014\\%
0.503&0.611&-0.012&-0.024&0.023\\%
0.5&0.601&-0.024&0.034&0.011\\%
0.486&0.586&0.006&0.01&0.001\\%
0.489&0.588&-0.017&0.002&0.027\\%
0.493&0.585&0.015&-0.001&-0.032\\%
0.516&0.596&0.027&-0.026&0.022\\%
0.514&0.624&0.01&0.01&-0.002\\%
0.5&0.605&0.017&-0.016&0.004%
\end{pmatrix}, 
\mu^{B} = \begin{pmatrix}%
-0.696&-0.006&0.023&-0.041&0.019\\%
-0.731&0.021&-0.033&-0.014&0.017\\%
-0.694&-0.034&-0.009&0.031&0.019\\%
-0.666&-0.013&0.002&0.017&0.009\\%
-0.68&0.058&0.007&-0.011&-0.008\\%
-0.704&-0.022&0.034&0.003&0.026\\%
-0.714&-0.015&0.017&0.027&0.028\\%
-0.681&0.017&-0.046&-0.04&-0.002\\%
-0.648&-0.009&0.002&-0.012&-0.022\\%
-0.664&-0.04&0.05&-0.012&-0.002%
\end{pmatrix}
\end{equation*}

\bibliographystyle{plain}
\bibliography{refs.bib}

\begin{thebibliography}{10}

\bibitem{noauthor_undated-ti}
{NIST} digital library of mathematical functions.
\newblock \url{http://dlmf.nist.gov/}.

\bibitem{Andrec2005-fh}
Michael Andrec, Anthony~K Felts, Emilio Gallicchio, and Ronald~M Levy.
\newblock Protein folding pathways from replica exchange simulations and a
  kinetic network model.
\newblock {\em Proc. Natl. Acad. Sci. U. S. A.}, 102(19):6801--6806, May 2005.

\bibitem{Aristoff2016-gc}
D~Aristoff, J~Bello-Rivas, and R~Elber.
\newblock A mathematical framework for exact milestoning.
\newblock {\em Multiscale Model. Simul.}, 14(1):301--322, January 2016.

\bibitem{Baker1994-px}
D~Baker and D~A Agard.
\newblock Kinetics versus thermodynamics in protein folding.
\newblock {\em Biochemistry}, 33(24):7505--7509, June 1994.

\bibitem{baricz2010bounds}
{\'A}rp{\'a}d Baricz.
\newblock Bounds for modified bessel functions of the first and second kinds.
\newblock {\em Proceedings of the Edinburgh Mathematical Society},
  53(3):575--599, 2010.

\bibitem{Baum1986-we}
Eric~B Baum.
\newblock Intractable computations without local minima.
\newblock {\em Phys. Rev. Lett.}, 57(21):2764--2767, November 1986.

\bibitem{Bello-Rivas2015-ld}
Juan~M Bello-Rivas and Ron Elber.
\newblock Exact milestoning.
\newblock {\em J. Chem. Phys.}, 142(9):094102, March 2015.

\bibitem{bicout2000entropic}
DJ~Bicout and Attila Szabo.
\newblock Entropic barriers, transition states, funnels, and exponential
  protein folding kinetics: a simple model.
\newblock {\em Protein Science}, 9(3):452--465, 2000.

\bibitem{bingham1972random}
Nicholas~H Bingham.
\newblock Random walk on spheres.
\newblock {\em Zeitschrift f{\"u}r Wahrscheinlichkeitstheorie und Verwandte
  Gebiete}, 22(3):169--192, 1972.

\bibitem{Bolhuis2002-ws}
Peter~G Bolhuis, David Chandler, Christoph Dellago, and Phillip~L Geissler.
\newblock Transition path sampling: throwing ropes over rough mountain passes,
  in the dark.
\newblock {\em Annu. Rev. Phys. Chem.}, 53:291--318, 2002.

\bibitem{Chandler1978-bq}
David Chandler.
\newblock Statistical mechanics of isomerization dynamics in liquids and the
  transition state approximation.
\newblock {\em J. Chem. Phys.}, 68(6):2959--2970, March 1978.

\bibitem{chen1993reflecting}
Zhen-Qing Chen.
\newblock On reflecting diffusion processes and skorokhod decompositions.
\newblock {\em Probability theory and related fields}, 94(3):281--315, 1993.

\bibitem{chen2012symmetric}
Zhen-Qing Chen and Masatoshi Fukushima.
\newblock {\em Symmetric Markov processes, time change, and boundary theory
  (LMS-35)}, volume~35.
\newblock Princeton University Press, 2012.

\bibitem{Chodera2014-bh}
John~D Chodera and Frank No{\'e}.
\newblock Markov state models of biomolecular conformational dynamics.
\newblock {\em Curr. Opin. Struct. Biol.}, 25:135--144, April 2014.

\bibitem{Christen2008-ge}
Markus Christen and Wilfred~F van Gunsteren.
\newblock On searching in, sampling of, and dynamically moving through
  conformational space of biomolecular systems: A review.
\newblock {\em J. Comput. Chem.}, 29(2):157--166, January 2008.

\bibitem{Dellago1998-lb}
Christoph Dellago, Peter~G Bolhuis, F{\'e}lix~S Csajka, and David Chandler.
\newblock Transition path sampling and the calculation of rate constants.
\newblock {\em J. Chem. Phys.}, 108(5):1964--1977, February 1998.

\bibitem{dret2016partial}
H.L. Dret and B.~Lucquin.
\newblock {\em Partial Differential Equations: Modeling, Analysis and Numerical
  Approximation}.
\newblock International Series of Numerical Mathematics. Springer International
  Publishing, 2016.

\bibitem{Dror2012-ws}
Ron~O Dror, Robert~M Dirks, J~P Grossman, Huafeng Xu, and David~E Shaw.
\newblock Biomolecular simulation: A computational microscope for molecular
  biology.
\newblock {\em Annu. Rev. Biophys.}, 41(1):429--452, May 2012.

\bibitem{E2006-fm}
Weinan E. and Eric Vanden-Eijnden.
\newblock Towards a theory of transition paths.
\newblock {\em J. Stat. Phys.}, 123(3):503, May 2006.

\bibitem{E2010-sr}
Weinan E and Eric Vanden-Eijnden.
\newblock Transition-path theory and path-finding algorithms for the study of
  rare events.
\newblock {\em Annu. Rev. Phys. Chem.}, 61:391--420, 2010.

\bibitem{Eyring1935-ur}
Henry Eyring.
\newblock The activated complex in chemical reactions.
\newblock {\em J. Chem. Phys.}, 3(2):107--115, February 1935.

\bibitem{Goldberg1989-ko}
David~E Goldberg.
\newblock {\em Genetic Algorithms in Search, Optimization and Machine
  Learning}.
\newblock Addison-Wesley Longman Publishing Co., Inc., Boston, MA, USA, 1st
  edition, 1989.

\bibitem{Goldberg1999-mv}
J~M Goldberg and R~L Baldwin.
\newblock A specific transition state for s-peptide combining with folded
  s-protein and then refolding.
\newblock {\em Proc. Natl. Acad. Sci. U. S. A.}, 96(5):2019--2024, March 1999.

\bibitem{Hospital2015-ol}
Adam Hospital, Josep~Ramon Go{\~n}i, Modesto Orozco, and Josep~L Gelp{\'\i}.
\newblock Molecular dynamics simulations: advances and applications.
\newblock {\em Adv. Appl. Bioinform. Chem.}, 8:37--47, November 2015.

\bibitem{Huang2010-uu}
Xuhui Huang, Yuan Yao, Gregory~R Bowman, Jian Sun, Leonidas~J Guibas, Gunnar
  Carlsson, and Vijay~S Pande.
\newblock Constructing multi-resolution markov state models ({MSMs}) to
  elucidate {RNA} hairpin folding mechanisms.
\newblock {\em Pac. Symp. Biocomput.}, pages 228--239, 2010.

\bibitem{Husic2018-xp}
Brooke~E Husic and Vijay~S Pande.
\newblock Markov state models: From an art to a science.
\newblock {\em J. Am. Chem. Soc.}, 140(7):2386--2396, February 2018.

\bibitem{Jacob1999-bs}
M~Jacob, M~Geeves, G~Holtermann, and F~X Schmid.
\newblock Diffusional barrier crossing in a two-state protein folding reaction.
\newblock {\em Nat. Struct. Biol.}, 6(10):923--926, October 1999.

\bibitem{Kirkpatrick1983-su}
S~Kirkpatrick, C~D Gelatt, and M~P Vecchi.
\newblock Optimization by simulated annealing.
\newblock {\em Science}, 220(4598):671--680, May 1983.

\bibitem{Levinthal1968-ov}
Cyrus Levinthal.
\newblock Are there pathways for protein folding?
\newblock {\em Journal de Chimie Physique}, 65:44--45, 1968.

\bibitem{lieberman1986mixed}
Gary~M Lieberman.
\newblock Mixed boundary value problems for elliptic and parabolic differential
  equations of second order.
\newblock {\em Journal of Mathematical Analysis and Applications},
  113(2):422--440, 1986.

\bibitem{lions1984stochastic}
Pierre-Louis Lions and Alain-Sol Sznitman.
\newblock Stochastic differential equations with reflecting boundary
  conditions.
\newblock {\em Communications on Pure and Applied Mathematics}, 37(4):511--537,
  1984.

\bibitem{Loper2018}
Jackson Loper.
\newblock Uniform ergodicity for brownian motion in a bounded convex set.
\newblock {\em Journal of Theoretical Probability}, Jul 2018.

\bibitem{Machta2009-gh}
J~Machta.
\newblock Strengths and weaknesses of parallel tempering.
\newblock {\em Phys. Rev. E}, 80(5):056706, November 2009.

\bibitem{McCammon1977-kg}
J~Andrew McCammon, Bruce~R Gelin, and Martin Karplus.
\newblock Dynamics of folded proteins.
\newblock {\em Nature}, 267:585, June 1977.

\bibitem{McLeish2005-dq}
T~C~B McLeish.
\newblock Protein folding in {High-Dimensional} spaces: Hypergutters and the
  role of nonnative interactions.
\newblock {\em Biophys. J.}, 88(1):172--183, January 2005.

\bibitem{Naganathan2005-ki}
Athi~N Naganathan and Victor Mu{\~n}oz.
\newblock Scaling of folding times with protein size.
\newblock {\em J. Am. Chem. Soc.}, 127(2):480--481, January 2005.

\bibitem{Noe2006-cs}
Frank No{\'e}, Dieter Krachtus, Jeremy~C Smith, and Stefan Fischer.
\newblock Transition networks for the comprehensive characterization of complex
  conformational change in proteins.
\newblock {\em J. Chem. Theory Comput.}, 2(3):840--857, May 2006.

\bibitem{Pande2010-yi}
Vijay~S Pande, Kyle Beauchamp, and Gregory~R Bowman.
\newblock Everything you wanted to know about markov state models but were
  afraid to ask.
\newblock {\em Methods}, 52(1):99--105, September 2010.

\bibitem{Plaxco1998-iv}
Kevin~W Plaxco and David Baker.
\newblock Limited internal friction in the rate-limiting step of a two-state
  protein folding reaction.
\newblock {\em Proc. Natl. Acad. Sci. U. S. A.}, 95(23):13591--13596, November
  1998.

\bibitem{Porschke1977-xz}
D~P{\"o}rschke.
\newblock Elementary steps of base recognition and helix-coil transitions in
  nucleic acids.
\newblock {\em Mol. Biol. Biochem. Biophys.}, 24:191--218, 1977.

\bibitem{Scheraga2007-qw}
Harold~A Scheraga, Mey Khalili, and Adam Liwo.
\newblock Protein-folding dynamics: overview of molecular simulation
  techniques.
\newblock {\em Annu. Rev. Phys. Chem.}, 58:57--83, 2007.

\bibitem{Sugita1999-vh}
Yuji Sugita and Yuko Okamoto.
\newblock Replica-exchange molecular dynamics method for protein folding.
\newblock {\em Chem. Phys. Lett.}, 314:141--151, 1999.

\bibitem{Teschner1987-qs}
Wolfgang Teschner, Rainer Rudolph, and Jean~Renaud Garel.
\newblock Intermediates on the folding pathway of octopine dehydrogenase from
  pecten jacobaeus.
\newblock {\em Biochemistry}, 26(10):2791--2796, May 1987.

\bibitem{Van_Erp2005-vw}
Titus~S van Erp and Peter~G Bolhuis.
\newblock Elaborating transition interface sampling methods.
\newblock {\em J. Comput. Phys.}, 205(1):157--181, May 2005.

\bibitem{Wales2006-ur}
David~J Wales.
\newblock Energy landscapes: calculating pathways and rates.
\newblock {\em Int. Rev. Phys. Chem.}, 25(1-2):237--282, January 2006.

\bibitem{Warshel1976-qg}
A~Warshel and M~Levitt.
\newblock Theoretical studies of enzymic reactions: dielectric, electrostatic
  and steric stabilization of the carbonium ion in the reaction of lysozyme.
\newblock {\em J. Mol. Biol.}, 103(2):227--249, May 1976.

\bibitem{Wendel1980-sj}
J~G Wendel.
\newblock Hitting spheres with brownian motion.
\newblock {\em Ann. Probab.}, 8(1):164--169, 1980.

\bibitem{West2007-cn}
Anthony M~A West, Ron Elber, and David Shalloway.
\newblock Extending molecular dynamics time scales with milestoning: Example of
  complex kinetics in a solvated peptide.
\newblock {\em J. Chem. Phys.}, 126(14):145104, April 2007.

\bibitem{Wigner1997-kk}
E~P Wigner.
\newblock The transition state method.
\newblock In Arthur~S Wightman, editor, {\em Part I: Physical Chemistry. Part
  {II}: Solid State Physics}, pages 163--175. Springer Berlin Heidelberg,
  Berlin, Heidelberg, 1997.

\bibitem{Wille1987-tf}
L~T Wille.
\newblock Intractable computations without local minima.
\newblock {\em Phys. Rev. Lett.}, 59(3):372, July 1987.

\bibitem{Yang2007-gn}
Sichun Yang, Jos{\'e}~N Onuchic, Angel~E Garc{\'\i}a, and Herbert Levine.
\newblock Folding time predictions from all-atom replica exchange simulations.
\newblock {\em J. Mol. Biol.}, 372(3):756--763, September 2007.

\bibitem{Zemora2010-lb}
Georgeta Zemora and Christina Waldsich.
\newblock {RNA} folding in living cells.
\newblock {\em RNA Biol.}, 7(6):634--641, November 2010.

\bibitem{Zhao2010-zipping}
P~Zhao, WB~Zhang, and SJ~Chen.
\newblock Predicting secondary structural folding kinetics for nucleic acids.
\newblock {\em Biophys J.}, 98(8):1617--1625, 2010.

\bibitem{Zheng2009-ow}
Weihua Zheng, Michael Andrec, Emilio Gallicchio, and Ronald~M Levy.
\newblock Recovering kinetics from a simplified protein folding model using
  replica exchange simulations: A kinetic network and effective stochastic
  dynamics.
\newblock {\em J. Phys. Chem. B}, 113(34):11702--11709, August 2009.

\end{thebibliography}

\end{document}